\numberwithin{equation}{section}
\def\ddefloop#1{\ifx\ddefloop#1\else\ddef{#1}\expandafter\ddefloop\fi}
\def\ddef#1{\expandafter\def\csname b#1\endcsname{\ensuremath{\mathbb{#1}}}}
\def\ddef#1{\expandafter\def\csname c#1\endcsname{\ensuremath{\mathcal{#1}}}}
\def\ddef#1{\expandafter\def\csname t#1\endcsname{\ensuremath{\tilde{#1}}}}
\def\ddef#1{\expandafter\def\csname v#1\endcsname{\ensuremath{#1}}}
\def\ddef#1{\expandafter\def\csname v#1\endcsname{\ensuremath{{\csname #1\endcsname}}}}
\def\ddef#1{\expandafter\def\csname bb#1\endcsname{\ensuremath{\bm{#1}}}}
\newtheorem{theorem}{Theorem}
\newtheorem{lemma}{Lemma}
\newtheorem{definition}{Definition}[section]
\newtheorem{corollary}{Corollary}
\newtheorem{problem}{Problem}
\newcommand{\gve}{\varepsilon}
\newcommand{\poly}{\mathrm{poly}}
\newcommand{\polyl}{\mathrm{polylog}}
\title{Exponential Quantum Communication Advantage in Distributed Inference and Learning}
\author{Dar Gilboa$^1$\thanks{darg@google.com} $^{\,\dagger}$, Hagay Michaeli$^2$\thanks{Equal contribution. }, Daniel Soudry$^2$, and Jarrod R. McClean$^1$}
\date{%
\textit{\normalsize{$^1$Google Quantum AI, Venice, CA, United States}},
     \textit{\normalsize{$^2$Technion, Haifa, Israel}}
}
\begin{document}

\maketitle

\begin{abstract}
Training and inference with large machine learning models that far exceed the memory capacity of individual devices necessitates the design of distributed architectures, forcing one to contend with communication constraints.  We present a framework for distributed computation over a quantum network in which data is encoded into specialized quantum states. We prove that for models within this framework, inference and training using gradient descent can be performed with exponentially less communication compared to their classical analogs, and with relatively modest overhead relative to standard gradient-based methods. We show that certain graph neural networks are particularly amenable to implementation within this framework, and moreover present empirical evidence that they perform well on standard benchmarks.
To our knowledge, this is the first example of exponential quantum advantage for a generic class of machine learning problems that hold regardless of the data encoding cost. 
Moreover, we show that models in this class can encode highly nonlinear features of their inputs, and their expressivity increases exponentially with model depth.
We also delineate the space of models for which exponential communication advantages hold by showing that they cannot hold for linear classification. 
Our results can be combined with natural privacy advantages in the communicated quantum states that limit the amount of information that can be extracted from them about the data and model parameters. Taken as a whole, these findings form a promising foundation for distributed machine learning over quantum networks.
\end{abstract}

\section{Introduction}
	
	As the scale of the datasets and parameterized models used to perform computation over data continues to grow \cite{Kaplan2020-aw, Hoffmann2022-um}, distributing workloads across multiple devices becomes essential for enabling progress. The choice of architecture for large-scale training and inference must not only make the best use of computational and memory resources, but also contend with the fact that communication may become a bottleneck \cite{Pope2022-qr}. This is particularly pertinent as models grow so large that they cannot rely on high-bandwidth interconnects within datacenters \cite{Barroso2013-gn}, but are instead trained across multiple datacenters \cite{Team2023-qj}.
 When using modern optical interconnects, classical computers exchange bits represented by light. This however does not fully utilize the potential of the physical substrate; given suitable computational capabilities and algorithms, the \textit{quantum} nature of light can be harnessed as a powerful communication resource.
	Here we show that for a broad class of parameterized models, if quantum bits (\textit{qubits}) are communicated instead of classical bits, an exponential reduction in the communication required to perform inference and gradient-based training can be achieved. This protocol additionally guarantees improved privacy of both the user data and model parameters through natural features of quantum mechanics, without the need for additional cryptographic or privacy protocols. To our knowledge, this is the first example of generic, exponential quantum advantage on problems that occur naturally in the training and deployment of large machine learning models. These types of communication advantages help scope the future roles and interplay between quantum and classical communication for distributed machine learning. 
	
	Quantum computers promise dramatic speedups across a number of computational tasks, with perhaps the most prominent example being the ability revolutionize our understanding of nature by enabling the simulation of quantum systems, owing to the natural similarity between quantum computers and the world \cite{Feynman1982-as, Lloyd1996-vs}. However, much of the data that one would like to compute with in practice seems to come from an emergent classical world rather than directly exhibiting quantum properties. While there are some well-known examples of exponential quantum speedups for classical problems, most famously factoring \cite{Shor1994-hb} and related hidden subgroup problems \cite{Childs2008-lc}, these tend to be isolated and at times difficult to relate to practical applications that involve learning from data. In addition, even though significant speedups are known for certain ubiquitous problems in machine learning such as matrix inversion \cite{Harrow2009-ml} and principal component analysis \cite{Lloyd2014-fb}, the advantage is often lost when including the cost of loading classical data into the quantum computer or of reading out the result into classical memory. This is because the complexity of loading dense classical data into the amplitudes of a quantum state (which is typically the encoding needed to obtain an exponential runtime advantage) and of reading out the amplitudes from a quantum state into classical memory, are both polynomial in the number of amplitudes \cite{Aaronson2015-wa}. In applications where an efficient data access model avoids the above pitfalls, the complexity of quantum algorithms tends to depend on condition numbers of matrices which scale with system size in a way that reduces or even eliminates any quantum advantage \cite{Montanaro2015-ok}. 
	It is worth noting that much of the discussion about the impact of quantum technology on machine learning has focused on computational advantage. However quantum resources are not only useful in reducing computational complexity --- they can also provide an advantage in communication complexity, enabling exponential reductions in communication for some problems \cite{Raz1999-ic, Bar-Yossef2008-uf}. Inspired by these results, we study a setting where quantum advantage in communication is possible across a wide class of machine learning models. This advantage holds without requiring any sparsity assumptions or elaborate data access models such as QRAM \cite{Giovannetti2008-oj}. 
	
	We focus on compositional distributed learning, known as \textit{pipelining} \cite{Huang2018-sj,Barham2022-pv}. While there are a number of strategies for distributing machine learning workloads that are influenced by the requirements of different applications and hardware constraints \cite{Xu2021-rs, Jouppi2023-bl}, splitting up a computational graph in a compositional fashion (\Cref{fig:dist_comp}) is a common approach. We describe distributed, parameterized quantum circuits that can be used to perform inference over data when distributed in this way, and can be trained using gradient methods. The ideas we present can also be used to optimize models that use certain forms of data parallelism (\Cref{app:data_parallelism}). In principle, such circuits could be implemented on quantum computers that are able to communicate quantum states. 
	
	We show the following:
	
	\begin{itemize}
		\item Even for simple distributed quantum circuits, there is an exponential quantum advantage in communication for the problem of estimating the loss and the gradients of the loss with respect to the parameters (\Cref{sec:dist_learning}). This additionally implies a privacy advantage from Holevo's bound (\Cref{sec:privacy}). We also show that this is advantage is not a trivial consequence of the data encoding used, since it does not hold for certain problems like linear classification (\Cref{sec:linear_classification}). 
        \item We study a class of models that can efficiently approximate certain graph neural networks (\Cref{sec:GNN}), and show that they both maintain the exponential communication advantage and achieve performance comparable to standard classical models on common node and graph classification benchmarks (\Cref{sec:exp_results}).
		\item For certain distributed circuits, there is an exponential advantage in communication for the entire training process, and not just for a single round of gradient estimation.  This includes circuits for fine-tuning using pre-trained features. The proof is based on convergence rates for stochastic gradient descent under convexity assumptions (\Cref{sec:conv_rates}).
		\item The ability to interleave multiple unitaries encoding nonlinear features of data enables expressivity to grow exponentially with depth, and universal function approximation in some settings. This implies that these models are highly expressive in contrast to popular belief about linear restrictions in quantum neural networks (\Cref{sec:expressivity_all}). 
	\end{itemize}
	
\section{Preliminaries}
	
	
	\subsection{Large-scale learning problems and distributed computation}
	
	\begin{figure}[ht]
		\centering
		\begin{subfigure}[t]{0.40\textwidth}
			\centering
                \hspace*{-0.5cm}
			\includegraphics[height=1.52in]{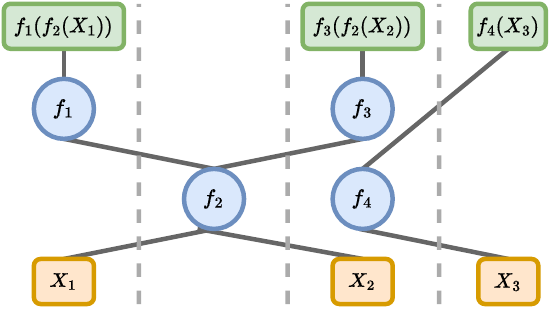}
			\label{fig:dist_comp_inference}
		\end{subfigure}
		\hfill
		\begin{subfigure}[t]{0.5\textwidth}
			\centering
			\includegraphics[height=1.62in]{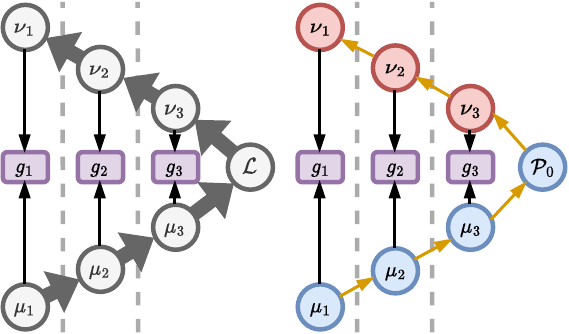}
			\label{fig:dist_comp_grad}
		\end{subfigure} 
		\caption{\textit{Left:} Distributed, compositional computation. Dashed lines separate devices with computational and storage resources. The circular nodes represent parameterized functions that are allocated distinct hardware resources and are spatially separated, while the square nodes represent data (yellow) and outputs corresponding to different tasks (green). The vertical axis represents time. This framework of hardware allocation enables flexible modification of the model structure in a task-dependent fashion. 
			\textit{Right:} Computation of gradient estimators $g_\ell$ at different layers of a model distributed across multiple devices by pipelining. Computing forward features $\mu_\ell$ and backwards features $\nu_\ell$ (also known as computing a forward or backward pass) requires a large amount of classical communication (grey) but an exponentially smaller amount of quantum communication (yellow). $\cL$ is the classical loss function, and $\cP_0$ an operator whose expectation value with respect to a quantum model gives the analogous loss function in the quantum case.}
		\label{fig:dist_comp}
	\end{figure}

	
	Pipelining is a commonly used method of distributing a machine learning workload, in which different layers of a deep model are allocated distinct hardware resources \cite{Huang2018-sj, Narayanan2019-zi}. Training and inference then require communication of features between nodes. Pipelining enables flexible changes to the model architecture in a task-dependent manner, since subsets of a large model can be combined in an adaptive fashion to solve many downstream tasks. Additionally, pipelining allows sparse activation of a subset of a model required to solve a task, and facilitates better use of heterogeneous compute resources since it does not require storing identical copies of a large model. The potential for large models to be easily fine-tuned to solve multiple tasks is well-known \cite{Brown2020-bq, Bommasani2021-jv}, and pipelined architectures which facilitate this are the norm in the latest generation of large language models \cite{Rasley2020-lb, Barham2022-pv}. Data parallelism, in contrast, involves storing multiple copies of the model on different nodes, training each on a subsets of the data and exchanging information to synchronize parameter updates. 
	In practice, different parallelization strategies are combined in order to exploit trade-offs between latency and throughput in a task-dependent fashion \cite{Xu2021-rs, Jouppi2023-bl, Pope2022-qr}. Distributed quantum models were considered recently in \cite{pira2023invitation}, but the potential for quantum advantage in communication in these settings was not discussed.

	\subsection{Communication complexity}
	
	Communication complexity \cite{Yao1979-ci, Kushilevitz2011-mk, Rao2020-gm} is the study of distributed computational problems using a cost model that focuses on the communication required between players rather than the time or computational complexity.  The key object of study in this area is the tree induced by a communication protocol whose nodes enumerate all possible communication histories and whose leaves correspond to the outputs of the protocol. The product structure induced on the leaves of this tree as a function of the inputs allows one to bound the depth of the tree from below, which gives an unconditional lower bound on the communication complexity. The power of replacing classical bits of communication with qubits has been the subject of extensive study \cite{Chi-Chih_Yao1993-kh,Brassard2001-wf, Buhrman2009-uv}. 
	For certain problems such as Hidden Matching \cite{Bar-Yossef2008-uf} and a variant of classification with deep linear models \cite{Raz1999-ic} an exponential quantum communication advantage holds, while for other canonical problems such as Disjointness only a polynomial advantage is possible \cite{Razborov2002-yu}. Exponential advantage was also recently shown for the problem of sampling from a distribution defined by the solution to a linear regression problem \cite{montanaro2022quantum}. While there are many models of both quantum and classical communication, our results apply to \textit{randomized} classical communication complexity, wherein the players are allowed to exchange random bits independent of their problem inputs, and are allowed to output an incorrect answer with some probability (bounded away from $1/2$ for a problem with binary output). It is also worth noting that communication advantages of the type we demonstrate can be naturally related to space advantages in streaming algorithms that may be of interest even in settings that do not involve distributed training \cite{Roughgarden2015-mt}. 
	
	At a glance, the development of networked quantum computers may seem much more challenging than the already herculean task of building a fault tolerant quantum computer.  However, for some quantum network architectures, the existence of a long-lasting fault tolerant quantum memory as a quantum repeater, may be the enabling component that lifts low rate shared entanglement to a fully functional quantum network~\cite{Munro2015-ni}, and hence the timelines for small fault tolerant quantum computers and quantum networks may be more coincident than it might seem at first.  As such, it is well motivated to consider potential communication advantages alongside computational advantages when talking about the applications of fault tolerant quantum computers.  In \Cref{sec:realizing_qcomm} we briefly survey approaches to implementing quantum communication in practice, and the associated challenges.  
	
	In addition, while we largely restrict ourselves here to discussions of communication advantages, and most other studies focus on purely computational advantages, there may be interesting advantages at their intersection.  For example, it is known that no quantum state built from a simple (or polynomial complexity) circuit can confer an exponential communication advantage, however states made from simple circuits can be made computationally difficult to distinguish~\cite{Ji2018-cc}.  Hence the use of quantum pre-computation~\cite{Huggins2023-wl} and communication may confer advantages even when traditional computational and communication cost models do not admit such advantages due to their restriction in scope.
	
\section{Distributed learning with quantum resources} \label{sec:dist_learning}
	
	In this work we focus on parameterized models that are representative of the most common models used and studied today in quantum machine learning, sometimes referred to as quantum neural networks~\cite{McClean2015-ph,Farhi2018-xx,Cerezo2020-ua,Schuld2020-de}. We will use the standard Dirac notation of quantum mechanics throughout. A summary of relevant notation and the fundamentals of quantum mechanics is provided in \Cref{app:notation}. 
	We define a class models with parameters $\Theta$, taking an input $x$ which is a tensor of size $N$. The models take the following general form:
	\begin{definition} \label{def:model}
		$\{ A_\ell (\theta^A_\ell, x)\}, \{ B_{\ell}(\theta^B_\ell, x) \}$ for $\ell \in \{1, \dots, L\}$ are each a set of unitary matrices of size $N' \times N'$ for some $N'$ such that $\log N' = O(\log N)$ \footnote{We will consider some cases where $N'=N$, but will find it helpful at times to encode nonlinear features of $x$ in these unitaries, in which case we may have $N' > N$.}. The $\theta_{\ell}^{A},\theta_{\ell}^{B}$ are vectors of $P$ parameters each. 
		For every $\ell,i$, we assume that $\frac{\partial A_{\ell}}{\partial\theta_{\ell i}^{A}}$ is anti-hermitian and has two eigenvalues, and similarly for $B_\ell$ \footnote{The condition on the derivatives is in fact satisfied by many of the most common quantum neural network architectures \cite{Cerezo2020-ua, Crooks2019-mz, Schuld2020-de}. It is satisfied for example if $A_{\ell}=\prod_{j=1}^P e^{i\alpha_{\ell j}^{A}\theta_{\ell j}^{A}\mathcal{P}_{\ell j}^{A}}$
	and the $\mathcal{P}_{\ell j}^{A}$ are Pauli matrices, while $\alpha^A_{\ell j}$ are scalars. Such models are naturally amenable to implementation on quantum devices, and for $P=\tilde{O}(N^2)$ any unitary over $\log N'$ qubits can be written in this form \cite{Nielsen2010-jj}.}. 
		
		The model we consider is defined by 
		\begin{equation} \label{eq:deep_model}
			\left|\varphi (\Theta, x)\right\rangle \equiv \left(\underset{\ell=L}{\overset{1}{\prod}}A_{\ell}(\theta^A_\ell, x)B_{\ell}(\theta^B_\ell, x)\right)\left|\psi(x)\right\rangle ,
		\end{equation}
		where $\psi(x)$ is a fixed state of $\log N'$ qubits. 
		
		The loss function is given by 
		\begin{equation}
			\cL(\Theta,x) \equiv \left\langle \varphi(\Theta, x) \right|\cP_0 \left|\varphi(\Theta, x)\right\rangle,
		\end{equation}
		where $\cP_0$ is a Pauli matrix that acts on the first qubit.
	\end{definition}
	
	In standard linear algebra notation, the output of the model is a unit norm $N'$-dimensional complex vector $\varphi_{L}$, defined recursively by 
	\begin{equation}
		\varphi_{0}=\psi(x),\quad\varphi_{\ell}=A_{\ell}(\theta_{\ell}^{A},x)B_{\ell}(\theta_{\ell}^{B},x)\varphi_{\ell-1},
	\end{equation}
	where the entries of $\varphi_{L}$ are represented by the amplitudes of a quantum state. The loss takes the form $\mathcal{L}(\Theta,x)=\left(\varphi_{L}^{\ast}\right)^{T}\mathcal{P}_{0}\varphi_{L}$ where $^*$ indicates the entrywise complex conjugate, and this definition includes the standard $L^2$ loss as a special case.
	
	Subsequently we omit the dependence on $x$ and $\Theta$ (or subsets of it) to lighten notation, and consider special cases where only subsets of the unitaries depend on $x$, or where the unitaries take a particular form and may not be parameterized. Denote by $\nabla_{A(B)} \cL$ the entries of the gradient vector that correspond to the parameters of $\{A_\ell\} (\{B_\ell\})$.  
	
	In the special case where $x$ in a unit norm $N$-dimensional vector, a simple choice of $\ket{\psi(x)}$ is the amplitude encoding of $x$, given by 
	\begin{equation} \label{eq:amplitude_encoding}
		\left|\psi(x)\right\rangle = \ket{x} = \underset{i=0}{\overset{N-1}{\sum}}x_{i}\left|i\right\rangle.
	\end{equation}
	However, despite its exponential compactness in representing the data, a naive implementation of the simplest choice is restricted to representing quadratic features of the data that can offer no substantial quantum advantage in a learning task~\cite{Huang2021-rg}, so the choice of data encoding is critical to the power of a model.  The interesting parameter regime for classical data and models is one where $N, P$ are large, while $L$ is relatively modest. For general unitaries $P = O(N^2)$, which matches the scaling of the number of parameters in fully-connected networks. When the input tensor $x$ is a batch of datapoints, $N$ is equivalent to the product of batch size and input dimension.

	The model in \Cref{def:model} can be used to define distributed inference and learning problems by dividing the input $x$ and the parameterized unitaries between two players, Alice and Bob. We define their respective inputs as follows:
	\begin{equation} \label{eq:distributed_model}
		\begin{aligned} & \mathrm{Alice}: & &  \ket{\psi(x)}, \{ A_\ell \} ,\\
			& \mathrm{Bob}: & & \{ B_{\ell} \}.
		\end{aligned}
	\end{equation}
	
	The problems of interest require that Alice and Bob compute certain joint functions of their inputs. As a trivial base case, it is clear that in a communication cost model, all problems can be solved with communication cost at most the size of the inputs times the number of parties, by a protocol in which each party sends its inputs to all others.  We will be interested in cases where one can do much better by taking advantage of quantum communication.
	
	\begin{figure}[h]
		\centering
  \scriptsize
  \hspace*{-0.6cm}
\providecommand{\K}[1]{\left|#1\right\rangle}
\begin{tikzpicture}[scale=0.6500000,x=1pt,y=1pt]
\filldraw[color=white] (0.000000, -7.500000) rectangle (340.000000, 97.500000);
\draw[color=black,very thick,rounded corners=4.000000pt] (0.000000,90.000000) -- (6.000000,90.000000) -- (13.500000,60.000000);
\draw[color=black,very thick,rounded corners=4.000000pt] (13.500000,60.000000) -- (21.000000,30.000000) -- (85.000000,30.000000) -- (92.500000,60.000000);
\draw[color=black,very thick,rounded corners=4.000000pt] (92.500000,60.000000) -- (100.000000,90.000000) -- (164.000000,90.000000) -- (171.500000,60.000000);
\draw[color=black,very thick,rounded corners=4.000000pt] (171.500000,60.000000) -- (179.000000,30.000000) -- (243.000000,30.000000) -- (250.500000,60.000000);
\draw[color=black,very thick,rounded corners=4.000000pt] (250.500000,60.000000) -- (258.000000,90.000000) -- (328.000000,90.000000);
\draw[color=black,very thick] (328.000000,89.500000) -- (340.000000,89.500000);
\draw[color=black,very thick] (328.000000,90.500000) -- (340.000000,90.500000);
\draw[color=black,very thick,rounded corners=4.000000pt] (0.000000,75.000000) -- (6.000000,75.000000) -- (13.500000,45.000000);
\draw[color=black,very thick,rounded corners=4.000000pt] (13.500000,45.000000) -- (21.000000,15.000000) -- (85.000000,15.000000) -- (92.500000,45.000000);
\draw[color=black,very thick,rounded corners=4.000000pt] (92.500000,45.000000) -- (100.000000,75.000000) -- (164.000000,75.000000) -- (171.500000,45.000000);
\draw[color=black,very thick,rounded corners=4.000000pt] (171.500000,45.000000) -- (179.000000,15.000000) -- (243.000000,15.000000) -- (250.500000,45.000000);
\draw[color=black,very thick,rounded corners=4.000000pt] (250.500000,45.000000) -- (258.000000,75.000000) -- (340.000000,75.000000);
\draw[color=black,very thick,rounded corners=4.000000pt] (0.000000,60.000000) -- (6.000000,60.000000) -- (13.500000,30.000000);
\draw[color=black,very thick,rounded corners=4.000000pt] (13.500000,30.000000) -- (21.000000,0.000000) -- (85.000000,0.000000) -- (92.500000,30.000000);
\draw[color=black,very thick,rounded corners=4.000000pt] (92.500000,30.000000) -- (100.000000,60.000000) -- (164.000000,60.000000) -- (171.500000,30.000000);
\draw[color=black,very thick,rounded corners=4.000000pt] (171.500000,30.000000) -- (179.000000,0.000000) -- (243.000000,0.000000) -- (250.500000,30.000000);
\draw[color=black,very thick,rounded corners=4.000000pt] (250.500000,30.000000) -- (258.000000,60.000000) -- (340.000000,60.000000);
\filldraw[color=white,fill=white] (0.000000,56.250000) rectangle (-4.000000,93.750000);
\draw[decorate,decoration={brace,amplitude = 4.000000pt},very thick] (0.000000,56.250000) -- (0.000000,93.750000);
\draw[color=black] (-4.000000,75.000000) node[left] {$\begin{array}{c}
\mathrm{Alice:}\\
\K{\psi(x)} \\
{\color{white}.}
\end{array}$};
\draw[color=black,dashed,very thick] (0.000000,45.000000) -- (6.000000,45.000000);
\draw[color=black,dashed,very thick] (6.000000,45.000000) -- (13.500000,45.000000);
\draw[color=black,dashed,very thick] (13.500000,45.000000) -- (21.000000,45.000000);
\draw[color=black,dashed,very thick] (21.000000,45.000000) -- (85.000000,45.000000);
\draw[color=black,dashed,very thick] (85.000000,45.000000) -- (92.500000,45.000000);
\draw[color=black,dashed,very thick] (92.500000,45.000000) -- (100.000000,45.000000);
\draw[color=black,dashed,very thick] (100.000000,45.000000) -- (164.000000,45.000000);
\draw[color=black,dashed,very thick] (164.000000,45.000000) -- (171.500000,45.000000);
\draw[color=black,dashed,very thick] (171.500000,45.000000) -- (179.000000,45.000000);
\draw[color=black,dashed,very thick] (179.000000,45.000000) -- (243.000000,45.000000);
\draw[color=black,dashed,very thick] (243.000000,45.000000) -- (250.500000,45.000000);
\draw[color=black,dashed,very thick] (250.500000,45.000000) -- (258.000000,45.000000);
\draw[color=black,dashed,very thick] (258.000000,45.000000) -- (340.000000,45.000000);
\draw[color=black] (0.000000,15.000000) node[left] {$\mathrm{Bob:} \quad$};
\draw[very thick] (53.000000,30.000000) -- (53.000000,0.000000);
\begin{scope}[very thick]
\begin{scope}
\draw[fill=white] (53.000000, 15.000000) +(-45.000000:28.284271pt and 29.698485pt) -- +(45.000000:28.284271pt and 29.698485pt) -- +(135.000000:28.284271pt and 29.698485pt) -- +(225.000000:28.284271pt and 29.698485pt) -- cycle;
\clip (53.000000, 15.000000) +(-45.000000:28.284271pt and 29.698485pt) -- +(45.000000:28.284271pt and 29.698485pt) -- +(135.000000:28.284271pt and 29.698485pt) -- +(225.000000:28.284271pt and 29.698485pt) -- cycle;
\draw (53.000000, 15.000000) node { $B_1(\theta_1^{B})$};
\end{scope}
\end{scope}
\draw[very thick] (132.000000,90.000000) -- (132.000000,60.000000);
\begin{scope}[very thick]
\begin{scope}
\draw[fill=white] (132.000000, 75.000000) +(-45.000000:28.284271pt and 29.698485pt) -- +(45.000000:28.284271pt and 29.698485pt) -- +(135.000000:28.284271pt and 29.698485pt) -- +(225.000000:28.284271pt and 29.698485pt) -- cycle;
\clip (132.000000, 75.000000) +(-45.000000:28.284271pt and 29.698485pt) -- +(45.000000:28.284271pt and 29.698485pt) -- +(135.000000:28.284271pt and 29.698485pt) -- +(225.000000:28.284271pt and 29.698485pt) -- cycle;
\draw (132.000000, 75.000000) node {$A_1(\theta_1^{A})$};
\end{scope}
\end{scope}
\draw[very thick] (211.000000,30.000000) -- (211.000000,0.000000);
\begin{scope}[very thick]
\begin{scope}
\draw[fill=white] (211.000000, 15.000000) +(-45.000000:28.284271pt and 29.698485pt) -- +(45.000000:28.284271pt and 29.698485pt) -- +(135.000000:28.284271pt and 29.698485pt) -- +(225.000000:28.284271pt and 29.698485pt) -- cycle;
\clip (211.000000, 15.000000) +(-45.000000:28.284271pt and 29.698485pt) -- +(45.000000:28.284271pt and 29.698485pt) -- +(135.000000:28.284271pt and 29.698485pt) -- +(225.000000:28.284271pt and 29.698485pt) -- cycle;
\draw (211.000000, 15.000000) node {$B_2(\theta_2^{B})$};
\end{scope}
\end{scope}
\draw[very thick] (290.000000,90.000000) -- (290.000000,60.000000);
\begin{scope}[very thick]
\begin{scope}
\draw[fill=white] (290.000000, 75.000000) +(-45.000000:28.284271pt and 29.698485pt) -- +(45.000000:28.284271pt and 29.698485pt) -- +(135.000000:28.284271pt and 29.698485pt) -- +(225.000000:28.284271pt and 29.698485pt) -- cycle;
\clip (290.000000, 75.000000) +(-45.000000:28.284271pt and 29.698485pt) -- +(45.000000:28.284271pt and 29.698485pt) -- +(135.000000:28.284271pt and 29.698485pt) -- +(225.000000:28.284271pt and 29.698485pt) -- cycle;
\draw (290.000000, 75.000000) node {$A_2(\theta_2^{A})$};
\end{scope}
\end{scope}
\draw[fill=white] (322.000000, 84.000000) rectangle (334.000000, 96.000000);
\draw[very thin] (328.000000, 90.600000) arc (90:150:6.000000pt);
\draw[very thin] (328.000000, 90.600000) arc (90:30:6.000000pt);
\draw[->,>=stealth] (328.000000, 84.600000) -- +(80:10.392305pt);
\end{tikzpicture}
  \normalsize
		\caption{Distributed quantum circuit implementing $\cL$ for $L=2$. Both $\cL$ and its gradients with respect to the parameters of the unitaries can be estimated with total communication that is polylogarithmic in the size of the input data $N$ and the number of trainable parameters per unitary $P$.}
		\label{fig:dist_circuit}
	\end{figure}

	Given the inputs \cref{eq:distributed_model}, we will be interested chiefly in the two problems specified below.
	
	\begin{problem}[Distributed Inference] \label{prob:dist_inference}
		Alice and Bob each compute an estimate of $\left\langle \varphi \right|\cP_0\left|\varphi\right\rangle$ up to additive error $\gve$. 
	\end{problem}
	The straightforward algorithm for this problem, illustrated in \cref{fig:dist_circuit}, requires $L$ rounds of communication. The other problem we consider is the following:
	
	\begin{problem}[Distributed Gradient Estimation] \label{prob:dist_grad_estimation}
		Alice computes an estimate of $\nabla_{A} \left\langle \varphi \right|\cP_0\left|\varphi\right\rangle$, while Bob computes an estimate of $\nabla_{B} \left\langle \varphi \right|Z_0\left|\varphi\right\rangle$, up to additive error $\gve$ in $L^\infty$.
	\end{problem}

	\subsection{Communication complexity of inference and gradient estimation}
	
	We show that inference and gradient estimation are achievable with a logarithmic amount of quantum communication, which will represent an exponential improvement over the classical cost for some cases:
	
	\begin{lemma} \label{lem:inference_ub}
		\Cref{prob:dist_inference} can be solved by communicating $O(\log N)$ qubits over $O(L/\gve^2)$ rounds. 
	\end{lemma}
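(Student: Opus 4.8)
The plan is to exhibit an explicit protocol and verify that it meets the stated bounds, so this is purely an achievability (upper-bound) argument. The structural observation driving everything is that although $\left|\varphi\right\rangle$ is an $N'$-dimensional vector, it is a state on only $\log N' = O(\log N)$ qubits, so the \emph{entire} register can be physically transmitted across the channel at a cost of $O(\log N)$ qubits per transmission. First I would spell out the forward pass. Alice initializes the register in $\left|\psi(x)\right\rangle$, and, reading off the factorization in \cref{eq:deep_model}, the unitaries act on $\left|\psi(x)\right\rangle$ in the order $B_1, A_1, B_2, A_2, \dots, B_L, A_L$ (the rightmost factor $B_1$ acting first). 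Since Bob holds $\{B_\ell\}$ and Alice holds $\{A_\ell\}$ per \cref{eq:distributed_model}, each time the active factor changes owner the $\log N'$-qubit register is sent over the channel; this produces $2L$ transmissions, after the last of which Alice holds exactly $\left|\varphi\right\rangle$.

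Second, I would check correctness and tally the cost of one pass. Because both parties apply their unitaries to the same traveling register, associativity of the matrix product guarantees that the state after the $2L$ applications is precisely $\left(\prod_{\ell=L}^{1} A_\ell B_\ell\right)\left|\psi(x)\right\rangle = \left|\varphi\right\rangle$, with no approximation in the coherent evolution. Each of the $2L = O(L)$ transmissions carries $\log N' = O(\log N)$ qubits, so a single forward pass uses $O(L)$ rounds of $O(\log N)$-qubit communication. This already realizes the per-round communication claim; the only remaining issue is converting one copy of $\left|\varphi\right\rangle$ into an additive-error estimate of $\cL = \left\langle\varphi\right|\cP_0\left|\varphi\right\rangle$.

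Third, I would handle the statistical estimation. As $\cP_0$ is a nontrivial Pauli on the first qubit, it is Hermitian with eigenvalues $\pm 1$, so measuring the first qubit of $\left|\varphi\right\rangle$ in the eigenbasis of $\cP_0$ yields a $\{\pm 1\}$-valued outcome of expectation exactly $\cL$. Since measurement collapses the state, each sample requires its own fresh forward pass; this is the one point where one cannot amortize, and it is precisely what promotes the round count from $O(L)$ to $O(L/\gve^2)$. By Hoeffding's inequality, averaging $O(1/\gve^2)$ independent outcomes gives an estimate within additive error $\gve$ with constant success probability (and $O(\log(1/\delta)/\gve^2)$ samples for confidence $1-\delta$). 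The total is thus $O(1/\gve^2)$ passes of $O(L)$ rounds each, i.e.\ $O(L/\gve^2)$ rounds, each transmitting $O(\log N)$ qubits. To meet the requirement that \emph{both} parties hold an estimate, the player performing the measurements (Alice, who holds the register after $A_L$) sends the empirical mean to the other; this classical postprocessing is dominated by the quantum cost.

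I do not anticipate a serious obstacle, since the argument is essentially bookkeeping once the compactness observation is in place. The only subtlety worth stating carefully is the distinction between the exponentially large classical description of $\varphi_L$ and its quantum encoding on $O(\log N)$ qubits: the protocol never materializes the former, and it is exactly this that keeps the per-round cost logarithmic rather than linear in $N$. The comparison against the classical communication cost, which is what elevates this to an \emph{exponential} advantage, is left to a separate matching lower bound.
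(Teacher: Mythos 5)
Your proposal is correct and matches the paper's proof: Alice prepares $\left|\psi(x)\right\rangle$, the $\log N'$-qubit register is passed back and forth with each player applying their unitaries to produce $\left|\varphi\right\rangle$ in $O(L)$ rounds, and $O(1/\gve^2)$ such preparations with single-qubit Pauli measurements give the estimate. Your version simply spells out the Hoeffding step and the final classical broadcast more explicitly than the paper does.
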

	
	Proof: \Cref{pf:inference_ub}. 
	
	\begin{lemma} \label{lem:dist_gradients_ub}
		
		\Cref{prob:dist_grad_estimation} can be solved with probability greater than $1 - \delta$ by communicating $\tilde{O}(\log N (\log P)^2 \log(L/\delta)/ \gve ^4)$ qubits over $O(L^2)$ rounds. The time and space complexity of the algorithm is $\sqrt{P}\;L\;\poly(N, \log P, \gve^{-1}, \log(1/\delta))$. 
	\end{lemma}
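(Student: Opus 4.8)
The plan is to reduce \Cref{prob:dist_grad_estimation} to a collection of expectation-value estimates of the kind handled by \Cref{lem:inference_ub}, and then to extract all $P$ partial derivatives of a single unitary \emph{simultaneously} with a quantum gradient subroutine that runs locally. The guiding principle is that the only objects whose computation requires communication do not depend on the coordinate being differentiated, so their preparation can be amortized; this is what lets the communication depend only polylogarithmically on $P$ while the $\sqrt{P}$ cost is confined to local time and space.

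First I would rewrite the loss as a function of a single layer's parameters. Fixing $\ell$ and letting $U_{>\ell}=A_L B_L\cdots A_{\ell+1}B_{\ell+1}$, define the forward feature $|\xi_\ell\rangle = B_\ell A_{\ell-1}\cdots B_1|\psi\rangle$ entering $A_\ell$ and the backward observable $O_\ell = U_{>\ell}^\dagger \cP_0 U_{>\ell}$, so that $\cL = \langle \xi_\ell| A_\ell^\dagger O_\ell A_\ell |\xi_\ell\rangle$. The two-eigenvalue hypothesis on the generators in \Cref{def:model} makes the parameter-shift rule applicable, so each partial $\partial_{\theta^A_{\ell j}}\cL$ equals a fixed multiple of a difference of two loss values with $\theta^A_{\ell j}$ shifted; equivalently $\partial_{\theta^A_{\ell j}}\cL = -2\alpha^A_{\ell j}\,\mathrm{Im}\,\langle \xi_\ell| A_\ell^\dagger O_\ell A_\ell\, \widetilde{\cP}^A_{\ell j}|\xi_\ell\rangle$ with $\widetilde{\cP}^A_{\ell j}$ the generator conjugated by the gates of $A_\ell$ preceding it. The key observation is that both $|\xi_\ell\rangle$ and $O_\ell$ are independent of $\theta^A_\ell$, and $A_\ell$ is held entirely by Alice; the analogous decomposition with the roles of $A$ and $B$ exchanged handles $\nabla_B$.

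Next I would realize the communicated pieces by forward/backward passes. Preparing $|\xi_\ell\rangle$ is a forward pass in which Alice and Bob alternately apply their unitaries to an $O(\log N)$-qubit register ($O(\ell)$ rounds), and enabling a measurement of $O_\ell$ is a forward pass through layers $\ell+1,\dots,L$ followed by $\cP_0$ and its reversal ($O(L-\ell)$ rounds), each round moving $O(\log N)$ qubits exactly as in \Cref{lem:inference_ub}. Summing over $\ell$ gives the $O(L^2)$ rounds, and batching the copies needed within a pass keeps the round count at $O(L^2)$. Given the ability to re-prepare $|\xi_\ell\rangle$ and to apply $O_\ell$, Alice assembles a phase oracle for the single-layer map $\theta^A_\ell\mapsto\cL$ whose dependence on $\theta^A_\ell$ is purely local, and feeds it to a robust quantum gradient-estimation routine (in the spirit of Jordan's algorithm and its noise-robust refinements). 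This returns all $P$ partials in $L^\infty$ to precision $\gve$ using $\tilde O(\sqrt P/\gve)$ local oracle calls, which is the origin of the $\sqrt P\,L\,\poly(N,\log P,\gve^{-1},\log(1/\delta))$ time and space. Because every oracle call reuses the same communicated features, only $\polyl(P)$ fresh feature copies — needed to boost the amplitude-estimation realizing the oracle and to union-bound over layers and coordinates — are actually communicated, yielding the $(\log P)^2$ and $\log(L/\delta)$ factors at $O(\log N)$ qubits per copy.

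Finally I would assemble the error budget. Realizing the phase oracle from a statistical estimate of an expectation value to precision $\gve$ costs $\gve^{-2}$ as in \Cref{lem:inference_ub}, while the gradient routine demands oracle precision $O(\gve/\sqrt P)$; composing these two nested estimations is what produces the $\gve^{-4}$ dependence, and a union bound over the $PL$ coordinates with per-coordinate failure $\delta/(PL)$ gives $\log(L/\delta)$ together with one of the $\log P$ factors. I expect the main obstacle to be precisely this last step: constructing a coherent phase oracle for an expectation value out of the communicated feature states that is accurate enough for the gradient routine, and certifying that the routine remains correct under the resulting noisy oracle while the number of \emph{communicated} copies stays polylogarithmic in $P$ even though the \emph{local} query count is $\sqrt P$. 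The delicate bookkeeping separating these two resource types — communication versus local time — is where I anticipate the bulk of the technical work.
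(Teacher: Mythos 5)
Your first step --- isolating the layer-$\ell$ dependence so that the forward feature and the backward observable are independent of the coordinate $j$ being differentiated, with all $j$-dependence residing in operators Alice holds locally --- is exactly the right move and matches the paper's decomposition (the paper writes $\partial_{\theta^A_{\ell i}}\cL = \langle\psi^A_{\ell 0}|E^A_{\ell i}|\psi^A_{\ell 0}\rangle$ for a single coordinate-independent state $|\psi^A_{\ell 0}\rangle$ and locally known Hermitian operators $E^A_{\ell i}$). But the second half of your argument has a genuine gap. You propose to feed a phase oracle for $\theta^A_\ell\mapsto\cL$ into a Jordan-style quantum gradient routine making $\tilde O(\sqrt P/\gve)$ oracle calls, and you assert that these calls ``reuse the same communicated features'' so that only $\polyl(P)$ copies cross the channel. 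This cannot work as stated: a coherent phase oracle for an expectation value is built from repeated, coherent, controlled applications of the \emph{state-preparation unitary} and its inverse (via amplitude-estimation-type conversions), not from static copies of the prepared state, and in this distributed setting every such application is itself an $O(L)$-round communication protocol. Quantum states consumed by one coherent query are not available for the next, so the $\sqrt P$ local query count would drag $\sqrt P$ (in fact more) communication along with it. Your own error budget confirms the problem: if the gradient routine demands oracle precision $O(\gve/\sqrt P)$ and each oracle realization costs inverse-precision-squared, the per-call cost is $P/\gve^2$, which is polynomial in $P$ and destroys the claimed $\polyl(P)$ communication. The $\gve^{-4}$ and $(\log P)^2$ in the lemma do not arise from nested estimation at all.

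The missing idea is that once all coordinate dependence sits in $P$ locally known two-outcome measurements on a single unknown state, the task is precisely \emph{shadow tomography} (\Cref{thm:shadow_tomography}): $\tilde O(\log^2 P\,\log N\,\log(L/\delta)/\gve^4)$ \emph{copies} of the $i$-independent state suffice to estimate all $P$ expectation values to additive error $\gve$, and copies (unlike coherent oracle access) are exactly the resource the distributed protocol can supply at $O(\log N)$ qubits and $O(L)$ rounds each, giving $O(L^2)$ rounds over all layers. The $\sqrt P$ in the time and space complexity is not the query count of a gradient algorithm; it is the computational cost of the efficient shadow-tomography implementation of Brand\~ao et al.\ cited in the paper's proof. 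So your approach would need to be replaced, not merely tightened: the ``delicate bookkeeping'' you defer to is the point at which the phase-oracle route fails and the sample-based route succeeds.
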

	
	Proof: \Cref{pf:dist_gradients_ub}. 
	
	This upper bound is obtained by simply noting that the problem of gradient estimation at every layer can be reduced to a shadow tomography problem~\cite{Abbas2023-ym}:
	\begin{theorem}[Shadow Tomography \cite{Aaronson2018-in} solved with Threshold Search~\cite{buadescu2021improved}] \label{thm:shadow_tomography}
		For an unknown state $\ket{\psi}$ of $\log N$ qubits, given $K$ known two-outcome measurements $E_i$, there is an explicit algorithm that takes $\ket{\psi}^{\otimes k}$ as input, where $k = \tilde{O}(\log^2 K \log N \log(1/\delta)/ \gve^4 )$, and produces estimates of $\left\langle \psi\right|E_{i}\left|\psi\right\rangle $ for all $i$ up to additive error $\gve$ with probability greater than $1 - \delta$. $\tilde{O}$ hides subdominant polylog factors.
	\end{theorem}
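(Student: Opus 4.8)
The plan is to prove the theorem by reducing shadow tomography to an adaptive sequence of \emph{threshold search} queries and bounding the number of such queries by an online-learning regret argument, following the gentle/postselected learning framework of Aaronson~\cite{Aaronson2017-ml} together with the refined threshold search of \cite{buadescu2021improved}. The algorithm maintains a sequence of hypothesis states $\omega_0,\omega_1,\dots$ on $\log N$ qubits, initialized to the maximally mixed state $\omega_0 = I/N$. Given the current hypothesis $\omega_t$, I would run threshold search on the $K$ measurements with thresholds tied to the predicted values $\mathrm{tr}(E_i\,\omega_t)$: the subroutine either certifies that $|\mathrm{tr}(E_i(\kt{\psi}\br{\psi}-\omega_t))|\le\gve$ for every $i$, in which case we halt and report $\mathrm{tr}(E_i\,\omega_t)$ as the estimates, or it returns an index $i_t$ on which $\omega_t$ errs by more than $\gve$.

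Second, I would bound the number of rounds $T$ by a potential argument. Each time threshold search exposes a badly predicted measurement $E_{i_t}$, the hypothesis is updated by matrix multiplicative weights (equivalently, follow-the-regularized-leader with the von Neumann entropy regularizer), using the estimate of $\mathrm{tr}(E_{i_t}\kt{\psi}\br{\psi})$ extracted from the outcomes that triggered the query. The standard online-learning-of-quantum-states regret bound shows each such update decreases the quantum relative entropy $S(\kt{\psi}\br{\psi}\,\|\,\omega_t)$ by $\Omega(\gve^2)$; since $S(\kt{\psi}\br{\psi}\,\|\,\omega_0)\le \log N$, this gives $T = O(\log N/\gve^2)$. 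This is precisely where the $\log N$ (qubit-count) factor enters the final bound.

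Third, I would establish the copy complexity of a single threshold search call: with $\tilde{O}(\log^2 K\,\log(1/\delta')/\gve^2)$ copies one can, with probability $1-\delta'$, either output a violated index or certify all thresholds. I would prove this by a sequential measurement scheme that tests the measurements against their thresholds one by one, using Hoeffding/Bernstein concentration to decide each test, while controlling the cumulative disturbance to the post-measurement state across the adaptively ordered sequence via the gentle measurement lemma, so that later tests still act on a state close to $\kt{\psi}^{\otimes(\cdot)}$. The improvement from the $\log^4 K$ of \cite{Aaronson2017-ml} to $\log^2 K$ in \cite{buadescu2021improved} comes from replacing a naive union bound over all $K$ measurements by a sharper sequential ``OR''-type analysis. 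Finally I would combine the pieces: setting $\delta'=\delta/T$ and union-bounding over the $T=O(\log N/\gve^2)$ rounds gives overall failure probability $\delta$, and multiplying the per-query cost by $T$ yields $k = \tilde{O}(\log^2 K\,\log N\,\log(1/\delta)/\gve^4)$, matching the claim.

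The main obstacle is the threshold search sample bound of the third step. The delicate point is that the measurements are chosen adaptively and applied sequentially to the \emph{same} batch of copies, so one must simultaneously guarantee statistical correctness of every test and bound the accumulated quantum state damage; reconciling these through the gentle measurement lemma while keeping the $K$-dependence at $\log^2 K$ rather than a larger power is the technical crux. A secondary subtlety is that the regret analysis of the second step must tolerate the fact that threshold search supplies only a noisy, approximate value of $\mathrm{tr}(E_{i_t}\kt{\psi}\br{\psi})$ rather than its exact value, so the $\Omega(\gve^2)$ per-step potential drop has to be argued robustly to estimation error of order $\gve$.
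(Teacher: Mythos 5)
The paper never proves \Cref{thm:shadow_tomography}: it is quoted as an external result of \cite{Aaronson2017-ml}, with the improved bound of \cite{buadescu2021improved}, and is used downstream purely as a black box in the proof of \Cref{lem:dist_gradients_ub}. So there is no internal proof to compare you against; what can be said is that your proposal is a faithful reconstruction of how the cited works obtain the stated bound. Your decomposition --- a hypothesis state initialized at $I/N$ and updated by matrix multiplicative weights, threshold search used as a mistake oracle against the predictions $\mathrm{tr}(E_i\,\omega_t)$, a relative-entropy potential bounded by $S(\kt{\psi}\br{\psi}\,\|\,\omega_0)\le\log N$ that drops by $\Omega(\gve^2)$ per mistake (hence $T=O(\log N/\gve^2)$ rounds), and a per-call cost of $\tilde{O}(\log^2 K\,\log(1/\delta')/\gve^2)$ copies with $\delta'=\delta/T$ --- is exactly the architecture of the proof in \cite{buadescu2021improved}, and the bookkeeping correctly reproduces $k=\tilde{O}(\log^2 K\,\log N\,\log(1/\delta)/\gve^4)$. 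Two caveats on the step you rightly flag as the crux: first, the $\log^2 K$ threshold-search bound in \cite{buadescu2021improved} is not obtained by Hoeffding tests stitched together with the vanilla gentle measurement lemma --- that style of analysis is essentially what yields the weaker $K$-dependence in \cite{Aaronson2017-ml}; their improvement comes from randomly perturbing the thresholds and directly bounding the expected damage of the resulting sequence of mostly-accepting measurements, which is what allows the union bound to be bypassed. Second, the two-sided certificate $\left|\mathrm{tr}\left(E_i(\kt{\psi}\br{\psi}-\omega_t)\right)\right|\le\gve$ requires running the one-sided threshold search on both $E_i$ and $I-E_i$, and the online-learning update must indeed be made robust to $O(\gve)$-accurate feedback, as you note. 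Neither caveat changes the structure or the final count, so as a blind reconstruction of the cited proof your proposal is sound.
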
 
	
	Using reductions from known problems in communication complexity, we can show that the amount of classical communication required to solve this problem is polynomial in the size of the input, and additionally give a lower bound on the number of rounds of communication required by any quantum or classical algorithm: 
	
	\begin{lemma} \label{lem:dist_classical_lb}
		\begin{enumerate}[i)]
			\item The classical randomized communication complexity of \Cref{prob:dist_inference} and \Cref{prob:dist_grad_estimation} with $\gve < 1/2$ is $\Omega(\max(\sqrt{N},L))$. \footnote{The inputs to \Cref{prob:dist_inference} and \Cref{prob:dist_grad_estimation} are defined in terms of real numbers, which is seemingly incompatible with the setting of communication complexity which typically deals with finite inputs. However, similar (but slightly worse) lower bounds hold for discretized analogs of these problem that use $O(\log N)$ bits to represent the real numbers \cite{Raz1999-ic}.}
			\item Any algorithm (quantum or classical) for \Cref{prob:dist_inference} or \Cref{prob:dist_grad_estimation} requires either $\Omega(L)$ rounds of communication or $\Omega(N/L^4)$ qubits (or bits) of communication.
		\end{enumerate}
		
	\end{lemma}
	
	Proof: \Cref{pf:dist_classical_lb}
	
	The implication of the second result in \Cref{lem:dist_classical_lb} is that $\Omega(L)$ rounds of communication are necessary in order to obtain an exponential communication advantage for small $L$, since otherwise the number of qubits of communication required can scale linearly with $N$. 
	
	The combination of \Cref{lem:inference_ub}, \Cref{lem:dist_gradients_ub} and \Cref{lem:dist_classical_lb} immediately implies exponential savings in communication for gradient estimation and inference:
 \begin{theorem}
     If $L = O(\polyl(N)),P = O(\poly(N))$ and sufficiently large $N$, solving \Cref{prob:dist_inference} or \Cref{prob:dist_grad_estimation} with nontrivial success probability requires $\Omega(\sqrt{N})$ bits of classical communication, while $O(\polyl(N, 1/\delta) \poly(1/\gve))$ qubits of communication suffice to solve these problems with probability at least $1-\delta$. 
 \end{theorem}
 The regime where $L = O(\polyl(N))$ is relevant for many classes of machine learning models. The required overhead in terms of time and space is only polynomial when compared to the straightforward classical algorithms for these problems. 
	
	The distribution of the model as in \cref{eq:distributed_model} is an example of pipelining. Data parallelism is another common approach to distributed machine learning in which subsets of the data are distributed to identical copies of the model. In \Cref{app:data_parallelism} we show that it can also be implemented using quantum circuits, which can then trained using gradient descent requiring quantum communication that is logarithmic in the number of parameters and input size. 
	
	Quantum advantage is possible in these problems because there is a bound on the complexity of the final output, whether it be correlated elements of the gradient up to some finite error or the low-dimensional output of a model. This might lead one to believe that whenever the output takes such a form, encoding the data in the amplitudes of a quantum state will trivially give an exponential advantage in communication complexity. We show however that the situation is slightly more nuanced, by considering the problem of inference with a linear model: 
	\begin{lemma}
		For the problem of distributed linear classification, there can be no exponential advantage in using quantum communication in place of classical communication.
	\end{lemma}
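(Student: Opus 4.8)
The plan is to prove the statement by exhibiting an efficient \emph{classical} protocol for distributed linear classification whose communication cost is only polylogarithmically larger than (indeed, comparable to) the quantum cost, thereby ruling out any exponential separation. In the linear case the model output collapses to an inner product: Alice holds a weight vector $w$ and Bob a data vector $x$, both unit vectors in $\bR^{N}$ (the complex case is handled by treating real and imaginary parts separately), and the inference task of \Cref{prob:dist_inference} specializes to estimating the pre-activation $\brkt{w}{x}$ to additive error $\gve$, whose sign is the predicted class. A quantum protocol solves this with $O(\log N/\gve^{2})$ qubits, e.g.\ via a Hadamard or SWAP test on the amplitude-encoded states $\kt{w}$ and $\kt{x}$. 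I will show that a classical sketching protocol matches this up to a single $\log N$ factor.

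First I would have Alice and Bob use shared randomness to sample $k = O(\gve^{-2}\log(1/\gd))$ i.i.d.\ vectors $g_{1},\dots,g_{k}$ with $g_{j}\sim\normal{0}{I_N}$; being public, this randomness incurs no communication. Alice forms the sketch $a_{j}=\langle g_{j},w\rangle$ and Bob forms $b_{j}=\langle g_{j},x\rangle$. Since $\mathbb{E}[g_{j}g_{j}^{\top}]=I_N$, the pair $(a_{j},b_{j})$ is jointly Gaussian with $\mathbb{E}[a_{j}b_{j}]=\brkt{w}{x}$ and variance $O(1)$ (using $\n{w}{2}=\n{x}{2}=1$). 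Alice then sends her sketches to Bob, who outputs a median-of-means estimate $\hat{s}$ of $\brkt{w}{x}$ formed from the products $a_{j}b_{j}$; the sub-exponential tails of these products give $|\hat{s}-\brkt{w}{x}|\le\gve$ with probability at least $1-\gd$, which in turn determines the classification whenever the margin is bounded away from $0$.

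The one step requiring care is the bit-precision of the communicated sketches, which is where I expect the main (though ultimately routine) technical work to lie. Each $a_{j}$ is a standard Gaussian, so with high probability $|a_{j}|=O(\sqrt{\log(k/\gd)})$, and rounding $a_{j}$ to precision $\gve/\sqrt{\log(k/\gd)}$ perturbs $\hat{s}$ by only $O(\gve)$; hence $O(\log(1/\gve))+O(\log\log(k/\gd))$ bits per sketch entry suffice. The total classical communication is therefore $\tilde{O}(\gve^{-2}\log(1/\gd))$ bits, with \emph{no} polynomial dependence on $N$ (and if only private randomness is available, Newman's theorem adds merely $O(\log N+\log(1/\gd))$ bits). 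Comparing against the $O(\gve^{-2}\log N)$ qubits of the quantum protocol, the two costs differ by at most a $\log N$ factor, so no exponential quantum advantage is possible. This is the intended contrast with the deep, nonlinear models of \Cref{def:model}, where \Cref{lem:dist_classical_lb} forces $\Omega(\sqrt{N})$ classical communication while the quantum cost stays polylogarithmic: it is precisely the nonlinear interleaving of unitaries, not the amplitude encoding of data on its own, that is responsible for the separation.
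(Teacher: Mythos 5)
Your classical protocol is essentially the paper's: both reduce the inner product to a shared-randomness random-projection sketch of dimension $O(1/\gamma^2)$ (the paper uses $\pm 1$ Johnson--Lindenstrauss projections \`a la Achlioptas, you use Gaussians with a median-of-means estimator; the bit-precision and Newman's-theorem remarks are fine and the upper bound $O(\max(N,1/\gamma^2))$ matches). However, the proposal has a genuine logical gap in its final step: you conclude "no exponential advantage" by comparing your classical \emph{upper} bound to a quantum \emph{upper} bound ($O(\gve^{-2}\log N)$ qubits from a swap test). That comparison cannot rule out an exponential separation, because nothing prevents a cleverer quantum protocol from beating the swap test. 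To bound the advantage you must compare the classical upper bound against a quantum \emph{lower} bound, and the proposal contains none.

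This matters precisely in the regime the paper is careful about: when the margin shrinks with $N$. For constant $\gamma$ your argument is fine (the classical cost is $O(1)$ in $N$, so there is trivially no advantage in $N$), but for, say, $\gamma = 1/\sqrt{N}$ the classical cost is $\Theta(N)$, and without a quantum lower bound you cannot exclude an $O(\polyl(N))$ quantum protocol --- which would be exactly the exponential advantage the lemma denies. The paper closes this by reducing Gap Hamming with general gap $g$ to linear classification with margin $\gamma = 2g/N$ and invoking Nayak's $\Omega(\sqrt{N/g})$ quantum lower bound, yielding $\Omega(\sqrt{N/\lceil\gamma N\rceil})$ qubits; combined with the $O(\max(N,1/\gamma^2))$ classical upper bound this pins the gap to at most polynomial for every $\gamma \in [0,1]$, including $\gamma = 0$ where classical is $O(N)$ and quantum is $\Omega(\sqrt{N})$. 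You need to add this half of the argument (or some other quantum lower bound) for the proof to go through.
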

	The precise statement and proof of this result are presented in \Cref{sec:linear_classification}. This result also highlights that the worst case lower bounds such as \Cref{lem:dist_classical_lb} may not hold for circuits with certain low-dimensional or other simplifying structure.

\section{Graph neural networks} \label{sec:GNN}

The communication advantages in the previous section apply to relatively unstructured data and quantum circuits (essentially the only structure in the problem is related to the promise of the vector-in-subspace problem \cite{Raz1999-ic}), and it is a priori unclear how relevant they are to circuits that approximate useful neural networks, or act on structured data. Here we consider a class of shallow graph neural networks that achieve good performance on node classification tasks on large graphs \cite{Frasca2020-vw}. We prove that an exponential quantum communication advantage still holds for this class of models. 

Consider a graph with $N$ nodes. Define a local message passing operator on the graph $A$ which may be the normalized Laplacian or some other operator. Given some $N \times D_0$ matrix of graph features $X$, we consider models of the following form:
\begin{equation} \label{eq:quadratic_gnn_matrix}
    \Phi(X)=\mathcal{P}\left(\sigma(AXW_{1})\right)W_{2}
\end{equation}
where $W_{1} \in \bR^{D_0 \times D_1}, W_{2} \in \bR^{D_1 \times D_2}$ are parameter matrices, $\sigma$ is a non-linearity and $\mathcal{P}$ is a sum pooling operator acting on the first index of its input, which can be represented as multiplication by an $N /t \times N$ matrix $\tilde{P}$. Since we would like a scalar output and a nonlinearity that can be implemented on a quantum computer, we instead compute 
\begin{equation} \label{eq:quadratic_gnn}
    \varphi(X)=\mathrm{tr}\left[\mathcal{P}\left(\sigma(AXW_{1})\right)W_{2}\right],
\end{equation}
with
\begin{equation} \label{eq:quadratic_nonlinearity}
    \sigma(x)=ax^2+bx+c
\end{equation}
and $W_2 \in \bR^{D_1 \times {N/t}}$ where $t$ is the size of the pooling window. 

This allows us to define the following problem:
\begin{problem}[Quadratic graph network inference ($\mathrm{QGNI}_{N,t}$)]
    Alice is given $X$, Bob is given $A,W_1,W_2$. Only Alice is allowed to send messages. Their goal is to estimate $\varphi(X/||X||_F)$ to additive error $\gve$. 
\end{problem}

This models a scenario where only Bob has access to the connectivity of the graph, while Alice has access to the graph features. The normalization ensures that the choice of $X$ does not introduce a dependence of the final output on $N$.  

In the following, we denote by $R^\rightarrow_\gve$ and $Q^\rightarrow_\gve$ the classical (public key randomized) communication complexity and quantum communication complexity respectively. We show:

\begin{lemma}    \label{lem:QGNI_classical_lb}
$R_{\gve}^\rightarrow(\mathrm{QGNI}_{N,t})=\Omega(\sqrt{N/t})$ for any $\gve \leq \frac{1}{4\left(t+\frac{1}{2}\right)\left(t+\frac{3}{2}\right)}$.
\end{lemma}
Proof: \Cref{pf:QGNI_classical_lb}

\begin{lemma} \label{lem:QGNI_quantum_ub}
    $\ensuremath{Q_{\gve}^{\rightarrow}(\mathrm{QGNI}_{N,t})=O((\left|a\right|\alpha^{2}+\left|b\right|\alpha)\left\Vert W_{2}\tilde{P}\right\Vert _{\infty}\log(ND_{0})/\varepsilon)}$ where $\alpha=\left\Vert W_{1}\right\Vert \left\Vert A\right\Vert $.
\end{lemma}
Proof: \Cref{pf:QGNI_quantum_ub}

If this upper bound was a polynomial function of $N$, it would imply that an exponential communication advantage is impossible. For the parameter choices that realize classical communication lower bound, this is not the case, implying the following:
\begin{lemma} \label{lem:GNN_qa}
    An exponential quantum advantage in communication holds for solving the inference problem $\mathrm{QGNI}_{N,t}$ up to error $\gve \leq \frac{1}{4\left(t+\frac{1}{2}\right)\left(t+\frac{3}{2}\right)}$, for any $t$ such that $t = \polyl(N)$.
\end{lemma}
Proof: \Cref{pf:GNN_qa}.
Note that this exponential advantage does not hold only for a single setting of the model weights, but rather for the entire family of models that can be used to solve $f-\mathrm{BHP}_{N,t}$ for functions $f$ that satisfy \cref{eq:f_sym_cond}. 

Note that generically, one would not expect the numerator in the upper bound of \Cref{lem:QGNI_quantum_ub} to scale polynomially with $N$. If $A$ is for example a normalized graph Laplacian then $||A|| \leq 2$. If we use a standard initialization scheme for the weights (say Gaussians with variance $1/(n_{in} + n_{out})$), the upper bound scales like $O((|a|+|b|)\log(ND_0)\poly(t, D_0,D_1)/\gve)$ in expectation. Note that if the model output decays polynomially with $N$, the upper bound will not be useful since one would need to choose $\gve$ to be inverse polynomial in $N$. This could happen for example in a classification task considered in \Cref{sec:decision} when the classes are exactly balanced, or when the network is untrained and not sensitive to the structure in the data. While it is difficult to argue analytically about the scaling out the network output or the norms of the weight matrices after training due to the nonlinearity of the dynamics, we empirically compute these and find that they remain controlled for the datasets we study (see \Cref{sec:empirical-bound-appendix}).

In \Cref{sec:exp_results}, we show that models of the form studied here achieve good performance on standard benchmarks, commensurate with state of the art models. Of particular relevance are the graph classification problems considered in \Cref{sec:decision}, where the output takes the form \cref{eq:quadratic_gnn}. 

\section{Experimental results} \label{sec:exp_results}

\subsection{Model}
We evaluate our model\footnote{Our code will be available at \href{https://github.com/hmichaeli/quantum_gnns}{github.com/hmichaeli/quantum\_gnns/}.} (as defined in \Cref{eq:quadratic_gnn_matrix}) on several graph tasks using common benchmarks and the DGL library \cite{wang2019dgl}. 
We use the SIGN model proposed by \cite{Frasca2020-vw} as a baseline. 
The SIGN model can be seen as an instance of our model where the message passing operator $A$ represents a column stack of $R$ hops, the original features of $X$ are duplicated $R$ times and 
$W_1$ is a block diagonal matrix.
In \Cref{sec:node-classification-exp} we simply replace the PReLU activation with a second-degree polynomial with trainable coefficients \cite{Michaeli2023-df} and compare the models on three node classification tasks. 
In \Cref{sec:graph-classification-exp}, we implement a more general form of SIGN by relaxing $W_1$ to be a dense matrix and evaluate our model over several graph-classification datasets.

\subsection{Node classification} \label{sec:node-classification-exp}
We evaluate our model on three public node classification datasets: ogbn-producs\cite{hu2021open}, Reddit \cite{hamilton2018inductive}, and Cora \cite{McCallumNRS00}. 
For both the baseline and polynomial model, we use SIGN with 5 hops of the neighbor averaging operator.
We train on each dataset for 1000 epochs using Adam optimizer and report the test accuracy averaged on 10 runs (full details in \Cref{sec:exp-appendix}).
Our results in \Cref{tab:node-classification} show that replacing the PReLU activation with a second-degree polynomial causes a reduction of less than 1\% on all of the tested datasets.

\begin{table}
  \caption{Test Accuracy for Node Classification and Decision Problem.
  Replacing PReLU with a polynomial of degree 2 causes a slight reduction in accuracy (less than 1\%) for both node classification and decision problem across all datasets.}
  \label{tab:combined}
  \centering
    \resizebox{1.0\linewidth}{!}{
  \begin{tabular}{lcccccc}
    \toprule
         & \multicolumn{3}{c}{Node Classification} & \multicolumn{3}{c}{Decision Problem} \\
    \cmidrule(r){2-4} \cmidrule(r){5-7}
     Model & ogbn-products & Reddit & Cora & ogbn-products  & Reddit & Cora \\
    \midrule
    SIGN (PReLU) & 79.48 $\pm$ 0.07 & 96.55 $\pm$ 0.02 & 78.84 $\pm$ 0.37 & 84.39 $\pm$ 1.73 & 90.33 $\pm$ 0.33 & 88.10 $\pm$ 5.61 \\
    SIGN (Poly) & 78.51 $\pm$ 0.05 & 96.31 $\pm$ 0.03 & 78.69 $\pm$ 0.26 & 83.70 $\pm$ 1.48 & 89.37 $\pm$ 0.60 & 87.14 $\pm$ 3.92 \\
    \bottomrule
  \end{tabular}
  }
\end{table}

    

\subsubsection{Decision problems} \label{sec:decision}
We reduce the node classification task into a binary graph classification task by proposing the following decision problem: for a pair of classes $\left( c_1, c_2 \right)$, return 1 if $c_1$ has more nodes; otherwise return 0.
We solve this task for each pair of classes by summing the node classification model output across all nodes and choosing the class with the higher score.
We use the \emph{node} classification training, choose the model with the highest validation accuracy on the \emph{graph} classification task, and report its accuracy on the test sets. The model output in this form is given by \cref{eq:quadratic_gnn_matrix}.

    

\subsection{Graph classification}
\label{sec:graph-classification-exp}
We evaluate our model on several graph classification benchmarks: bioinformatics datasets (MUTAG, PTC, NCI1, PROTEINS)\cite{journals/jmlr/ShervashidzeSLMB11,helma2001predictive,debnath1991structure,borgwardt2005protein} and social networks (COLLAB, IMDB-BINARY, IMDB-MULTI, REDDIT-BINARY, REDDIT-MULTI) \cite{yanardag2015deep}.
For the bioinformatics datasets, we use the standard categorical node features. 
As proposed in \cite{xu2019powerful}, we use one-hot encodings of node degree as the node features for the COLLAB and IMDB datasets, and the for REDDIT datasets all nodes have an identical scalar feature of 1.
We convert the polynomial SIGN model in \Cref{sec:node-classification-exp} into a graph classification model by inserting a SumPool operator as described in \Cref{eq:quadratic_gnn_matrix}. 
We use the sign diffusion operator \cite{wang2019dgl} and stack $R_i$ instances of each of its four message passing operators, where $\left\{R_i\right\}_{i=1}^{4}$ are selected during a hyperparameter tuning, as well as the hidden dimension size and optimization setting (see \Cref{sec:exp-appendix} for more details).
We follow the validation regime proposed by \cite{xu2019powerful}; we perform 10-fold cross-validation, train each fold for 350 epochs using Adam optimizer, and report in \Cref{tab:graph-classification-table} the maximal value and standard-deviation of the averaged validation accuracy curve.
For all datasets, except for REDDIT, our model achieves comparable to or better than other commonly used models, despite most of them using multiple layers.
While the results show that on most datasets our shallow architecture suffices given sufficient width in the message passing and hidden layer, we hypothesize that datasets without any node features (such as REDDIT) require at least two layers of message passing.

\begin{table}
  \caption{  Graph Classification Test Accuracy. Our model achieves comparable results to GIN and other known models on most datasets (see full table in \Cref{tab:graph-classification-full-table}).}
  \label{tab:graph-classification-table}
  \centering
  \resizebox{1.0\linewidth}{!}{
  \begin{tabular}{lccccccc}
    \toprule
    & \multicolumn{7}{c}{Dataset}                                            \\
    \cmidrule(r){2-8}
Model                             & MUTAG         & PTC          & NCI1         & PROTEINS    & COLLAB                 & IMDB-M         & REDDIT-M  \\
\midrule
GIN  \cite{xu2019powerful}        &89.40$\pm$5.60&64.60$\pm$7.0 &82.17$\pm$1.7 &76.2 $\pm$2.8 &80.2 $\pm$1.90            &52.3 $\pm$2.8    & 57.5$\pm$1.5 \\
DropGIN\cite{papp2021dropgnn}     &90.4 $\pm$7.0 &66.3 $\pm$8.6 & -            &76.3 $\pm$6.1 &-                         &51.4 $\pm$2.8    & -          \\
DGCNN\cite{zhang2018end}          &85.8 $\pm$1.7 &58.6 $\pm$2.5 & -            &75.5 $\pm$0.9 &-                         &47.8 $\pm$0.9    & -           \\
U2GNN \cite{nguyen2022universal}  &89.97$\pm$3.65&69.63$\pm$3.60&    -         &78.53$\pm$4.07&77.84$\pm$1.48            &53.60$\pm$3.53   &   -         \\     
HGP-SL\cite{zhang2019hierarchical}&-             &-            &78.45$\pm$0.77&84.91$\pm$1.62&-                         & -                & -            \\
WKPI\cite{zhao2019learning}       &88.30$\pm$2.6 &68.10$\pm$2.4 &87.5 $\pm$0.5 &78.5$\pm$0.4  &-                         &49.5 $\pm$ 0.4   & 59.5 $\pm$ 0.6 \\
\midrule
SIGN (ours)                       &92.02$\pm$6.45&68.0 $\pm$8.17&77.25$\pm$1.42&76.55$\pm$5.10& 81.82$\pm$1.42            & 53.13$\pm$3.01  & 54.09$\pm$1.76\\
    \bottomrule
  \end{tabular}
  }
\end{table}

	\section{Discussion}
	
	This work constitutes a preliminary investigation into a generic class of quantum circuits that has the potential for enabling an exponential communication advantage in problems of classical data processing including training and inference with large parameterized models over large datasets, with inherent privacy advantages. Communication constraints may become even more relevant if such models are trained on data that is obtained by inherently distributed interaction with the physical world \cite{Driess2023-xg}. The ability to compute using data with privacy guarantees can be potentially applied to proprietary data. This could become highly desirable even in the near future as the rate of publicly-available data production appears to be outstripped by the growth rate of training sets of large language models \cite{Villalobos2022-vw}. 
 
    A limitation of the current results is that it's unclear to what extent powerful neural networks can be approximated using quantum circuits, even though we provide positive evidence in the form of the results on graph networks in \Cref{sec:GNN}. Additionally, the advantages we study require deep ($\poly(N)$), fault-tolerant quantum circuits. While this is a common feature of problems for which quantum communication advantages hold, the overhead of quantum error-correction in such circuits may be considerable. Detailed resource estimates would be necessary to understand better the practicality of this approach for achieving useful quantum advantage.
 Our results naturally raise further questions regarding the expressive power and trainability of these types of circuits, which may be of independent interest. We collect some of these in \Cref{sec:open_questions}. 

   \section*{Acknowledgements}

    The authors would like to thank Amira Abbas, Ryan Babbush, Dave Bacon and Robbie King for helpful discussions and comments on the manuscript. The research of DS was Funded by the European Union (ERC, A-B-C-Deep, 101039436). Views and opinions expressed are however those of the author only and do not necessarily reflect those of the European Union or the European Research Council Executive Agency (ERCEA). Neither the European Union nor the granting authority can be held responsible for them. DS also acknowledges the support of the Schmidt Career Advancement Chair in AI.
    
\newpage

\bibliography{paperpile, iclr2024_conference, NeurIPS_2024/neurips_conference}

\begin{thebibliography}{100}

\bibitem{Aaronson2015-wa}
Scott Aaronson.
\newblock Read the fine print.
\newblock {\em Nat. Phys.}, 11(4):291--293, 2015.

\bibitem{Aaronson2017-gr}
Scott Aaronson.
\newblock Introduction to quantum information science.
\newblock \url{https://www.scottaaronson.com/qclec.pdf}, 2017.

\bibitem{Aaronson2018-in}
Scott Aaronson.
\newblock Shadow tomography of quantum states.
\newblock In {\em Proceedings of the 50th Annual ACM SIGACT Symposium on Theory
  of Computing}, pages 325--338, New York, NY, USA, 2018. ACM.

\bibitem{Aaronson2016-uk}
Scott Aaronson, Andris Ambainis, Jānis Iraids, Martins Kokainis, and Juris
  Smotrovs.
\newblock Polynomials, quantum query complexity, and grothendieck's inequality.
\newblock In {\em Proceedings of the 31st Conference on Computational
  Complexity}, number Article 25 in CCC '16, pages 1--19, Dagstuhl, DEU, 2016.
  Schloss Dagstuhl--Leibniz-Zentrum fuer Informatik.

\bibitem{Aaronson2019-cr}
Scott Aaronson, Xinyi Chen, Elad Hazan, Satyen Kale, and Ashwin Nayak.
\newblock Online learning of quantum states.
\newblock {\em J. Stat. Mech.}, 2019(12):124019, 2019.

\bibitem{Aaronson2019-uv}
Scott Aaronson and Guy~N Rothblum.
\newblock Gentle measurement of quantum states and differential privacy.
\newblock {\em arXiv [quant-ph]}, 2019.

\bibitem{Abbas2023-ym}
Amira Abbas, Robbie King, Hsin-Yuan Huang, William~J Huggins, Ramis Movassagh,
  Dar Gilboa, and Jarrod~R McClean.
\newblock On quantum backpropagation, information reuse, and cheating
  measurement collapse.
\newblock {\em arXiv [quant-ph]}, 2023.

\bibitem{Achlioptas2003-rt}
Dimitris Achlioptas.
\newblock Database-friendly random projections: Johnson-lindenstrauss with
  binary coins.
\newblock {\em J. Comput. System Sci.}, 66(4):671--687, 2003.

\bibitem{Agarwal2010-vv}
Alekh Agarwal, Peter~L Bartlett, Pradeep Ravikumar, and Martin~J Wainwright.
\newblock Information-theoretic lower bounds on the oracle complexity of
  stochastic convex optimization.
\newblock {\em arXiv [stat.ML]}, 2010.

\bibitem{Arunachalam2023-sd}
Srinivasan Arunachalam, Uma Girish, and Noam Lifshitz.
\newblock One clean qubit suffices for quantum communication advantage.
\newblock {\em arXiv [quant-ph]}, 2023.

\bibitem{Arute2019-dd}
Frank Arute, Kunal Arya, Ryan Babbush, Dave Bacon, Joseph~C Bardin, Rami
  Barends, Rupak Biswas, Sergio Boixo, Fernando G S~L Brandao, David~A Buell,
  Brian Burkett, Yu~Chen, Zijun Chen, Ben Chiaro, Roberto Collins, William
  Courtney, Andrew Dunsworth, Edward Farhi, Brooks Foxen, Austin Fowler, Craig
  Gidney, Marissa Giustina, Rob Graff, Keith Guerin, Steve Habegger, Matthew~P
  Harrigan, Michael~J Hartmann, Alan Ho, Markus Hoffmann, Trent Huang, Travis~S
  Humble, Sergei~V Isakov, Evan Jeffrey, Zhang Jiang, Dvir Kafri, Kostyantyn
  Kechedzhi, Julian Kelly, Paul~V Klimov, Sergey Knysh, Alexander Korotkov,
  Fedor Kostritsa, David Landhuis, Mike Lindmark, Erik Lucero, Dmitry Lyakh,
  Salvatore Mandrà, Jarrod~R McClean, Matthew McEwen, Anthony Megrant, Xiao
  Mi, Kristel Michielsen, Masoud Mohseni, Josh Mutus, Ofer Naaman, Matthew
  Neeley, Charles Neill, Murphy~Yuezhen Niu, Eric Ostby, Andre Petukhov, John~C
  Platt, Chris Quintana, Eleanor~G Rieffel, Pedram Roushan, Nicholas~C Rubin,
  Daniel Sank, Kevin~J Satzinger, Vadim Smelyanskiy, Kevin~J Sung, Matthew~D
  Trevithick, Amit Vainsencher, Benjamin Villalonga, Theodore White, Z~Jamie
  Yao, Ping Yeh, Adam Zalcman, Hartmut Neven, and John~M Martinis.
\newblock Quantum supremacy using a programmable superconducting processor.
\newblock {\em Nature}, 574(7779):505--510, 2019.

\bibitem{Azuma2022-bu}
Koji Azuma, Sophia~E Economou, David Elkouss, Paul Hilaire, Liang Jiang,
  Hoi-Kwong Lo, and Ilan Tzitrin.
\newblock Quantum repeaters: From quantum networks to the quantum internet.
\newblock {\em arXiv [quant-ph]}, 2022.

\bibitem{buadescu2021improved}
Costin B{\u{a}}descu and Ryan O'Donnell.
\newblock Improved quantum data analysis.
\newblock In {\em Proceedings of the 53rd Annual ACM SIGACT Symposium on Theory
  of Computing}, pages 1398--1411, 2021.

\bibitem{Balram2021-hg}
Krishna~C Balram and Kartik Srinivasan.
\newblock Piezoelectric optomechanical approaches for efficient quantum
  microwave-to-optical signal transduction: the need for co-design.
\newblock {\em arXiv [physics.optics]}, 2021.

\bibitem{Bar-Yossef2008-uf}
Ziv Bar-Yossef, T~S Jayram, and Iordanis Kerenidis.
\newblock Exponential separation of quantum and classical one-way communication
  complexity.
\newblock {\em SIAM J. Comput.}, 38(1):366--384, 2008.

\bibitem{Barham2022-pv}
Paul Barham, Aakanksha Chowdhery, Jeff Dean, Sanjay Ghemawat, Steven Hand, Dan
  Hurt, Michael Isard, Hyeontaek Lim, Ruoming Pang, Sudip Roy, Brennan Saeta,
  Parker Schuh, Ryan Sepassi, Laurent El~Shafey, Chandramohan~A Thekkath, and
  Yonghui Wu.
\newblock Pathways: Asynchronous distributed dataflow for {ML}.
\newblock {\em arXiv [cs.DC]}, 2022.

\bibitem{Barroso2013-gn}
Luiz~André Barroso, Jimmy Clidaras, and Urs Hölzle.
\newblock The datacenter as a computer: An introduction to the design of
  warehouse-scale machines, second edition.
\newblock {\em Synth. Lect. Comput. Archit.}, 8(3):1--154, 2013.

\bibitem{Bennett1993-qq}
C~H Bennett, G~Brassard, C~Crépeau, R~Jozsa, A~Peres, and W~K Wootters.
\newblock Teleporting an unknown quantum state via dual classical and
  einstein-podolsky-rosen channels.
\newblock {\em Phys. Rev. Lett.}, 70(13):1895--1899, 1993.

\bibitem{Bennett1995-jz}
Charles~H Bennett, Gilles Brassard, Sandu Popescu, Benjamin Schumacher, John~A
  Smolin, and William~K Wootters.
\newblock Purification of noisy entanglement and faithful teleportation via
  noisy channels.
\newblock {\em arXiv [quant-ph]}, 1995.

\bibitem{Bommasani2021-jv}
Rishi Bommasani, Drew~A Hudson, Ehsan Adeli, Russ Altman, Simran Arora, Sydney
  von Arx, Michael~S Bernstein, Jeannette Bohg, Antoine Bosselut, Emma
  Brunskill, Erik Brynjolfsson, Shyamal Buch, Dallas Card, Rodrigo Castellon,
  Niladri Chatterji, Annie Chen, Kathleen Creel, Jared~Quincy Davis, Dora
  Demszky, Chris Donahue, Moussa Doumbouya, Esin Durmus, Stefano Ermon, John
  Etchemendy, Kawin Ethayarajh, Li~Fei-Fei, Chelsea Finn, Trevor Gale, Lauren
  Gillespie, Karan Goel, Noah Goodman, Shelby Grossman, Neel Guha, Tatsunori
  Hashimoto, Peter Henderson, John Hewitt, Daniel~E Ho, Jenny Hong, Kyle Hsu,
  Jing Huang, Thomas Icard, Saahil Jain, Dan Jurafsky, Pratyusha Kalluri,
  Siddharth Karamcheti, Geoff Keeling, Fereshte Khani, Omar Khattab, Pang~Wei
  Koh, Mark Krass, Ranjay Krishna, Rohith Kuditipudi, Ananya Kumar, Faisal
  Ladhak, Mina Lee, Tony Lee, Jure Leskovec, Isabelle Levent, Xiang~Lisa Li,
  Xuechen Li, Tengyu Ma, Ali Malik, Christopher~D Manning, Suvir Mirchandani,
  Eric Mitchell, Zanele Munyikwa, Suraj Nair, Avanika Narayan, Deepak
  Narayanan, Ben Newman, Allen Nie, Juan~Carlos Niebles, Hamed Nilforoshan,
  Julian Nyarko, Giray Ogut, Laurel Orr, Isabel Papadimitriou, Joon~Sung Park,
  Chris Piech, Eva Portelance, Christopher Potts, Aditi Raghunathan, Rob Reich,
  Hongyu Ren, Frieda Rong, Yusuf Roohani, Camilo Ruiz, Jack Ryan, Christopher
  Ré, Dorsa Sadigh, Shiori Sagawa, Keshav Santhanam, Andy Shih, Krishnan
  Srinivasan, Alex Tamkin, Rohan Taori, Armin~W Thomas, Florian Tramèr, Rose~E
  Wang, William Wang, Bohan Wu, Jiajun Wu, Yuhuai Wu, Sang~Michael Xie,
  Michihiro Yasunaga, Jiaxuan You, Matei Zaharia, Michael Zhang, Tianyi Zhang,
  Xikun Zhang, Yuhui Zhang, Lucia Zheng, Kaitlyn Zhou, and Percy Liang.
\newblock On the opportunities and risks of foundation models.
\newblock {\em arXiv [cs.LG]}, 2021.

\bibitem{borgwardt2005protein}
Karsten~M Borgwardt, Cheng~Soon Ong, Stefan Sch{\"o}nauer, SVN Vishwanathan,
  Alex~J Smola, and Hans-Peter Kriegel.
\newblock Protein function prediction via graph kernels.
\newblock {\em Bioinformatics}, 21(suppl\_1):i47--i56, 2005.

\bibitem{Brandao2017-lr}
Fernando G~S Brandão, Amir Kalev, Tongyang Li, Cedric Yen-Yu Lin, Krysta~M
  Svore, and Xiaodi Wu.
\newblock Quantum {SDP} solvers: Large speed-ups, optimality, and applications
  to quantum learning.
\newblock {\em arXiv [quant-ph]}, 2017.

\bibitem{Brassard2001-wf}
Gilles Brassard.
\newblock Quantum communication complexity (a survey).
\newblock {\em arXiv [quant-ph]}, 2001.

\bibitem{Brown2017-xv}
Adam~R Brown and Leonard Susskind.
\newblock The second law of quantum complexity.
\newblock {\em arXiv [hep-th]}, 2017.

\bibitem{Brown2020-bq}
Tom~B Brown, Benjamin Mann, Nick Ryder, Melanie Subbiah, Jared Kaplan, Prafulla
  Dhariwal, Arvind Neelakantan, Pranav Shyam, Girish Sastry, Amanda Askell,
  Sandhini Agarwal, Ariel Herbert-Voss, Gretchen Krueger, Tom Henighan, Rewon
  Child, Aditya Ramesh, Daniel~M Ziegler, Jeffrey Wu, Clemens Winter,
  Christopher Hesse, Mark Chen, Eric Sigler, Mateusz Litwin, Scott Gray,
  Benjamin Chess, Jack Clark, Christopher Berner, Sam McCandlish, Alec Radford,
  Ilya Sutskever, and Dario Amodei.
\newblock Language models are few-shot learners.
\newblock {\em arXiv [cs.CL]}, 2020.

\bibitem{Bubeck2014-bj}
Sébastien Bubeck.
\newblock Convex optimization: Algorithms and complexity.
\newblock {\em arXiv [math.OC]}, 2014.

\bibitem{Buhrman2009-uv}
Harry Buhrman, Richard Cleve, Serge Massar, and Ronald de~Wolf.
\newblock Non-locality and communication complexity.
\newblock {\em arXiv [quant-ph]}, 2009.

\bibitem{Cerezo2020-ua}
M~Cerezo, Andrew Arrasmith, Ryan Babbush, Simon~C Benjamin, Suguru Endo,
  Keisuke Fujii, Jarrod~R McClean, Kosuke Mitarai, Xiao Yuan, Lukasz Cincio,
  and Patrick~J Coles.
\newblock Variational quantum algorithms.
\newblock {\em arXiv [quant-ph]}, 2020.

\bibitem{Chakraborty2018-kq}
Shantanav Chakraborty, András Gilyén, and Stacey Jeffery.
\newblock The power of block-encoded matrix powers: improved regression
  techniques via faster hamiltonian simulation.
\newblock {\em arXiv [quant-ph]}, 2018.

\bibitem{Chi-Chih_Yao1993-kh}
A~Chi-Chih~Yao.
\newblock Quantum circuit complexity.
\newblock In {\em Proceedings of 1993 IEEE 34th Annual Foundations of Computer
  Science}, pages 352--361, 1993.

\bibitem{Childs2008-lc}
Andrew~M Childs and Wim van Dam.
\newblock Quantum algorithms for algebraic problems.
\newblock {\em arXiv [quant-ph]}, 2008.

\bibitem{Cohen2015-sr}
Nadav Cohen, Or~Sharir, and Amnon Shashua.
\newblock On the expressive power of deep learning: A tensor analysis.
\newblock {\em arXiv [cs.NE]}, 2015.

\bibitem{Cohen2016-uy}
Nadav Cohen and Amnon Shashua.
\newblock Inductive bias of deep convolutional networks through pooling
  geometry.
\newblock {\em arXiv [cs.NE]}, 2016.

\bibitem{Crooks2019-mz}
Gavin~E Crooks.
\newblock Gradients of parameterized quantum gates using the parameter-shift
  rule and gate decomposition.
\newblock {\em arXiv [quant-ph]}, 2019.

\bibitem{debnath1991structure}
Asim~Kumar Debnath, Rosa~L Lopez~de Compadre, Gargi Debnath, Alan~J Shusterman,
  and Corwin Hansch.
\newblock Structure-activity relationship of mutagenic aromatic and
  heteroaromatic nitro compounds. correlation with molecular orbital energies
  and hydrophobicity.
\newblock {\em Journal of medicinal chemistry}, 34(2):786--797, 1991.

\bibitem{Doriguello2020-iv}
João~F Doriguello and Ashley Montanaro.
\newblock Exponential quantum communication reductions from generalizations of
  the boolean hidden matching problem.
\newblock {\em arXiv [quant-ph]}, 2020.

\bibitem{Driess2023-xg}
Danny Driess, Fei Xia, Mehdi S~M Sajjadi, Corey Lynch, Aakanksha Chowdhery,
  Brian Ichter, Ayzaan Wahid, Jonathan Tompson, Quan Vuong, Tianhe Yu, Wenlong
  Huang, Yevgen Chebotar, Pierre Sermanet, Daniel Duckworth, Sergey Levine,
  Vincent Vanhoucke, Karol Hausman, Marc Toussaint, Klaus Greff, Andy Zeng,
  Igor Mordatch, and Pete Florence.
\newblock {PaLM}-{E}: An embodied multimodal language model.
\newblock {\em arXiv [cs.LG]}, 2023.

\bibitem{Farhi2018-xx}
Edward Farhi and Hartmut Neven.
\newblock Classification with quantum neural networks on near term processors.
\newblock {\em arXiv [quant-ph]}, 2018.

\bibitem{Feynman1982-as}
Richard~P Feynman.
\newblock Simulating physics with computers.
\newblock {\em Int. J. Theor. Phys.}, 21(6):467--488, 1982.

\bibitem{Frasca2020-vw}
Fabrizio Frasca, Emanuele Rossi, Davide Eynard, Ben Chamberlain, Michael
  Bronstein, and Federico Monti.
\newblock {SIGN}: Scalable inception graph neural networks.
\newblock {\em arXiv [cs.LG]}, 2020.

\bibitem{Gilyen2018-xs}
András Gilyén, Yuan Su, Guang~Hao Low, and Nathan Wiebe.
\newblock Quantum singular value transformation and beyond: exponential
  improvements for quantum matrix arithmetics.
\newblock {\em arXiv [quant-ph]}, 2018.

\bibitem{Giovannetti2008-oj}
Vittorio Giovannetti, Seth Lloyd, and Lorenzo Maccone.
\newblock Quantum random access memory.
\newblock {\em Phys. Rev. Lett.}, 100(16):160501, 2008.

\bibitem{Gonon2023-fy}
Lukas Gonon and Antoine Jacquier.
\newblock Universal approximation theorem and error bounds for quantum neural
  networks and quantum reservoirs.
\newblock {\em arXiv [quant-ph]}, 2023.

\bibitem{Google_Quantum_AI2023-ty}
{Google Quantum AI}.
\newblock Suppressing quantum errors by scaling a surface code logical qubit.
\newblock {\em Nature}, 614(7949):676--681, 2023.

\bibitem{Gordon2005-wd}
Goren Gordon and Gustavo Rigolin.
\newblock Generalized teleportation protocol.
\newblock {\em arXiv [quant-ph]}, 2005.

\bibitem{Gower2016-tl}
Robert Gower, Donald Goldfarb, and Peter Richtarik.
\newblock Stochastic block {BFGS}: Squeezing more curvature out of data.
\newblock In Maria~Florina Balcan and Kilian~Q Weinberger, editors, {\em
  Proceedings of The 33rd International Conference on Machine Learning},
  volume~48 of {\em Proceedings of Machine Learning Research}, pages
  1869--1878, New York, New York, USA, 2016. PMLR.

\bibitem{hamilton2018inductive}
William~L. Hamilton, Rex Ying, and Jure Leskovec.
\newblock Inductive representation learning on large graphs, 2018.

\bibitem{Harrow2009-ml}
Aram~W Harrow, Avinatan Hassidim, and Seth Lloyd.
\newblock Quantum algorithm for linear systems of equations.
\newblock {\em Phys. Rev. Lett.}, 103(15):150502, 2009.

\bibitem{Harrow2021-kl}
Aram~W Harrow and John~C Napp.
\newblock Low-depth gradient measurements can improve convergence in
  variational hybrid quantum-classical algorithms.
\newblock {\em Phys. Rev. Lett.}, 126(14):140502, 2021.

\bibitem{helma2001predictive}
Christoph Helma, Ross~D. King, Stefan Kramer, and Ashwin Srinivasan.
\newblock The predictive toxicology challenge 2000--2001.
\newblock {\em Bioinformatics}, 17(1):107--108, 2001.

\bibitem{Hoffmann2022-um}
Jordan Hoffmann, Sebastian Borgeaud, Arthur Mensch, Elena Buchatskaya, Trevor
  Cai, Eliza Rutherford, Diego de~Las~Casas, Lisa~Anne Hendricks, Johannes
  Welbl, Aidan Clark, Tom Hennigan, Eric Noland, Katie Millican, George van~den
  Driessche, Bogdan Damoc, Aurelia Guy, Simon Osindero, Karen Simonyan, Erich
  Elsen, Jack~W Rae, Oriol Vinyals, and Laurent Sifre.
\newblock Training compute-optimal large language models.
\newblock {\em arXiv [cs.CL]}, 2022.

\bibitem{Holevo1973-kj}
Alexander~Semenovich Holevo.
\newblock Bounds for the quantity of information transmitted by a quantum
  communication channel.
\newblock {\em Problemy Peredachi Informatsii}, 1973.

\bibitem{Howard2018-ja}
Jeremy Howard and Sebastian Ruder.
\newblock Universal language model fine-tuning for text classification.
\newblock {\em arXiv [cs.CL]}, 2018.

\bibitem{hu2021open}
Weihua Hu, Matthias Fey, Marinka Zitnik, Yuxiao Dong, Hongyu Ren, Bowen Liu,
  Michele Catasta, and Jure Leskovec.
\newblock Open graph benchmark: Datasets for machine learning on graphs, 2021.

\bibitem{Huang2021-rg}
Hsin-Yuan Huang, Michael Broughton, Masoud Mohseni, Ryan Babbush, Sergio Boixo,
  Hartmut Neven, and Jarrod~R McClean.
\newblock Power of data in quantum machine learning.
\newblock {\em Nat. Commun.}, 12(1):2631, 2021.

\bibitem{Huang2018-sj}
Yanping Huang, Youlong Cheng, Ankur Bapna, Orhan Firat, Mia~Xu Chen, Dehao
  Chen, Hyoukjoong Lee, Jiquan Ngiam, Quoc~V Le, Yonghui Wu, and Zhifeng Chen.
\newblock {GPipe}: Efficient training of giant neural networks using pipeline
  parallelism.
\newblock {\em arXiv [cs.CV]}, 2018.

\bibitem{Huggins2023-wl}
William~J Huggins and Jarrod~R McClean.
\newblock Accelerating quantum algorithms with precomputation.
\newblock {\em arXiv [quant-ph]}, 2023.

\bibitem{Jain2002-kw}
Rahul Jain, Jaikumar Radhakrishnan, and Pranab Sen.
\newblock The quantum communication complexity of the pointer chasing problem:
  The bit version.
\newblock In {\em FST TCS 2002: Foundations of Software Technology and
  Theoretical Computer Science}, Lecture notes in computer science, pages
  218--229. Springer Berlin Heidelberg, Berlin, Heidelberg, 2002.

\bibitem{Ji2018-cc}
Zhengfeng Ji, Yi-Kai Liu, and Fang Song.
\newblock Pseudorandom quantum states.
\newblock In {\em Lecture Notes in Computer Science}, Lecture notes in computer
  science, pages 126--152. Springer International Publishing, Cham, 2018.

\bibitem{Johnson2013-ob}
Rie Johnson and Tong Zhang.
\newblock Accelerating stochastic gradient descent using predictive variance
  reduction.
\newblock In {\em NeurIPS}, 2013.

\bibitem{Jouppi2023-bl}
Norman~P Jouppi, George Kurian, Sheng Li, Peter Ma, Rahul Nagarajan, Lifeng
  Nai, Nishant Patil, Suvinay Subramanian, Andy Swing, Brian Towles, Cliff
  Young, Xiang Zhou, Zongwei Zhou, and David Patterson.
\newblock {TPU} {v4}: An optically reconfigurable supercomputer for machine
  learning with hardware support for embeddings.
\newblock {\em arXiv [cs.AR]}, 2023.

\bibitem{Kaplan2020-aw}
J~Kaplan, S~McCandlish, T~Henighan, T~B Brown, and {others}.
\newblock Scaling laws for neural language models.
\newblock {\em arXiv preprint arXiv}, 2020.

\bibitem{Katharopoulos2020-es}
Angelos Katharopoulos, Apoorv Vyas, Nikolaos Pappas, and François Fleuret.
\newblock Transformers are {RNNs}: Fast autoregressive transformers with linear
  attention.
\newblock {\em arXiv [cs.LG]}, 2020.

\bibitem{Krutyanskiy2022-pb}
V~Krutyanskiy, M~Galli, V~Krcmarsky, S~Baier, D~A Fioretto, Y~Pu, A~Mazloom,
  P~Sekatski, M~Canteri, M~Teller, J~Schupp, J~Bate, M~Meraner, N~Sangouard,
  B~P Lanyon, and T~E Northup.
\newblock Entanglement of trapped-ion qubits separated by 230 meters.
\newblock {\em arXiv [quant-ph]}, 2022.

\bibitem{Kushilevitz2011-mk}
Eyal Kushilevitz and Noam Nisan.
\newblock {\em Communication Complexity}.
\newblock Cambridge University Press, Cambridge, England, 2011.

\bibitem{Lauk2020-du}
Nikolai Lauk, Neil Sinclair, Shabir Barzanjeh, Jacob~P Covey, Mark Saffman,
  Maria Spiropulu, and Christoph Simon.
\newblock Perspectives on quantum transduction.
\newblock {\em Quantum Sci. Technol.}, 5(2):020501, 2020.

\bibitem{Le_Roux2012-ov}
Nicolas Le~Roux, Mark Schmidt, and Francis Bach.
\newblock A stochastic gradient method with an exponential convergence rate for
  finite training sets.
\newblock {\em arXiv [math.OC]}, 2012.

\bibitem{Lee-Thorp2021-sk}
James Lee-Thorp, Joshua Ainslie, Ilya Eckstein, and Santiago Ontanon.
\newblock {FNet}: Mixing tokens with fourier transforms.
\newblock {\em arXiv [cs.CL]}, 2021.

\bibitem{Levine2017-rp}
Yoav Levine, Or~Sharir, Alon Ziv, and Amnon Shashua.
\newblock On the long-term memory of deep recurrent networks.
\newblock {\em arXiv [cs.LG]}, 2017.

\bibitem{Levine2020-yc}
Yoav Levine, Noam Wies, Or~Sharir, Hofit Bata, and Amnon Shashua.
\newblock The depth-to-width interplay in self-attention.
\newblock {\em arXiv [cs.LG]}, 2020.

\bibitem{Li2022-ha}
Bo~Li, Yuan Cao, Yu-Huai Li, Wen-Qi Cai, Wei-Yue Liu, Ji-Gang Ren, Sheng-Kai
  Liao, Hui-Nan Wu, Shuang-Lin Li, Li~Li, Nai-Le Liu, Chao-Yang Lu, Juan Yin,
  Yu-Ao Chen, Cheng-Zhi Peng, and Jian-Wei Pan.
\newblock Quantum state transfer over 1200 km assisted by prior distributed
  entanglement.
\newblock {\em Phys. Rev. Lett.}, 128(17):170501, 2022.

\bibitem{Lloyd1996-vs}
S~Lloyd.
\newblock Universal quantum simulators.
\newblock {\em Science}, 273(5278):1073--1078, 1996.

\bibitem{Lloyd2014-fb}
Seth Lloyd, Masoud Mohseni, and Patrick Rebentrost.
\newblock Quantum principal component analysis.
\newblock {\em Nat. Phys.}, 10(9):631--633, 2014.

\bibitem{Low2017-rd}
Guang~Hao Low and Isaac~L Chuang.
\newblock Optimal hamiltonian simulation by quantum signal processing.
\newblock {\em Phys. Rev. Lett.}, 118(1):010501, 2017.

\bibitem{Magnard2020-cq}
P~Magnard, S~Storz, P~Kurpiers, J~Schär, F~Marxer, J~Lütolf, T~Walter, J-C
  Besse, M~Gabureac, K~Reuer, A~Akin, B~Royer, A~Blais, and A~Wallraff.
\newblock Microwave quantum link between superconducting circuits housed in
  spatially separated cryogenic systems.
\newblock {\em Phys. Rev. Lett.}, 125(26):260502, 2020.

\bibitem{Maiorov1999-qk}
Vitaly Maiorov and Allan Pinkus.
\newblock Lower bounds for approximation by {MLP} neural networks.
\newblock {\em Neurocomputing}, 25(1):81--91, 1999.

\bibitem{Martyn2021-aa}
John~M Martyn, Zane~M Rossi, Andrew~K Tan, and Isaac~L Chuang.
\newblock A grand unification of quantum algorithms.
\newblock {\em arXiv [quant-ph]}, 2021.

\bibitem{McCallumNRS00}
Andrew McCallum, Kamal Nigam, Jason Rennie, and Kristie Seymore.
\newblock Automating the construction of internet portals with machine
  learning.
\newblock {\em Inf. Retr.}, 3(2):127--163, 2000.

\bibitem{McClean2015-ph}
Jarrod~R McClean, Jonathan Romero, Ryan Babbush, and Alán Aspuru-Guzik.
\newblock The theory of variational hybrid quantum-classical algorithms.
\newblock {\em arXiv [quant-ph]}, 2015.

\bibitem{Michaeli2023-df}
Hagay Michaeli, Tomer Michaeli, and Daniel Soudry.
\newblock Alias-free convnets: Fractional shift invariance via polynomial
  activations.
\newblock {\em arXiv [cs.CV]}, 2023.

\bibitem{Mityagin2015-uk}
Boris Mityagin.
\newblock The zero set of a real analytic function.
\newblock {\em arXiv [math.CA]}, 2015.

\bibitem{Montanaro2015-ok}
Ashley Montanaro and Sam Pallister.
\newblock Quantum algorithms and the finite element method.
\newblock {\em arXiv [quant-ph]}, 2015.

\bibitem{montanaro2022quantum}
Ashley Montanaro and Changpeng Shao.
\newblock Quantum communication complexity of linear regression.
\newblock {\em arXiv preprint arXiv:2210.01601}, 2022.

\bibitem{Moritz2016-kz}
Philipp Moritz, Robert Nishihara, and Michael Jordan.
\newblock A linearly-convergent stochastic {L}-{BFGS} algorithm.
\newblock In Arthur Gretton and Christian~C Robert, editors, {\em Proceedings
  of the 19th International Conference on Artificial Intelligence and
  Statistics}, volume~51 of {\em Proceedings of Machine Learning Research},
  pages 249--258, Cadiz, Spain, 2016. PMLR.

\bibitem{Motlagh2023-vf}
Danial Motlagh and Nathan Wiebe.
\newblock Generalized quantum signal processing.
\newblock {\em arXiv [quant-ph]}, 2023.

\bibitem{Munro2015-ni}
William~J Munro, Koji Azuma, Kiyoshi Tamaki, and Kae Nemoto.
\newblock Inside quantum repeaters.
\newblock {\em IEEE J. Sel. Top. Quantum Electron.}, 21(3):78--90, 2015.

\bibitem{Narayanan2019-zi}
Deepak Narayanan, Aaron Harlap, Amar Phanishayee, Vivek Seshadri, Nikhil~R
  Devanur, Gregory~R Ganger, Phillip~B Gibbons, and Matei Zaharia.
\newblock {PipeDream}: generalized pipeline parallelism for {DNN} training.
\newblock In {\em Proceedings of the 27th ACM Symposium on Operating Systems
  Principles}, SOSP '19, pages 1--15, New York, NY, USA, 2019. Association for
  Computing Machinery.

\bibitem{Nayak1998-ph}
Ashwin Nayak and Felix Wu.
\newblock The quantum query complexity of approximating the median and related
  statistics.
\newblock {\em arXiv [quant-ph]}, 1998.

\bibitem{nguyen2022universal}
Dai~Quoc Nguyen, Tu~Dinh Nguyen, and Dinh Phung.
\newblock Universal graph transformer self-attention networks, 2022.

\bibitem{Nielsen2010-jj}
Michael~A Nielsen and Isaac~L Chuang.
\newblock {\em Quantum Computation and Quantum Information: {10th} Anniversary
  Edition}.
\newblock Cambridge University Press, 2010.

\bibitem{Osgood2019-ew}
Brad~G Osgood.
\newblock {\em Lectures on the Fourier Transform and Its Applications (Pure and
  Applied Undergraduate Texts) (Pure and Applied Undergraduate Texts, 33)}.
\newblock American Mathematical Society, 2019.

\bibitem{papp2021dropgnn}
Pál~András Papp, Karolis Martinkus, Lukas Faber, and Roger Wattenhofer.
\newblock Dropgnn: Random dropouts increase the expressiveness of graph neural
  networks, 2021.

\bibitem{pira2023invitation}
Lirand{\"e} Pira and Chris Ferrie.
\newblock An invitation to distributed quantum neural networks.
\newblock {\em Quantum Machine Intelligence}, 5(2):1--24, 2023.

\bibitem{Pompili2021-el}
M~Pompili, S~L~N Hermans, S~Baier, H~K~C Beukers, P~C Humphreys, R~N Schouten,
  R~F~L Vermeulen, M~J Tiggelman, L~Dos Santos~Martins, B~Dirkse, S~Wehner, and
  R~Hanson.
\newblock Realization of a multinode quantum network of remote solid-state
  qubits.
\newblock {\em Science}, 372(6539):259--264, 2021.

\bibitem{Ponzio2001-wr}
Stephen~J Ponzio, Jaikumar Radhakrishnan, and S~Venkatesh.
\newblock The communication complexity of pointer chasing.
\newblock {\em J. Comput. System Sci.}, 62(2):323--355, 2001.

\bibitem{Pope2022-qr}
Reiner Pope, Sholto Douglas, Aakanksha Chowdhery, Jacob Devlin, James Bradbury,
  Anselm Levskaya, Jonathan Heek, Kefan Xiao, Shivani Agrawal, and Jeff Dean.
\newblock Efficiently scaling transformer inference.
\newblock {\em arXiv [cs.LG]}, 2022.

\bibitem{Perez-Salinas2019-dd}
Adrián Pérez-Salinas, Alba Cervera-Lierta, Elies Gil-Fuster, and José~I
  Latorre.
\newblock Data re-uploading for a universal quantum classifier.
\newblock {\em arXiv [quant-ph]}, 2019.

\bibitem{Rao2020-gm}
Anup Rao and Amir Yehudayoff.
\newblock {\em Communication Complexity and Applications}.
\newblock Cambridge University Press, 2020.

\bibitem{Rasley2020-lb}
Jeff Rasley, Samyam Rajbhandari, Olatunji Ruwase, and Yuxiong He.
\newblock {DeepSpeed}: System optimizations enable training deep learning
  models with over 100 billion parameters.
\newblock In {\em Proceedings of the 26th ACM SIGKDD International Conference
  on Knowledge Discovery \& Data Mining}, KDD '20, pages 3505--3506, New York,
  NY, USA, 2020. Association for Computing Machinery.

\bibitem{Rattew2023-hz}
Arthur~G Rattew and Patrick Rebentrost.
\newblock Non-linear transformations of quantum amplitudes: Exponential
  improvement, generalization, and applications.
\newblock {\em arXiv [quant-ph]}, 2023.

\bibitem{Raz1999-ic}
Ran Raz.
\newblock Exponential separation of quantum and classical communication
  complexity.
\newblock In {\em Proceedings of the thirty-first annual ACM symposium on
  Theory of Computing}, STOC '99, pages 358--367, New York, NY, USA, 1999.
  Association for Computing Machinery.

\bibitem{Razborov2002-yu}
Alexander Razborov.
\newblock Quantum communication complexity of symmetric predicates.
\newblock {\em arXiv [quant-ph]}, 2002.

\bibitem{Roughgarden2015-mt}
Tim Roughgarden.
\newblock Communication complexity (for algorithm designers).
\newblock {\em arXiv [cs.CC]}, 2015.

\bibitem{Schuld2020-de}
Maria Schuld, Ryan Sweke, and Johannes~Jakob Meyer.
\newblock The effect of data encoding on the expressive power of variational
  quantum machine learning models.
\newblock {\em arXiv [quant-ph]}, 2020.

\bibitem{journals/jmlr/ShervashidzeSLMB11}
Nino Shervashidze, Pascal Schweitzer, Erik~Jan van Leeuwen, Kurt Mehlhorn, and
  Karsten~M. Borgwardt.
\newblock Weisfeiler-lehman graph kernels.
\newblock {\em J. Mach. Learn. Res.}, 12:2539--2561, 2011.

\bibitem{Shor1994-hb}
P~W Shor.
\newblock Algorithms for quantum computation: discrete logarithms and
  factoring.
\newblock In {\em Proceedings 35th Annual Symposium on Foundations of Computer
  Science}. IEEE Comput. Soc. Press, 1994.

\bibitem{Sun2023-tn}
Yutao Sun, Li~Dong, Shaohan Huang, Shuming Ma, Yuqing Xia, Jilong Xue, Jianyong
  Wang, and Furu Wei.
\newblock Retentive network: A successor to transformer for large language
  models.
\newblock {\em arXiv [cs.CL]}, 2023.

\bibitem{Team2023-qj}
Gemini Team, Rohan Anil, Sebastian Borgeaud, Yonghui Wu, Jean-Baptiste Alayrac,
  Jiahui Yu, Radu Soricut, Johan Schalkwyk, Andrew~M Dai, Anja Hauth, and
  {Others}.
\newblock Gemini: a family of highly capable multimodal models.
\newblock {\em arXiv preprint arXiv:2312. 11805}, 2023.

\bibitem{Vaswani2017-gj}
Ashish Vaswani, Noam Shazeer, Niki Parmar, Jakob Uszkoreit, Llion Jones,
  Aidan~N Gomez, L~Ukasz Kaiser, and Illia Polosukhin.
\newblock Attention is all you need.
\newblock In I~Guyon, U~V Luxburg, S~Bengio, H~Wallach, R~Fergus,
  S~Vishwanathan, and R~Garnett, editors, {\em Advances in Neural Information
  Processing Systems}, volume~30, pages 5998--6008. Curran Associates, Inc.,
  2017.

\bibitem{Villalobos2022-vw}
Pablo Villalobos, Jaime Sevilla, Lennart Heim, Tamay Besiroglu, Marius
  Hobbhahn, and Anson Ho.
\newblock Will we run out of data? an analysis of the limits of scaling
  datasets in machine learning.
\newblock {\em arXiv [cs.LG]}, 2022.

\bibitem{Walker1974-pl}
A~J Walker.
\newblock New fast method for generating discrete random numbers with arbitrary
  frequency distributions.
\newblock {\em Electron. Lett.}, 10(8):127--128, 1974.

\bibitem{Wang2022-wn}
Changqing Wang, Ivan Gonin, Anna Grassellino, Sergey Kazakov, Alexander
  Romanenko, Vyacheslav~P Yakovlev, and Silvia Zorzetti.
\newblock High-efficiency microwave-optical quantum transduction based on a
  cavity electro-optic superconducting system with long coherence time.
\newblock {\em npj Quantum Information}, 8(1):1--10, 2022.

\bibitem{wang2019dgl}
Minjie Wang, Da~Zheng, Zihao Ye, Quan Gan, Mufei Li, Xiang Song, Jinjing Zhou,
  Chao Ma, Lingfan Yu, Yu~Gai, Tianjun Xiao, Tong He, George Karypis, Jinyang
  Li, and Zheng Zhang.
\newblock Deep graph library: A graph-centric, highly-performant package for
  graph neural networks.
\newblock {\em arXiv preprint arXiv:1909.01315}, 2019.

\bibitem{xu2019powerful}
Keyulu Xu, Weihua Hu, Jure Leskovec, and Stefanie Jegelka.
\newblock How powerful are graph neural networks?, 2019.

\bibitem{Xu2021-rs}
Yuanzhong Xu, Hyoukjoong Lee, Dehao Chen, Blake Hechtman, Yanping Huang, Rahul
  Joshi, Maxim Krikun, Dmitry Lepikhin, Andy Ly, Marcello Maggioni, Ruoming
  Pang, Noam Shazeer, Shibo Wang, Tao Wang, Yonghui Wu, and Zhifeng Chen.
\newblock {GSPMD}: General and scalable parallelization for {ML} computation
  graphs.
\newblock {\em arXiv [cs.DC]}, 2021.

\bibitem{yanardag2015deep}
Pinar Yanardag and SVN Vishwanathan.
\newblock Deep graph kernels.
\newblock In {\em Proceedings of the 21th ACM SIGKDD international conference
  on knowledge discovery and data mining}, pages 1365--1374, 2015.

\bibitem{Yao1979-ci}
Andrew Chi-Chih Yao.
\newblock Some complexity questions related to distributive
  computing(preliminary report).
\newblock In {\em Proceedings of the eleventh annual ACM symposium on Theory of
  computing}, STOC '79, pages 209--213, New York, NY, USA, 1979. Association
  for Computing Machinery.

\bibitem{zhang2018end}
Muhan Zhang, Zhicheng Cui, Marion Neumann, and Yixin Chen.
\newblock An end-to-end deep learning architecture for graph classification.
\newblock In {\em Proceedings of the AAAI conference on artificial
  intelligence}, volume~32, 2018.

\bibitem{zhang2019hierarchical}
Zhen Zhang, Jiajun Bu, Martin Ester, Jianfeng Zhang, Chengwei Yao, Zhi Yu, and
  Can Wang.
\newblock Hierarchical graph pooling with structure learning, 2019.

\bibitem{zhao2019learning}
Qi~Zhao and Yusu Wang.
\newblock Learning metrics for persistence-based summaries and applications for
  graph classification, 2019.

\end{thebibliography}

\newpage
\appendix

	\section{Notation and a very brief review of quantum mechanics} \label{app:notation}
	
	We denote by $\{a_i\}$ a set of elements indexed by $i$, with $1$-based indexing unless otherwise specified, with the maximal value of $i$ explicitly specified when it is not clear from context. $[N]$ denotes the set $\{0, \dots, N-1\}$. The complex conjugate of a number $c$ is denoted by $c^*$, and the conjugate transpose of a complex-valued matrix $A$ by $A^\dagger$.  
	
	We denote by $\ket{\psi}$ a vector of complex numbers $\{\psi_i\}$ representing the state of a quantum system when properly normalized, and by $\bra{\psi}$ its dual (assuming it exists). The inner product between two such vectors of length $N$ is denoted by
	\begin{equation}
		\left\langle \psi|\varphi\right\rangle =\underset{i=0}{\overset{N-1}{\sum}}\psi_{i}^{*}\varphi_{i}.
	\end{equation}
	Denoting by $\ket{i}$ for $i \in [N]$ a basis vector in an orthonormal basis with respect to the above inner product, we can also write
	\begin{equation}
		\left|\psi\right\rangle =\underset{i=0}{\overset{N-1}{\sum}}\psi_{i}\left|i\right\rangle.
	\end{equation}
	Matrices will be denoted by capital letters, and when acting on quantum states will always be unitary. These can be specified in terms of their matrix elements using the Dirac notation defined above, as in 
	\begin{equation}
		A=\underset{ij}{\sum}A_{ij}\left|i\right\rangle \left\langle j\right|.
	\end{equation} 
	Matrix-vector product are specified naturally in this notation by

	Quantum mechanics is, in the simplest possible terms, a theory of probability based on conservation of the $L^2$ norm rather than the standard probability theory based on the $L^1$ norm \cite{Aaronson2017-gr, Nielsen2010-jj}. The state of a pure quantum system is described fully by a complex vector of $N$ numbers known as amplitudes which we denote by $\{\psi_i\}$ where $i \in \{0, \dots, N-1\}$, and is written using Dirac notation as $\ket{\psi}$. The state is normalized so that 
	\begin{equation}
		\left\langle \psi|\psi\right\rangle =\underset{i=0}{\overset{N-1}{\sum}}\psi_{i}^{*}\psi_{i}=\underset{i=0}{\overset{N-1}{\sum}}\left|\psi_{i}\right|^{2}=1,
	\end{equation}
	which is the $L^2$ equivalent of the standard normalization condition of classical probability theory. It is a curious fact that the choice of $L^2$ requires the use of complex rather than real amplitudes, and that no consistent theory can be written in this way for any other $L^p$ norm \cite{Aaronson2017-gr}. The most general state of a quantum system is a probabilistic mixture of pure states, in the sense of the standard $L^1$-based rules of probability. We will not be concerned with these types of states, and so omit their description here, and subsequently whenever quantum states are discussed, the assumption is that they are pure. 
	
	Since any closed quantum system conserves probability, the $L^2$ norm of a quantum state is conserved during the evolution of a quantum state. Consequently, when representing and manipulating quantum states on a quantum computer, the fundamental operation is the application of a unitary matrix to a quantum state. 
	
	Given a quantum system with some discrete degrees of freedom, the number of amplitudes corresponds to the number of possible states of the system, and is thus exponential in the number of degrees of freedom. The simplest such degree of freedom is a binary one, called a qubit, which is analogous to a bit. Thus a state of $\log N$ qubits is described by $N$ complex amplitudes. 
	
	A fundamental property of quantum mechanics is that the amplitudes of a quantum state are not directly measurable. Given a Hermitian operator
	\begin{equation}
		\mathcal{O}=\underset{i=0}{\overset{N-1}{\sum}}\lambda_{i}\left|v_{i}\right\rangle \left\langle v_{i}\right|
	\end{equation}
	with real eigenvalues $\{\lambda_i\}$, a measurement of $\mathcal{O}$ with respect to a state $\ket{\psi}$ gives the result $\lambda_{i}$ with probability $\left|\left\langle v_{i}|\psi\right\rangle \right|^{2}$. The real-valued quantity 
	\begin{equation}
		\left\langle \psi\right|\mathcal{O}\left|\psi\right\rangle =\underset{i=0}{\overset{N-1}{\sum}}\lambda_{i}\left|\left\langle \psi|v_{i}\right\rangle \right|^{2}
	\end{equation}
	is the expectation value of $\cO$ with respect to $\ket{\psi}$, and its value can be estimated by measurements. After a measurement with outcome $\lambda_i$, the original state is destroyed, collapsing to the state $\ket{v_i}$. A consequence of the fundamentally destructive nature of quantum measurement is that simply encoding information in the amplitudes of a quantum state dues not necessarily render it useful for downstream computation. It also implies that operations using amplitude-encoded data such as evaluating a simple loss function incur measurement error, unlike their classical counterparts that are typically limited only by machine precision. 
	The design of quantum algorithms essentially amounts to a careful and intricate design of amplitude manipulations and measurements in order to extract useful information from the amplitudes of a quantum state. For a more complete treatment of these topics see \cite{Nielsen2010-jj}. 
	
	\section{Proofs} \label{app:proofs}
	
	\begin{proof}[Proof of \Cref{lem:inference_ub}] \label{pf:inference_ub}
		
		$\left\langle \varphi \right|\cP_0\left|\varphi \right\rangle $ can be estimated by preparing $\ket{\varphi}$ and measuring it $O(1/\gve^2)$ times. Preparing each copy of  $\ket{\varphi}$ requires $O(L)$ rounds of communication, with each round involving the communication of a $\log N'$-qubit quantum state. Alice first prepares $\ket{\psi(x)}$, and this state is passed back and forth with each player applying $A_\ell$ or $B_\ell$ respectively for $\ell \in \{1, \dots, L\}$. 
		
	\end{proof}
	
	\begin{proof}[Proof of \Cref{lem:dist_gradients_ub}] \label{pf:dist_gradients_ub}
		
		We consider the parameters of the unitaries that Alice possesses first, and an identical argument follows for the parameters of Bob's unitaries.

		We have 
		\begin{equation}
			\begin{aligned} & \frac{\partial}{\partial\theta_{\ell i}^{A}}\left\langle \varphi|\cP_0|\varphi\right\rangle  & = & 2\mathrm{Re}\left\langle \varphi\right|\cP_0\underset{k=L}{\overset{\ell+1}{\prod}}A_{k}B_{k}\frac{\partial A_{\ell}}{\partial\theta_{\ell i}^{A}}B_{\ell}\underset{k=\ell-1}{\overset{1}{\prod}}A_{k}B_{k}\left|\psi(x)\right\rangle \\
				&  & \equiv & 2\mathrm{Re}\left\langle \nu_{\ell i}^{A}|\mu^{A}_\ell\right\rangle 
			\end{aligned}
		\end{equation}
		where 
		
		\begin{equation}
			\begin{aligned} & \left|\mu^{A}_\ell\right\rangle  & = & B_{\ell}\underset{k=\ell-1}{\overset{1}{\prod}}A_{k}B_{k}\left|\psi(x)\right\rangle , &  & \left|\nu_{\ell i}^{A}\right\rangle  & = & \left(\frac{\partial A_{\ell}}{\partial\theta_{\ell i}^{A}}\right)^{\dagger}\underset{k=L}{\overset{\ell+1}{\prod}}B_{k}^{\dagger}A_{k}^{\dagger}\cP_{0}\left|\varphi\right\rangle ,\end{aligned}
		\end{equation}
		correspond to forward and backward features for the $i$-the parameter of $A_\ell$ respectively. This is illustrated graphically in \Cref{fig:dist_comp}. We also write 
		\begin{equation}
			\left|\nu_{\ell 0}^{A}\right\rangle =\underset{k=L}{\overset{\ell+1}{\prod}}B_{k}^{\dagger}A_{k}^{\dagger}\cP_0 \left|\varphi\right\rangle.
		\end{equation}
		Attaching an ancilla qubit denoted by $a$ to the feature states defined above, we define 
		\begin{equation}
			\left|\psi_{\ell i}^{A}\right\rangle \equiv \frac{1}{\sqrt{2}}\left(\left|0\right\rangle \left|\mu^{A}_\ell\right\rangle +\left|1\right\rangle \left|\nu_{\ell i}^{A}\right\rangle \right),
		\end{equation}
		and a Hermitian measurement operator 
		\begin{equation}
			\begin{aligned}E_{\ell i}^{A} & \equiv &  & \left(\left|0\right\rangle \left\langle 0\right|\otimes I+\left|1\right\rangle \left\langle 1\right|\otimes\left(\frac{\partial A_{\ell}}{\partial\theta_{\ell i}^{A}}\right)\right)X_{a}\left(\left|0\right\rangle \left\langle 0\right|\otimes I+\left|1\right\rangle \left\langle 1\right|\otimes\left(\frac{\partial A_{\ell}}{\partial\theta_{\ell i}^{A}}\right)^{\dagger}\right)\\
				& = &  & \left|1\right\rangle \left\langle 0\right|\otimes\left(\frac{\partial A_{\ell}}{\partial\theta_{\ell i}^{A}}\right)+\left|0\right\rangle \left\langle 1\right|\otimes\left(\frac{\partial A_{\ell}}{\partial\theta_{\ell i}^{A}}\right)^{\dagger},
			\end{aligned}
		\end{equation}
		we then have 
		\begin{equation}
			\begin{aligned} & \left\langle \psi_{\ell0}^{A}\right|E^A_{\ell i}\left|\psi_{\ell0}^{A}\right\rangle  & = & \left\langle \psi_{\ell i}^{A}\right|X_{a}\left|\psi_{\ell i}^{A}\right\rangle \\
				&  & = & \frac{\partial}{\partial\theta_{\ell i}^{A}}\left\langle \varphi\right|\cP_0\left|\varphi\right\rangle ,
			\end{aligned}
		\end{equation}
		where $X_a$ acts on the ancilla.
		
		Note that $\ket{\psi^{A}_{\ell0} }^{\otimes k}$ can be prepared by Alice first preparing $(\ket{+}\ket{\psi(x)})^{\otimes k}$ and sending this state back and forth at most $2L$ times, with each player applying the appropriate unitaries conditioned on the value of the ancilla. Additionally, for any choice of $\ell$ and any $i$, Alice has full knowledge of the $E_{\ell i}^{A}$. They can thus be applied to quantum states and classical hypothesis states without requiring any communication. 
		
		The gradient can then be estimated using shadow tomography (\Cref{thm:shadow_tomography}). 
		Specifically, for each $\ell$, Alice prepares $\tilde{O}(\log^2 P \log N \log(L/\delta)/ \gve^4)$ copies of $\ket{\psi^{A_\ell}_{ 0}}$, which requires $O(L)$ rounds of communication, each of $\tilde{O}(\log^2 P \log^2 N \log(L/\delta)/ \gve^4)$ qubits. 
		She then runs shadow tomography to estimate $\nabla_{A_{\ell}}\left\langle \varphi|Z_0|\varphi\right\rangle$ up to error $\gve$ with no additional communication. Bob does the same to estimate $\nabla_{B_{\ell}}\left\langle \varphi|Z_0|\varphi\right\rangle$. In total $O(L^2)$ rounds are needed to estimate the full gradient. The success probability of all $L$ applications of shadow tomography is at least $1 - \delta$ by a union bound. 
		
		Based on the results of \cite{Brandao2017-lr}, the space and time complexity of each application of shadow tomography is $\sqrt{P}\poly(N, \log P, \gve^{-1}, \log(1/\delta))$. This is the query complexity of the algorithm to oracles that implement the measurement operators $\left\{ E^Q_{\ell i}\right\} $. Instantiating these oracles will incur a cost of at most $O(N^2)$. In cases where these operators have low rank the query complexity complexity will depend polynomially only on the rank instead of on $N$.

	\end{proof}
	
	\begin{proof}[Proof of \Cref{lem:dist_classical_lb}] \label{pf:dist_classical_lb}
		We first prove an $\Omega(\sqrt{N})$ lower bound on the amount of classical communication. Consider the following problem: 
		\begin{problem}[\cite{Raz1999-ic}] \label{prob:raz}
			Alice is given a vector $x \in S^{N-1}$ and two orthogonal linear subspaces of $\bR^N$ each of dimension $N/2$, denoted $M_1, M_2$. Bob is given an orthogonal matrix $O$. Under the promise that either $\left\Vert M_1Ox\right\Vert _{2}\geq\sqrt{1-\theta^{2}}$ or $\left\Vert M_2Ox\right\Vert _{2}\geq\sqrt{1-\theta^{2}}$ for $0 < \theta < 1/\sqrt{2}$, Alice and Bob must determine which of the two cases holds.
		\end{problem}
		Ref. \cite{Raz1999-ic} showed that the randomized\footnote{In this setting Alice and Bob can share an arbitrary number of random bits that are independent of their inputs.} classical communication complexity of the problem is $\Omega(\sqrt{N})$.
		
		The reduction from \Cref{prob:raz} to \Cref{prob:dist_inference} is obtained by choosing $\theta = 1/2$ and simply setting $L=1,B_1=O, \ket{\psi(x)} = \ket{x},\cP_0=Z_0$, and
		\begin{equation}
			A_{1}=\underset{j=0}{\overset{N/2-1}{\sum}}\left|0\right\rangle \left|j\right\rangle \left\langle v_{j}^{1}\right|+\underset{j=0}{\overset{N/2-1}{\sum}}\left|1\right\rangle \left|j\right\rangle \left\langle v_{j}^{2}\right|,
		\end{equation}
		where the first register contains a single qubit and $\left\{ v_{j}^{k}\right\} $ form an orthonormal basis of $M_k$, and picking any $\gve < 1/2$. Note that this choice of $\ket{\psi(x)}$ implies $N'=N$.
		Estimating $\cL$ to this accuracy now solves the desired problem since $\mathcal{L}=\left\langle x\right|O^{T}\left(\Pi_{1}-\Pi_{2}\right)O\left|x\right\rangle $ where $\Pi_k$ is a projector onto $M_k$, and hence estimating this quantity up to error $1/2$ allows Alice and Bob to determine which subspace has large overlap with $Ox$. 
		
		The reduction from \Cref{prob:raz} to \Cref{prob:dist_grad_estimation} is obtained by setting $L=2$, picking $\ket{\psi(x)},A_1,B_1$ as before, and additionally $B_2=I, A_2=e^{-i \theta^{A}_{2,1} X_0 / 2}$ initialized at $\theta^{A}_{2,1}=-\pi/2$. By the parameter shift rule \cite{Crooks2019-mz}, we have that if $U=e^{-i\theta \cP/2}$ for some Pauli matrix $\cP$, and $U$ is part of the parameterized circuit that defines $\ket{\varphi}$, then 
		\begin{equation}
			\frac{\partial\mathcal{L}}{\partial\theta}=\frac{1}{2}(\mathcal{L}(\theta+\frac{\pi}{2})-\mathcal{L}(\theta-\frac{\pi}{2})).
		\end{equation}
		It follows that 
		\begin{equation} \label{eq:gradient_redution}
			\begin{aligned} & \left.\frac{\partial\mathcal{L}}{\partial\theta_{2,1}^{A}}\right|_{\theta_{2,1}^{A}=-\pi/2} & = & \frac{1}{2}\left(\mathcal{L}(0)-\mathcal{L}(-\pi)\right)\\
				&  & = & \frac{1}{2}\left(\mathcal{L}(0)-\left\langle x\right|B_{1}^{\dagger}A_{1}^{\dagger}e^{-i\frac{\pi}{2}X_{0}}Z_{0}e^{i\frac{\pi}{2}X_{0}}A_{1}B_{1}\left|x\right\rangle \right)\\
				&  & = & \frac{1}{2}\left(\mathcal{L}(0)-\left\langle x\right|B_{1}^{\dagger}A_{1}^{\dagger}X_{0}Z_{0}X_{0}A_{1}B_{1}\left|x\right\rangle \right)\\
				&  & = & \frac{1}{2}\left(\mathcal{L}(0)+\left\langle x\right|B_{1}^{\dagger}A_{1}^{\dagger}Z_{0}A_{1}B_{1}\left|x\right\rangle \right)\\
				&  & = & \mathcal{L}(0).
			\end{aligned}
		\end{equation}
		Estimating $\nabla_{A} \left\langle \varphi \right|Z_0\left|\varphi\right\rangle$ to accuracy $\gve < 2$ allows one to determine the sign of $\cL(0)$, which as before gives the solution to \Cref{prob:raz}. 
		
		Next, we show that $\Omega(L)$ rounds are necessary in both the quantum and classical setting by a reduction from the bit version of pointer-chasing, as studied in \cite{Jain2002-kw, Ponzio2001-wr}. 
		\begin{problem}[Pointer-chasing, bit version]
			Alice receives a function $f_A:[N] \rightarrow [N]$ and Bob receives a function $f_B:[N] \rightarrow [N]$. Alice is also given a starting point $x \in [N]$, and both receive an integer $L_0$. Their goal is to compute the least significant bit of $f^{(L_{0})}(x)$, where $f^{(1)}(x)=f_B(x),f^{(2)}(x)=f_A(f_B(x)),\dots$. 
		\end{problem}
		
		Ref. ~\cite{Jain2002-kw} show that the quantum communication complexity of $L_0$-round bit pointer-chasing when Bob speaks first is $\Omega(N/L_0^4)$ (which holds for classical communication as well). This also bounds the $(L_0-1)$-round complexity when Alice speaks first (since such a protocol is strictly less powerful given that there are fewer rounds of communication). On the other hand, there is a trivial $L_0$-round protocol when Alice speaks first that requires $\log N$ bits of communication per round, in which Alice sends Bob $x$, he sends back $f^{(1)}(x)$, she replies with $f^{(2)}(f^{(1)}(x))$, and so forth.
		This, combined with the lower bound, implies as exponential separation in communication complexity as a function of the number of rounds. 
		
		To reduce this problem to \Cref{prob:dist_inference}, we assume $f_A,f_B$ are invertible. This should not make the problem any easier since it implies that $f_A,f_B$ have the largest possible image. In this setting, $f_A,f_B$ can be described by unitary permutation matrices: 
		\begin{equation}
			U_{A}=\underset{i}{\sum}\left|f_{A}(i)\right\rangle \left\langle i\right|,U_{B}=\underset{i}{\sum}\left|f_{B}(i)\right\rangle \left\langle i\right|.
		\end{equation}
		The corresponding circuit \cref{eq:distributed_model} is then given by 
		\begin{equation} \label{eq:phi_pc}
			\left|\varphi\right\rangle =\mathrm{SWAP}_{0\leftrightarrow\log N-1}U_{B}\dots U_{A}U_{B}\left|x\right\rangle 
		\end{equation}
		in the case where Bob applies the function last, with an analogous circuit in the converse situation (if Bob performed the swap, Alice applies an additional identity map). Estimating $Z_0$ to accuracy $\gve < 1$ using this state will then reveal the least significant bit of $f^{(L_{0})}(x)$. This gives a circuit with $L$ layers, where $L_{0}\leq2L-1$. Thus any protocol with less than $L_0$ rounds (meaning less than $2L-1$ rounds) would require communicating $\Omega(N/L_0^4)=\Omega(N/L^4)$ qubits, since the converse will contradict the results of \cite{Jain2002-kw}.  
		The reduction to \Cref{prob:dist_grad_estimation} is along similar lines to the one described by \cref{eq:gradient_redution}, with the state in that circuit replaced by \cref{eq:phi_pc}. This requires at most two additional rounds of communication. 
		
		Since quantum communication is at least as powerful than classical communication, these bounds also hold for classical communication. Since each round involves communicating at least a single bit, this gives an $\Omega(L)$ bound on the classical communication complexity. 
	\end{proof}

\begin{proof}[Proof of \Cref{lem:QGNI_classical_lb}] \label{pf:QGNI_classical_lb}
    The proof is based on a reduction from the $f$-Boolean` Hidden Partition problem ($f-\mathrm{BHP}_{N,t}$) studied in \cite{Doriguello2020-iv}. This is defined as follows:
    \begin{problem}[Boolean Hidden Partition \cite{Doriguello2020-iv} ($f-\mathrm{BHP}_{N,t}$)]
        Assume $t$ divides $N$. Alice is given $x \in \{-1,1\}^N$. Bob is given a permutation $\Pi$ over $[N]$, a boolean function $f:\{-1,1\}^t \rightarrow \{-1,1\}$, and a vector $v \in \{-1,1\}^{N/t}$. We are guaranteed that for any $k \in \{1, \dots, N/t\}$, 
        \begin{equation}
        f([\Pi x]_{[(k-1)t+1:kt]}) * v_k = s
        \end{equation}
        for some $s \in \{-1,1\}$. Their goal is to determine the value of $s$. 
    \end{problem}
    
    A polynomial $p_f: \{-1,1\}^t \rightarrow \bR$ is said to sign-represent a boolean function $f$ if $\mathrm{sign}(p_f(y))=f(y)$ for all $y \in \{0,1\}^t$. The \textit{sign-degree} of $f$ ($\mathrm{sdeg}(f)$) is the minimal degree of a polynomial that sign-represents it. In the special case $\mathrm{sdeg}(f)=2$, $f-\mathrm{BHP}_{N,t}$ can be solved with exponential quantum communication advantage \cite{Doriguello2020-iv}. For a vector $y \in \{0,1\}^t$, define $\tilde{y}=(1,y_1,\dots, y_t)$. It is also known that if $\mathrm{sdeg}(f)=2$, then there exists a sign-representing polynomial $p_f$ that can be written as  
    \begin{equation} \label{eq:quadratic_f}
        p_f(y)=\tilde{y}^T R \tilde{y}
    \end{equation}
    for some matrix real $R$ \cite{Aaronson2016-uk}. Moreover, for any $f$ there exists such a $p_f$ with $\max_{x\in \{-1,1\}^t}|p_f(x)| \leq 3$. We denote by $\beta = \min_{x\in \{-1,1\}^t}|p_f(x)|$ the \textit{bias} of $p_f$.

    We now describe a reduction from $f-\mathrm{BHP}_{N,t}$ with $\mathrm{sdeg}(f)=2$ to $\mathrm{QGNI}_{CN,t}$ for some constant $1 \leq C \leq 3/2$. As is typical in communication complexity, the parties are allowed to exchange bits that are independent of the problem input, and these are not counted when measuring the communication complexity of a protocol that depends on the inputs. Before receiving their inputs, Alice thus sends two orthogonal vectors $u_0,u_1$ of length $D_0$ to Bob, with each entry described by $K$ bits \footnote{Since $D_0$ is arbitrary and in particular independent of $N$, even if we count this communication it will not affect the scaling with $N$ which the main property we are interested in. This independence is also natural since it implies that the number of local graph features is independent of the size of the graph.}.
    
    Assume Alice and Bob are given an instance of $\mathrm{BHP}_{N,t}$. They use it to construct an instance of $\mathrm{QGNI}_{(t+1)N/t,t}$ with $D_1=1$. Alice constructs $X \in \bR^{(t+1)N/t}$ by picking the rows $X_i$ according to 
    \begin{equation} \label{eq:X_lb}
X_{i}=\begin{cases}
\frac{1}{\sqrt{(t+1)N/t}}\left(\frac{1-x_{i}}{2}u_{0}^{T}+\frac{1+x_{i}}{2}u_{1}^{T}\right) & i\leq N\\
\frac{1}{\sqrt{(t+1)N/t}}u_{1} & i>N
\end{cases}.
    \end{equation}
    Note that with this definition $||X||_F=1$. Bob defines a permutation $\pi'$ over $[(t+1)N/t]$ by 
    \begin{equation}
        \pi'(i) = \begin{cases}
			\left\lfloor i/t\right\rfloor(t+1) + i \% t + 1 & i \leq N \\
            (i-N - 1)(t+1) + 1 & i > N
		 \end{cases},
    \end{equation}
    denoting the corresponding permutation matrix $\Pi'$. Define by $\overline{x}$ the concatenation of Alice's input $x$ with $1^{(t+1)N/t}$.
    The purpose of this permutation is that $\Pi'\Pi \overline{x} \equiv \tilde{x}$ will be a concatenation of $N/t$ vectors of length $t+1$, with the $i$-th vector equal to $(1, [\Pi x]_{t (i -1) + 1}, \dots, [\Pi x]_{ti}) \equiv \tilde{x}_{(i)}$. 
    
    Note that we can assume wlog that $R$ in \cref{eq:quadratic_f} is symmetric since $p_f$ is independent of its anti-symmetric part. It can thus be diagonalized by an orthogonal matrix $U$, and denoting the diagonal matrix of its real eigenvalues by $D$, we define a (complex-valued) matrix $S=U\sqrt{D}$ that satisfies $R = SS^T$. Bob therefore defines his model by
    \begin{equation} \label{eq:AW_lb}
        A= (I_{N/t} \otimes S^T)\Pi' \Pi, \quad W_1=u_1 - u_0, \quad  W_2 = v^T.
    \end{equation}
    Additionally, he picks the pooling operator $\cP:\bR^{(t+1)N/t} \rightarrow \bR^{N/t}$ to be sum pooling with window size $t + 1$ (i.e. $\cP(x)_j = \sum_{k=(j-1)(t+1)+1}^{j(t+1)} x_k$). Bob also uses a simple quadratic nonlinearity by choosing $a=1, b=c=0$ in \cref{eq:quadratic_nonlinearity}. 

    To see that solving $\mathrm{QGNI}_{(t+1)N/t,t}$ to error $\gve < 1/2$ indeed provides a solution to $\mathrm{BHP}_{N,t}$, note that 
\begin{align}\mathcal{P}\left(\sigma(AXW_{1})\right)_{i} & =\mathcal{P}\left(\sigma(\frac{1}{\sqrt{(t+1)N/t}}A\overline{x})\right)_{i}\\
 & =\mathcal{P}\left(\sigma(\frac{1}{\sqrt{(t+1)N/t}}(I_{N/t}\otimes S^{T})\Pi'\Pi\overline{x})\right)_{i}\\
 & =\mathcal{P}\left(\sigma(\frac{1}{\sqrt{(t+1)N/t}}(I_{N/t}\otimes S^{T})\tilde{x})\right)_{i}\\
 & =\frac{1}{(t+1)N/t}\sum_{j=1}^{t+1}([S^{T}\tilde{x}_{(i)}]_{j})^{2}\\
 & =\frac{1}{(t+1)N/t}\tilde{x}_{(i)}^{T}SS^{T}\tilde{x}_{(i)}\\
 & =\frac{1}{(t+1)N/t}p_{f}([\Pi x]_{[(i-1)t+1:it]}).
\end{align}
Given the choice of $W_2$, one obtains
\begin{equation}
    \varphi(X/||X||_F)=\frac{1}{(t+1)N/t}\sum_{i=1}^{N/t} p_f([\Pi x]_{[(i-1)t+1:it]}) v_i.
\end{equation}
       It follows that $\mathrm{sign}(\varphi(X)) = s$ and $|\varphi(X)| \geq \beta$. It is thus possible to decide the value of $s$ if $\varphi(X/||X||_F)$ is estimated to some error smaller than $\beta$.

From Theorem 4 of \cite{Doriguello2020-iv}, we have $R^\rightarrow(f-\mathrm{BHP}_{N,t}) = \Omega(\sqrt{N/t})$ for any $f$ that has sign-degree $2$ and satisfies some additional conditions. The reduction then implies
\begin{equation} \label{eq:f_sym_cond}
    R_\beta^\rightarrow(\mathrm{QGNI}_{N,t}) = \Omega(\sqrt{(t/(t+1))N/t}).
\end{equation}
This can be simplified by noting that since $t \geq 2$, $t/(t+1) \geq 2/3$. The lower bound in \cite{Doriguello2020-iv} is based on choosing $f$ which belongs to a specific class of symmetric boolean functions (meaning $f(y)=\tilde{f}(|y|)$ where $|y|=|\{i:y_i=-1\}|$). Specifically, $\tilde{f}$ is defined by the choice of $t$ and two additional integer parameters $\theta_1, \theta_2$ such that $0 \leq \theta_1 < \theta_2 < t$ and 
\begin{equation}
        \tilde{f}(|y|)=\begin{cases}
1 & 0\leq|y|\leq\theta_{1}\ \text{or}\ \theta_{2}<|y|,\\
-1 & \theta_{1}<|y|\leq\theta_{2},
\end{cases}
    \end{equation}
    (and an additional technical condition that will not be of relevance to our analysis). 
    
We next construct a sign-representing polynomial $p_f$ for any $f$ that takes the above form, and compute its bias $\beta$. Since $f$ is symmetric of sign degree $2$, it suffices to construct a polynomial $\tilde{p}_f: \bR \rightarrow \bR$ such that $p_f(y)=\tilde{p}_f(|y|)$ for this purpose. If we can produce some $\beta'$ that bounds $\beta$ from below for any choice of $t,\theta_1,\theta_2$, then the lower bound from Theorem 4 of \cite{Doriguello2020-iv} holds for any error smaller than $\beta'$. 

We choose $\tilde{p}_f(z) = \tilde{a}z^2+\tilde{b}z+1$, with the constraints $\tilde{p}_f(\theta_1+1/2)=0, \tilde{p}_f(\theta_2+1/2)=0$. These lead to the solution 
\begin{equation} \label{eq:poly_S}
    \tilde{p}_f(z)=\frac{1}{\theta_{1}^{+}\theta_{2}^{+}}z^{2}-\frac{1}{\theta_{1}^{+}\theta_{2}^{+}}\frac{\theta_{2}^{+2}-\theta_{1}^{+2}}{\theta_{2}^{+}-\theta_{1}^{+}}z+1.
\end{equation}
Since this is a quadratic function with known roots that is only evaluated at integer inputs, if we want to bound the bias of $p_f$ it suffices to check the values of $\tilde{p}_f$ at the integers closest to the roots, namely $\{\theta_1,\theta_1 +1 , \theta_2, \theta_2 + 1\}$. Plugging in these values gives 
\begin{equation}
    \begin{aligned} & \tilde{p}_f(\theta_{1}) & = & 1-\frac{\theta_{2}-1}{(1+\frac{1}{2\theta_{1}})(\theta_{2}+\frac{1}{2})}\\
 &  & \geq & 1-\frac{1}{1+\frac{1}{2\theta_{1}}}\\
 &  & \geq & \frac{1}{4\theta_{1}}\\
 &  & \geq & \frac{1}{4t},
\end{aligned}
\end{equation}
where in the third line we used $\frac{1}{1+x}\leq1-x/2$ which holds for $0 \leq x \leq 1$. Using $\theta_2 \leq \theta_1+1$, we also have 
\begin{equation}
    \begin{aligned} & \tilde{p}_f(\theta_{1}+1) & = & 1-\frac{\theta_{1}+1}{\left(\theta_{1}+\frac{1}{2}\right)\left(1+\frac{1}{2\theta_{2}}\right)}\\
 &  & \leq & 1-\frac{\theta_{1}+1}{\left(\theta_{1}+\frac{1}{2}\right)\left(1+\frac{1}{2\theta_{1}+2}\right)}\\
 &  & = & -\frac{1}{4\left(\theta_{1}+\frac{1}{2}\right)\left(\theta_{1}+\frac{3}{2}\right)}\\
 &  & \leq & -\frac{1}{4\left(t+\frac{1}{2}\right)\left(t+\frac{3}{2}\right)}.
\end{aligned}
\end{equation}
$\tilde{p}_f(\theta_{2})$ takes the same value as $\tilde{p}_f(\theta_{1}+1)$. Similarly, 
\begin{equation}
    \begin{aligned} & \tilde{p}_f(\theta_{2}+1) & = & 1-\frac{\theta_{2}+1}{(\theta_{2}+\frac{1}{2})(1+\frac{1}{2\theta_{1}})}\\
 &  & \geq & 1-\frac{\theta_{2}+1}{(\theta_{2}+\frac{1}{2})(1+\frac{1}{2\theta_{2}-2})}\\
 &  & = & \frac{2\theta_{2}^{2}-\frac{5}{4}}{\theta_{2}^{2}-\frac{1}{4}}.
\end{aligned}
\end{equation}
For $\theta_2 \leq 1$ this is a monotonically increasing function of $\theta_2$, and is thus lower bounded by picking $\theta_2=1$, giving $\tilde{p}_f(2) \geq 1$. It follows that for any choice of $t,\theta_1, \theta_2$, the bias is bounded from below by 
\begin{equation}
    \beta' = \frac{1}{4\left(t+\frac{1}{2}\right)\left(t+\frac{3}{2}\right)}.
\end{equation}

Note that our bound on the bias allows us to use the reduction from $f-\mathrm{BHP}_{N,t}$ to $\mathrm{QGNI}_{(t+1)N/t,t}$ for any valid choice of $f$ (satisfying \cref{eq:f_sym_cond}).

\end{proof}

\begin{proof}[Proof of \Cref{lem:QGNI_quantum_ub}] \label{pf:QGNI_quantum_ub}
    Alice encodes her input in the quantum state 
    \begin{equation}
\tilde{\left|X\right\rangle} _{0}\equiv\frac{1}{\sqrt{2}\left\Vert X\right\Vert _{F}}\left|0\right\rangle \underset{i=0}{\overset{N-1}{\sum}}\underset{j=0}{\overset{D_{0}-1}{\sum}}X_{ij}\left|i,j\right\rangle +\frac{1}{\sqrt{2}}\left|1\right\rangle \left|0^{\otimes N},0^{\otimes D_{0}}\right\rangle 
    \end{equation}
    over $\log(N D_0) + 1$ qubits. She sends this state to Bob. Define $D=\max\{D_0, D_1\}$. Bob augments this state by attaching zero qubits and, reordering the first two qubits, obtains the state
    \begin{equation}
\begin{aligned} & \tilde{\left|X\right\rangle } & \equiv & \frac{1}{\sqrt{2}\left\Vert X\right\Vert _{F}}\left|0\right\rangle \left|0\right\rangle \underset{i=0}{\overset{N-1}{\sum}}\underset{j=0}{\overset{D_{0}-1}{\sum}}X_{ij}\left|i,j,0^{\otimes(D-D_{0})}\right\rangle +\frac{1}{\sqrt{2}}\left|1\right\rangle \left|0\right\rangle \left|0^{\otimes N},0^{\otimes D}\right\rangle \\
 &  & \equiv & \frac{1}{\sqrt{2}}\left|0\right\rangle \left|0\right\rangle \left|\overline{X}\right\rangle +\frac{1}{\sqrt{2}}\left|1\right\rangle \left|0\right\rangle \left|0^{\otimes N},0^{\otimes D}\right\rangle 
\end{aligned}
    \end{equation}
    over $\log (ND)+2$ qubits.
    
    Define by $\overline{W}_1$ the $D\times D$ matrix obtained by appending zero rows or columns to the rectangular matrix $W_1$ to obtain a square matrix, and denote $\alpha=\left\Vert A\otimes\overline{W}_{1}\right\Vert $. Bob prepares an $(\alpha, 1, 0)$-block-encoding of $A\otimes\overline{W}_{1}$, denoted $U_{A\otimes\overline{W}_{1}}$, which acts on $\log(ND)+1$ qubits. Bob then applies this unitary conditioned on the value of the first qubit, giving
    \begin{equation}
\begin{aligned} & \left(\left|0\right\rangle \left\langle 0\right|U_{A\otimes\overline{W}_{1}}+\left|1\right\rangle \left\langle 1\right|\right)\tilde{\left|X\right\rangle}  & = & \frac{1}{\sqrt{2}}\left|0\right\rangle U_{A\otimes\overline{W}_{1}}\left|0\right\rangle \left|\overline{X}\right\rangle +\frac{1}{\sqrt{2}}\left|1\right\rangle \left|0\right\rangle \left|0^{\otimes N},0^{\otimes D}\right\rangle \\
 &  & = & \frac{1}{\sqrt{2}}\left|0\right\rangle \left(\frac{1}{\alpha}\left|0\right\rangle A\otimes\overline{W}_{1}\left|\overline{X}\right\rangle +\left|1\right\rangle \left|g\right\rangle \right)+\frac{1}{\sqrt{2}}\left|1\right\rangle \left|0\right\rangle \left|0^{\otimes N},0^{\otimes D}\right\rangle \\
 &  & \equiv & \frac{1}{\sqrt{2}}\left|0\right\rangle \left(\frac{1}{\alpha}\left|0\right\rangle \left|\overline{AXW_{1}}\right\rangle +\left|1\right\rangle \left|g\right\rangle \right)+\frac{1}{\sqrt{2}}\left|1\right\rangle \left|0\right\rangle \left|0^{\otimes N},0^{\otimes D}\right\rangle \\
 &  & \equiv & \left|\psi\right\rangle 
\end{aligned}
    \end{equation}
    where $\ket{g}$ is an unnormalized garbage state. Above, $\overline{AXW_{1}}$ is an $N \times D$ matrix obtained by adding zero columns to $W_1$ as needed. 
    
    The sum pooling operator $\cP$ can be implemented by multiplication by an $N/t \times N$ matrix which we denote by $\tilde{P}$. Define by $\overline{W_2}$ the $D \times N/t$ matrix obtained by appending zero rows to $W_2$ if needed. Given a matrix $M$ of size $N_1 \times N_2$, denote by $V[M]$ the vectorization of $M$. Bob then constructs the Hermitian matrix 
    \begin{equation}
\cO=\left(\begin{array}{cc}
2a\alpha^{2}\left|0\right\rangle \left\langle 0\right|\otimes\mathrm{diag}(V[\overline{W_{2}}\tilde{P}]) & b\alpha\left|0\right\rangle \left\langle 0\right|\otimes V[\overline{W_{2}}\tilde{P}]\bra{0^{\otimes N},0^{\otimes D}}\\
b\alpha\left|0\right\rangle \left\langle 0\right|\otimes\ket{0^{\otimes N},0^{\otimes D}}V[\overline{W_{2}}\tilde{P}]^{\dagger} & 0
\end{array}\right).
    \end{equation}
    It follows that 
    \begin{equation}
        \begin{aligned} & \left\langle \psi|\cO|\psi\right\rangle +c\mathrm{tr}\left[W_{2}P1^{N\times D_{1}}\right] & = & \frac{a}{\left\Vert X\right\Vert _{F}^{2}}\mathrm{tr}\left[W_{2}\tilde{P}\left(AXW_{1}\right)^{2}\right]+\frac{b}{\left\Vert X\right\Vert _{F}}\mathrm{tr}\left[W_{2}PAXW_{1}\right]+c\mathrm{tr}\left[W_{2}P1^{N\times D_{1}}\right]\\
 &  & = & \mathrm{tr}\left[W_{2}\tilde{P}\sigma(A\frac{X}{\left\Vert X\right\Vert _{F}}W_{1})\right]\\
 &  & = & \varphi(X/\left\Vert X\right\Vert _{F}),
\end{aligned}
    \end{equation}
    where $1^{N \times D_1}$ is an all ones matrix. The last term on the RHS is independent of $X$ and can be computed by Bob without requiring Alice's message. 
    Estimating $\left\langle \psi|O|\psi\right\rangle $ to accuracy $\gve$ requires $O(\left\Vert \mathcal{O}\right\Vert /\varepsilon)$ measurements. Since 
    \begin{equation}
\begin{aligned} & \left\Vert \mathcal{O}\right\Vert  & \leq & \left\Vert 2a\alpha^{2}\left|0\right\rangle \left\langle 0\right|\otimes\mathrm{diag}(V[\overline{W_{2}}\tilde{P}])\right\Vert +2\left\Vert b\alpha\left|0\right\rangle \left\langle 0\right|\otimes V[\overline{W_{2}}\tilde{P}]\bra{0^{\otimes N},0^{\otimes D}}\right\Vert \\
 &  & \leq & 2(\left|a\right|\alpha^{2}+\left|b\right|\alpha)\left\Vert W_{2}\tilde{P}\right\Vert _{\infty},
\end{aligned}
    \end{equation}
    Bob requires $O((\left|a\right| \alpha^2+\left|b\right|\alpha)\left\Vert W_{2}\tilde{P}\right\Vert _{\infty}/\varepsilon)$ copies of Alice's state in order to do this.
\end{proof}

\begin{proof}[Proof of \cref{lem:GNN_qa}] \label{pf:GNN_qa}
    For the parameter choices used to obtain the classical lower bound (\cref{eq:X_lb} and \cref{eq:AW_lb}), we have $||W_1|| = 1, ||W_2P||_\infty \leq t$. Additionally, for the polynomials constructed in \cref{eq:poly_S}, we have $|p_f(y)| \leq C t^2$ from which it follows that the matrix $R$ used in the matrix representation of $p_f$ has constant operator norm $C$, and thus $\left\Vert A\right\Vert =\left\Vert S\right\Vert =\sqrt{C}$. We also have $a=1, b=c=0$ for the nonlinearity used (\cref{eq:quadratic_nonlinearity}), and it thus follows from \Cref{lem:QGNI_quantum_ub} that $Q_{\gve}^{\rightarrow}(\mathrm{QGNI}_{N,t})=O(t^{3}\log(ND_{0}))$ for $\gve \leq \frac{1}{4\left(t+\frac{1}{2}\right)\left(t+\frac{3}{2}\right)}$. With this choice of $\gve$, the classical lower bound in \Cref{lem:QGNI_classical_lb} holds, and thus an exponential advantage in communication is obtained by using quantum communication. 
\end{proof}

	\begin{proof}[Proof of \Cref{lem:expressivity}] \label{pf:expressivity}
		Consider first a single variable $z$, with data-dependent unitaries given by \cref{eq:phase_features_lz}. If $\{\lambda_{\ell i} \}$ are chosen i.i.d. from a uniform distribution over say $[0,1]$, then with probability $1$ they are all unique and so are all sums of the form $\Lambda_{\overline{j}}=\underset{\ell=1}{\overset{L}{\sum}}\lambda_{\ell j_{\ell}}$ as well as differences $\Lambda_{\overline{j}} - \Lambda_{\overline{k}}$ for $\overline{k} < \overline{j}$ where the inequality holds element-wise. 
		Set $B_{\ell}$ to be the Hadamard transform over $\log N'$ qubits for all $\ell$, and pick the measurement operator $\cP_0=X_0$. 
		We then have 
		\begin{equation}
			\begin{aligned} & \mathcal{L}_{1} & = & \left\langle \varphi\right|X_{0}\left|\varphi\right\rangle \\
				&  & = & \underset{\overline{j},\overline{k}\in[N']^{L}}{\sum}e^{2\pi i(\Lambda_{\overline{j}}-\Lambda_{\overline{k}})z}\left(B_{1}^{\dagger}\right)_{1j_{1}}\left(B_{2}^{\dagger}\right)_{j_{1}j_{2}}\dots\left(B_{L}^{\dagger}\right)_{j_{L-1}j_{L}}\left(X_{0}\right)_{j_{L}k_{L}}\left(B_{L}\right)_{k_{L}k_{L-1}}\dots\left(B_{1}\right)_{k_{1}1}\\
				&  & = & \underset{\overline{j},\overline{k}\in[N']^{L},\overline{j}\neq\overline{k}}{\sum}e^{2\pi i(\Lambda_{\overline{j}}-\Lambda_{\overline{k}})z}\left(B_{1}^{\dagger}\right)_{1j_{1}}\left(B_{2}^{\dagger}\right)_{j_{1}j_{2}}\dots\left(B_{L}^{\dagger}\right)_{j_{L-1}j_{L}}\left(X_{0}\right)_{j_{L}k_{L}}\left(B_{L}\right)_{k_{L}k_{L-1}}\dots\left(B_{1}\right)_{k_{1}1}\\
				&  & = & \underset{\overline{j}\in[N']^{L}}{\sum}\underset{\overline{k}<\overline{j}}{\sum}2\cos\left(2\pi(\Lambda_{\overline{j}}-\Lambda_{\overline{k}})z\right)\left(B_{1}^{\dagger}\right)_{1j_{1}}\dots\left(B_{L}^{\dagger}\right)_{j_{L-1}j_{L}}\left(X_{0}\right)_{j_{L}k_{L}}\dots\left(B_{1}\right)_{k_{1}1}\\
				&  & = & \underset{\overline{j}[:-1]\in[N']^{L-1}}{\sum}\underset{\overline{k}[:-1]<\overline{j}[:-1]}{\sum}\underset{j_{L}=1}{\overset{N'}{\sum}}\begin{array}{c}
					2\cos\left(2\pi(\Lambda_{\overline{j}[:-1]}-\Lambda_{\overline{k}[:-1]}+\lambda_{Lj_{L}}-\lambda_{L\tilde{j}_{L}})z\right)\\
					*\left(B_{1}^{\dagger}\right)_{1j_{1}}\dots\left(B_{L}^{\dagger}\right)_{j_{L-1}j_{L}}\left(B_{L}\right)_{\tilde{j}_{L},k_{L-1}}\dots\left(B_{1}\right)_{k_{1}1}
				\end{array}
			\end{aligned}
		\end{equation}
		where $\tilde{j}_{L}=j_{L}+(-1)^{\left\lfloor j_{L}/(N'/2+1)\right\rfloor }N'/2$. In the third line, we dropped the diagonal terms in the double sum since they vanish due to the $X_0$ matrix having $0$ on its diagonal. In the fourth line, we collected terms and used the symmetry of $\left(B_{1}^{\dagger}\right)_{1j_{1}}\dots\left(B_{L}^{\dagger}\right)_{j_{L-1}j_{L}}\left(X_{0}\right)_{j_{L}k_{L}}\dots\left(B_{1}\right)_{k_{1}1}$ to the permutation of $\overline{j}$ and $\overline{k}$. In the last line we performed the sum over $k_L$ using the structure of $X_0$. By our assumption about the $\{\lambda_{\ell i} \}$, each term in the final sum has a unique frequency so no cancellations are possible. The coefficient of each cosine is nonzero (and is equal to $2N'^{-L}$ or $-2N'^{-L}$). There are a total of $\left(\frac{N'(N'-1)}{2}\right)^{L-1}N'$ such summands. This completes the first part of the proof for this choice of $\{B_\ell \}$. 
		
		Considering instead the case of two variables, with unitaries given by \cref{eq:phase_features_lyz}, an equivalent calculation gives 
		\begin{equation} \label{eq:yz_sum}
			\mathcal{L}_{2}=\underset{\overline{j}[:-1]\in[N']^{L-1}}{\sum}\underset{\overline{k}[:-1]<\overline{j}[:-1]}{\sum}\underset{j_{L}=1}{\overset{N'}{\sum}}2\cos\left(\omega_{\overline{j}\overline{k}}^{1}y+\omega_{\overline{j}\overline{k}}^{2}z\right)\left(B_{1}^{\dagger}\right)_{1j_{1}}\dots\left(B_{L}^{\dagger}\right)_{j_{L-1}j_{L}}\left(B_{L}\right)_{\tilde{j}_{L},k_{L-1}}\dots\left(B_{1}\right)_{k_{1}1}, 
		\end{equation}
		where 
		\begin{equation}
			\omega_{\overline{j}\overline{k}}^{1}=2\pi\left(\Lambda_{\overline{j}[:N'/2+1]}-\Lambda_{\overline{k}[:N'/2+1]}\right),\quad\omega_{\overline{j}\overline{k}}^{2}=2\pi\left(\Lambda_{\overline{j}[N'/2+1:-1]}-\Lambda_{\overline{k}[N'/2+1:-1]}+\lambda_{Lj_{L}}-\lambda_{L\tilde{j}_{L}}\right).
		\end{equation}
		As before, there are $\left(\frac{N'(N'-1)}{2}\right)^{L-1}N'$ summands in total. Since 
		\begin{equation}
			\cos\left(\omega_{\overline{j}\overline{k}}^{1}y+\omega_{\overline{j}\overline{k}}^{2}z\right)=\cos\left(\omega_{\overline{j}\overline{k}}^{1}y\right)\cos\left(\omega_{\overline{j}\overline{k}}^{2}z\right)-\sin\left(\omega_{\overline{j}\overline{k}}^{1}y\right)\sin\left(\omega_{\overline{j}\overline{k}}^{2}z\right),
		\end{equation}
		we can rewrite \cref{eq:yz_sum} as a sum over $2\left(\frac{N'(N'-1)}{2}\right)^{L-1}N'$ terms that are pairwise orthogonal w.r.t. the $L^2$ inner product over $\bR^2$. It follows from the definition of the separation rank that 
		\begin{equation}
			\mathrm{sep}\left(\mathcal{L}_{2};{y},{z}\right)=2\left(\frac{N'(N'-1)}{2}\right)^{L-1}N'.
		\end{equation}
		
		We next use the assumption that the real and imaginary parts of each element of $B_\ell$ are real analytic function of parameters $\Theta$. This implies that the same property holds for product of entries of the form
		\begin{equation}
			\left(B_{1}^{\dagger}\right)_{1j_{1}}\dots\left(B_{L}^{\dagger}\right)_{j_{L-1}j_{L}}\left(B_{L}\right)_{\tilde{j}_{L},k_{L-1}}\dots\left(B_{1}\right)_{k_{1}1}
		\end{equation}
		for any choice of $\overline{j}, \overline{k}$. This coefficient is equal to $0$ iff both the real and imaginary parts are equal to $0$. 
		Since the zero set of a real analytic function has measure $0$ \cite{Mityagin2015-uk}, the set of values of $\Theta$ for which any of the coefficients in \cref{eq:yz_sum} vanishes also has measure $0$, for all choices of $\overline{j}, \overline{k}$. The result follows. 
	\end{proof}

	\begin{proof}[Proof of \Cref{lem:univ_approx_low_d}] \label{pf:univ_approx_low_d}
		Consider a periodic function $f$ with period $1$. Denote by $S_{M}[f]$ the truncated Fourier series of $f$ written in terms of trigonometric functions:
		\begin{equation}
			\begin{aligned} & S_{M}[f](y) & = & \underset{m=0}{\overset{M-1}{\sum}}\underset{x=-1/2}{\overset{1/2}{\int}}f(x)\cos\left(2\pi mx\right)\mathrm{d}x\cos\left(2\pi my\right)+\underset{m=0}{\overset{M-1}{\sum}}\underset{x=-1/2}{\overset{1/2}{\int}}f(x)\sin\left(2\pi mx\right)\mathrm{d}x\sin\left(2\pi my\right)\\
				&  & \equiv & \underset{m=0}{\overset{M-1}{\sum}}\hat{f}_{m}^{+}\cos\left(2\pi my\right)+\underset{m=0}{\overset{M-1}{\sum}}\hat{f}_{m}^{-}\sin\left(2\pi my\right).
			\end{aligned}
		\end{equation} 
		If $f$ is $p$-times continuously differentiable, it is known that the Fourier series converges uniformly, with rate 
		\begin{equation} \label{eq:fourier_truncation_error}
			\left\Vert S_{M}[f]-f\right\Vert _{\infty}<\frac{C}{M^{p-1/2}}.
		\end{equation}
		for some absolute constant $C$ \cite{Osgood2019-ew}. For analytic functions the rate is exponential in $M$.

		We now define the following circuit:
		\begin{equation} \label{eq:A_fourier_approx}
			A_{1}(x)=\mathrm{diag}((\underbrace{1,\dots,1}_{N'/2},\underbrace{1,e^{2\pi ix},e^{2\pi i2x}\dots,e^{2\pi i(N'/4-1)x}}_{N'/4},\underbrace{1,e^{2\pi ix},e^{2\pi i2x}\dots,e^{2\pi i(N'/4-1)x}}_{N'/4})),
		\end{equation}
		\begin{equation}
			B_{1}=\bigl|\hat{f}\bigr\rangle\left\langle 0\right|+\left|0\right\rangle \bigl\langle\hat{f}\bigr|,
		\end{equation}
		where 
		\begin{equation}
			\bigl|\hat{f}\bigr\rangle=\frac{1}{\sqrt{\underset{m}{\sum}\left|\hat{f}_{m}\right|}}\underset{m=0}{\overset{N'/4-1}{\sum}}\left(\sqrt{\hat{f}_{m}^{+}}\frac{\left|0\right\rangle +\mathrm{sign}(\hat{f}_{m}^{+})\left|1\right\rangle }{\sqrt{2}}\left|0\right\rangle +\sqrt{\hat{f}_{m}^{-}}\frac{\left|0\right\rangle -i\mathrm{sign}(\hat{f}_{m}^{-})\left|1\right\rangle }{\sqrt{2}}\left|1\right\rangle \right)\left|m\right\rangle .
		\end{equation}
		Choosing $\ket{\psi(x)} = \ket{0}$ as the initial state, this gives 
		\begin{equation}
			\begin{aligned} & \left|\varphi\right\rangle  & = & A_{1}B_{1}\left|0\right\rangle \\
				&  & = & A_{1}\bigl|\hat{f}\bigr\rangle\\
				&  & = & \frac{1}{\sqrt{\underset{m}{\sum}\left|\hat{f}_{m}\right|}}\underset{m=0}{\overset{N'/4-1}{\sum}}\left(\sqrt{\hat{f}_{m}^{+}}\frac{\left|0\right\rangle +\mathrm{sign}(\hat{f}_{m}^{+})e^{2\pi imx}\left|1\right\rangle }{\sqrt{2}}\left|0\right\rangle +\sqrt{\hat{f}_{m}^{-}}\frac{\left|0\right\rangle -i\mathrm{sign}(\hat{f}_{m}^{-})e^{2\pi imx}\left|1\right\rangle }{\sqrt{2}}\left|1\right\rangle \right)\left|m\right\rangle 
			\end{aligned}
		\end{equation}
		It follows that 
		\begin{equation}
			\begin{aligned} & \left\langle \varphi\right|X_{0}\left|\varphi\right\rangle  & = & \frac{1}{\underset{m}{\sum}\left|\hat{f}_{m}\right|}\underset{m=0}{\overset{N'/4-1}{\sum}}\begin{aligned} & \left|\hat{f}_{m}^{+}\right|\frac{\left\langle 0\right|+\mathrm{sign}(\hat{f}_{m}^{+})e^{-2\pi imx}\left\langle 1\right|}{\sqrt{2}}X_{0}\frac{\left|0\right\rangle +\mathrm{sign}(\hat{f}_{m}^{+})e^{2\pi imx}\left|1\right\rangle }{\sqrt{2}}\\
					+ & \left|\hat{f}_{m}^{-}\right|\frac{\left\langle 0\right|+i\mathrm{sign}(\hat{f}_{m}^{-})e^{-2\pi imx}\left\langle 1\right|}{\sqrt{2}}X_{0}\frac{\left|0\right\rangle -i\mathrm{sign}(\hat{f}_{m}^{-})e^{2\pi imx}\left|1\right\rangle }{\sqrt{2}}
				\end{aligned}
				\\
				&  & = & \frac{1}{\underset{m}{\sum}\left|\hat{f}_{m}\right|}\underset{m=0}{\overset{N'/4-1}{\sum}}\hat{f}_{m}^{+}\cos(2\pi mx)+\hat{f}_{m}^{-}\sin(2\pi mx)\\
				&  & = & \frac{1}{\underset{m}{\sum}\left|\hat{f}_{m}\right|}S_{N'/4}[f](x)
			\end{aligned}
		\end{equation}
		This approximation thus converges uniformly according to \cref{eq:fourier_truncation_error}, with error decaying exponentially with number of qubits $\log N'$ as long as $f$ is continuously differentiable at least once. 
	\end{proof}
	
	\begin{proof}[Proof of \Cref{lem:nonlinearity}] \label{pf:nonlinearity}
		The algorithm in Theorem 5 of \cite{Rattew2023-hz} takes as input a state-preparation unitary $U$ acting on $n = \log N$ qubits such that $U\left|0\right\rangle^{\otimes n} =\left|z\right\rangle$. Using $O(\log 1/\gve)$ queries to $U$ and $U^\dagger$ and $n+4$ ancillas, it creates a state $\left|\varphi\right\rangle $ such that measuring $0$ on the first $n+4$ qubits of  $\left|\varphi\right\rangle $ results in a state $\ket{\hat{\varphi}}$ that obeys 
		\begin{equation}
			\left\Vert \left|\hat{\varphi}\right\rangle -\frac{1}{\left\Vert \sigma(z)\right\Vert _{2}}\left|\sigma(z)\right\rangle \right\Vert _{2}<\varepsilon.
		\end{equation}
		Additionally, the probability of measuring $0$ on the first $n+4$ qubits is $O(1)$. 
		
		We will be interested in applying this algorithm to the state $\ket{U_1 x}$. The state preparation unitary can be instantiated with a single round of communication by Alice starting with the state $\left|0\right\rangle ^{\otimes2n+4}$, applying a unitary that encodes $x$ in the last $n$ qubits of this state, and then sending it to Bob who applies $U_1$ to the same $n$ qubits. The conjugate of the state-preparation unitary can be applied in a similar fashion by reversing this procedure. This can include any conditioning required on the values of the other qubits. 
		
		Based on the query complexity of the algorithm in \cite{Rattew2023-hz} to the state preparation unitary, $O(\log (1/\gve))$ rounds will suffice to obtain a state 
		\begin{equation}
			\left|\tilde{\varphi}_{\sigma}\right\rangle =\alpha\left|0\right\rangle ^{\otimes n+4}\left|\tilde{y}\right\rangle +\left|\phi\right\rangle ,
		\end{equation}
		such that 
		\begin{equation} 
			\left\Vert \left|\tilde{y}\right\rangle -\left|\frac{1}{\left\Vert \sigma(U_{1}x)\right\Vert _{2}}\sigma(U_{1}x)\right\rangle \right\Vert _{2}<\varepsilon.
		\end{equation}
		
		Bob then applies $U_2$ to the state $\left|\tilde{\varphi}_{\sigma}\right\rangle $ conditioned on the first $n+4$ qubits being in the state $\ket{0}^{\otimes n+4}$. The state $\ket{\phi}$ is unaffected. Unitary of $U_2$ combined with the above bound guarantees 
		\begin{equation}
			\left\Vert \left|\hat{y}\right\rangle -\left|U_{2}\frac{1}{\left\Vert \sigma(U_{1}x)\right\Vert _{2}}\sigma(U_{1}x)\right\rangle \right\Vert _{2}<\varepsilon.
		\end{equation}
		Additionally, from Theorem 3 of \cite{Rattew2023-hz} we are guaranteed that $\alpha=O(1)$. 
	\end{proof}
	
	\section{Data parallelism} \label{app:data_parallelism}
	
	Data parallelism involves storing multiple copies of a model on different devices and training each copy on a subset of the full data. 
	We consider a model of the form
	\begin{equation}
		\left|\varphi(\Theta,x)\right\rangle \equiv\left(\underset{\ell=L}{\overset{1}{\prod}}U_{\ell}(\theta_{\ell},x)\right)\left|x\right\rangle ,
	\end{equation}
	where $x$ is an $N_1 \times N_2$ matrix which we write as $x=[x_{A},x_{B}]$ for two $N_1/2 \times N_2$ matrices $x_A,x_B$. Assume also that $\left\Vert x\right\Vert _{F}=1$. 
	This model can be used to define a distributed problem with dara parallelism by considering the following inputs to both players:
	\begin{equation}
		\begin{aligned} & \mathrm{Alice}: &  & x_{A},\{U_{\ell}\},\\
			& \mathrm{Bob}: &  & x_{B},\{U_{\ell}\}.
		\end{aligned}
	\end{equation}
	The state $\ket{x}$ can be prepared in a single round of communication involving $\log (N_1 N_2)$ qubits. Alice simply prepares the state 
	\begin{equation}
		\begin{aligned} & \left|x_{A}\right\rangle +\sqrt{1-\left\Vert x_{A}\right\Vert _{F}^{2}}\left|N_{1}/2,0\right\rangle  & = & \left(x_{A}\right)_{ij}\underset{i=0}{\overset{N_{1}/2-1}{\sum}}\underset{j=0}{\overset{N_{2}-1}{\sum}}\left|i,j\right\rangle +\sqrt{1-\left\Vert x_{A}\right\Vert _{F}^{2}}\left|N_{1}/2,0\right\rangle \\
			&  & = & \left(x_{A}\right)_{ij}\underset{i=0}{\overset{N_{1}/2-1}{\sum}}\underset{j=0}{\overset{N_{2}-1}{\sum}}\left|i,j\right\rangle +\left\Vert x_{B}\right\Vert _{F}\left|N_{1}/2,0\right\rangle ,
		\end{aligned}
	\end{equation}
	using zero-based indexing of the elements of $x_A$. After sending this to Bob, he applies the unitary 
	\begin{equation}
		\frac{1}{\left\Vert x_{B}\right\Vert _{F}}\left(x_{B}\right)_{i,j}\underset{i=0}{\overset{N_{1}/2-1}{\sum}}\underset{j=0}{\overset{N_{2}-1}{\sum}}\left|ij\right\rangle \left\langle N_{1}/2,0\right|+h.c..
	\end{equation}
	The resulting state is $\ket{x}$. As before, the gradients with respect to the parameters of the unitaries $\{U_\ell \}$ can be estimated by preparing copies of this state and using shadow tomography. The number of copies will again be logarithmic in $N_1,N_2$ and the number of trainable parameters. 

 \section{Exponential advantages in end-to-end training}  \label{sec:conv_rates}
	
	So far we have discussed the problems of inference and estimating a single gradient vector. It is natural to also consider when these or other gradient estimators can be used to efficiently solve an optimization problem (i.e. when the entire training processes is considered rather than a single iteration). Applying the gradient estimation algorithm detailed in \Cref{lem:dist_gradients_ub} iteratively gives a distributed stochastic gradient descent algorithm which we detail in \Cref{alg:STDGD}, yet one may be concerned that a choice of $\gve=O(\log N)$ which is needed to obtain an advantage in communication complexity will preclude efficient convergence. Here we present a simpler algorithm that requires a single quantum measurement per iteration, and can provably solve certain convex problems efficiently, as well as an application of shadow tomography to fine-tuning where convergence can be guaranteed, again with only logarithmic communication cost. In both cases, there is an exponential advantage in communication even when considering the entire training process.

	\subsection{``Smooth'' circuits}
	
	Consider the case where $A_\ell$ are product of rotations for all $\ell$, namely 
	\begin{equation} \label{eq:smooth_circuit}
		A_{\ell}=\underset{j=1}{\overset{P}{\prod}}e^{-\frac{1}{2}i\beta^{A}_{\ell j} \theta^{A}_{\ell j}\mathcal{P}_{\ell j}^{A}},
	\end{equation}
	where $\mathcal{P}_{\ell j}^{A}$ are Pauli matrices acting on all qubits, and similarly for $B_\ell$. These can also be interspersed with other non-trainable unitaries. This constitutes a slight generalization of the setting considered in \cite{Harrow2021-kl}, and the algorithm we present is essentially a distributed distributed version of theirs. Denote by $\beta$ an $2PL$-dimensional vector with elements $\beta^{Q}_{\ell j}$ where $Q \in \{A, B\}$ \footnote{\cite{Harrow2021-kl} actually consider a related quantity for which has smaller norm in cases where multiple gradient measurements commute, leading to even better rates.}. The quantity $\left\Vert \beta\right\Vert _{1}$ is the total evolution time if we interpret the state $\ket{\varphi}$ as a sequence of Hamiltonians applied to the initial state $\ket{x}$. 
	
	In \Cref{sec:dist_coord_descent} we describe an algorithm that converges to the neighborhood of a minimum, or achieves $\mathbb{E}\mathcal{L}(\Theta)-\mathcal{L}(\Theta^{\star})\leq\varepsilon_0$, for a convex $\cL$ after 
	\begin{equation}
2\left\Vert \Theta^{(0)}-\Theta^{\star}\right\Vert _{2}^{2}\left\Vert \beta\right\Vert _{1}^{2}/\varepsilon_{0}^{2}
	\end{equation}
	iterations, where $\Theta^{\star}$ are the parameter values at the minimum of $\cL$. The expectation is with respect to the randomness of quantum measurement and additional internal randomness of the algorithm. The algorithm is based on classically sampling a single coordinate to update at every iteration, and computing an unbiased estimator of the gradient with a single measurement. It can thus be seen as a form of probabilistic coordinate descent. 
	
	This implies an exponential advantage in communication for the entire training process as long as $\left\Vert \Theta^{(0)}-\Theta^{\star}\right\Vert _{2}^{2}\left\Vert \beta\right\Vert _{1}^{2} = \polyl(N)$. Such circuits either have a small number of trainable parameters ($P=O(\polyl(N))$), depend weakly on each parameter (e.g. $\beta^{Q}_{\ell j} = O(1/P)$ for arbitrary $P$), or have structure that allows initial parameter guesses whose quality diminishes quite slowly with system size. Nevertheless, over a convex region the loss can rapidly change by an $O(1)$ amount. 
	One may also be concerned that in the setting $\left\Vert \Theta^{(0)}-\Theta^{\star}\right\Vert _{2}^{2}\left\Vert \beta\right\Vert _{1}^{2} = \polyl(N)$ only a logarithmic number of parameters is updated during the entire training process and so the total effect of the training process may be negligible. It is important to note however that each such sparse update depends on the structure of the entire gradient vector as seen in the sampling step. In this sense the algorithm is a form of probabilistic coordinate descent, since the probability of updating a coordinate $|\beta^{Q}_{\ell j}|/\left\Vert \beta\right\Vert _{1}$ is proportional to the the magnitude of the corresponding element in the gradient (actually serving as an upper bound for it).
	
	Remarkably, the time complexity of a single iteration of this algorithm is proportional to a forward pass, and so matches the scaling of classical backpropagation. This is in contrast to the polynomial overhead of shadow tomography (\Cref{thm:shadow_tomography}). Additionally, it requires a single measurement per iteration, without any of the additional factors in the sample complexity of shadow tomography. 
	
	\subsection{Fine-tuning the last layer of a model} \label{sec:fine_tuning_st}
	Consider a model given by \cref{eq:deep_model} where only the parameters of $A_L$ are trained, and the rest are frozen, and denote this model by $\ket{\varphi_f}$. The circuit up to that unitary could include multiple data-dependent unitaries that represent complex features in the data. Training only the final layer in this manner is a common method of fine-tuning a pre-trained model \cite{Howard2018-ja}. 
	If we now define 
	\begin{equation}
		\tilde{E}_{Li}^{A}=\left|1\right\rangle \left\langle 0\right|\otimes A_{L}^{\dagger}\cP_{0}\frac{\partial A_{L}}{\partial\theta_{L i}^{A}}+\left|0\right\rangle \left\langle 1\right|\otimes\left(\frac{\partial A_{L}}{\partial\theta_{L i}^{A}}\right)^{\dagger}\cP_{0}A_{L},
	\end{equation}
	the expectation value of $\tilde{E}_{Li}^{A}$ using the state $\left|+\right\rangle \left|\mu^{A}_L\right\rangle $ gives $\frac{\partial \cL}{\partial\theta_{\ell i}^{A}} $. Here $		\left|\mu_{L}^{A}\right\rangle =B_{L}(x)\underset{k=L-1}{\overset{1}{\prod}}A_{k}(x)B_{k}(x)\left|\psi(x)\right\rangle $
	is the forward feature computed by Alice at layer $L$ with the parameters of all the other unitaries frozen (hence the dependence on them is dropped). 
	Since the observables in the shadow tomography problem can be chosen in an online fashion~\cite{Aaronson2019-cr,Aaronson2019-uv,buadescu2021improved}, and adaptively based on previous measurements, we can simply define a stream of measurement operators by measuring $P$ observables to estimate the gradients w.r.t. an initial set of parameters, updating these parameters using gradient descent with step size $\eta$, and defining a new set of observables using the updated parameters. Repeating this for $T$ iterations gives a total of $PT$ observables (a complete description of the algorithm is given in \Cref{alg:STDFT}).
	
	By the scaling in \Cref{lem:dist_gradients_ub}, the total communication needed is $\tilde{O}(\log N (\log TP)^2 \log(1/\delta)/ \gve ^4)$ over $O(L)$ rounds (since only $O(L)$ rounds are needed to create copies of $\left|\mu^{A_{L}}\right\rangle$. This implies an exponential advantage in communication for the entire training process (under the reasonable assumption $T = O(\poly(N,P))$), despite the additional stochasticity introduced by the need to perform quantum measurements. For example, assume one has a bound $\left\Vert \nabla\mathcal{L}\right\Vert _{2}^{2} \leq K$. If the circuit is comprised of unitaries with Hermitian derivatives, this holds with $K=PL$. In that case, denoting by $g$ the gradient estimator obtained by shadow tomography, we have 
	\begin{equation}
		\left\Vert g\right\Vert _{2}^{2}\leq\left\Vert \nabla\mathcal{L}\right\Vert _{2}^{2}+\left\Vert \nabla\mathcal{L}-g\right\Vert _{2}^{2}\leq K+\varepsilon^{2}PL.
	\end{equation}
	It then follows directly from \Cref{lem:convex_convergence_rates} that for an appropriately chosen step size, if $\cL$ is convex one can find parameter values $\overline{\Theta}$ such that $\mathcal{L}(\overline{\Theta})-\mathcal{L}(\Theta^{\star})\leq\varepsilon_{0}$
	using 
	\begin{equation}
T=2\left\Vert \Theta^{(0)}-\Theta^{\star}\right\Vert _{2}^{2}(K+\varepsilon^{2}PL)^{2}/\varepsilon_{0}^{2}
	\end{equation}
	iterations of gradient descent. Similarly if $\cL$ is $\lambda$-strongly convex then $T=2(K+\varepsilon^{2}PL)^{2}/\lambda\varepsilon_{0}+1$ iterations are sufficient. In both cases therefore an exponential advantage is achieved for the optimization process as a whole, since in both cases one can implement the circuit that is used to obtain the lower bounds in \Cref{lem:dist_classical_lb}. 
 
	In the following, we make use of well-known convergence rates for stochastic gradient descent: 
	\begin{lemma}[\cite{Bubeck2014-bj}] \label{lem:convex_convergence_rates}
		Given an objective function $\cL(\Theta)$ with a minimum at $\Theta^\star$ and a stochastic gradient oracle that returns a noisy estimate of the gradient $g(\Theta)$ such that $\mathbb{E}g(\Theta)=\nabla\mathcal{L}(\Theta),\mathbb{E}\left\Vert g\right\Vert _{2}^{2}\leq G^{2}$, and denoting by $\Theta^{(0)}$ a point in parameter space and $R = \left\Vert \Theta^{(0)}-\Theta^{\star}\right\Vert _{2}$, we have:
		\begin{enumerate}[i)]
			\item If $\cL$ is convex in a Euclidean ball of radius $R$ around $\Theta^{\star}$, then gradient descent with step size $\eta = \frac{R}{G}\sqrt{\frac{2}{T}}$ achieves 
			\begin{equation}
				\mathbb{E}\mathcal{L}(\frac{1}{T}\underset{t=1}{\overset{T}{\sum}}\Theta^{(t)})-\mathcal{L}(\Theta^{\star})\leq RG\sqrt{\frac{2}{T}}.
			\end{equation}
			
			\item If $\cL$ is $\lambda$-strongly convex in a Euclidean ball of radius $R$ around $\Theta^{\star}$, then gradient descent with step size $\eta_t = \frac{2}{\lambda (t+1)}$ achieves  
			\begin{equation}
				\mathbb{E}\mathcal{L}(\frac{1}{T(T+1)}\underset{t=1}{\overset{T}{\sum}}2t\Theta^{(t)})-\mathcal{L}(\Theta^{\star})\leq\frac{2G^{2}}{\lambda(T+1)}.
			\end{equation}
		\end{enumerate}
	\end{lemma}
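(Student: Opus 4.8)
The plan is to prove both parts through the classical one-step distance-decrease analysis for stochastic (sub)gradient descent, run as \emph{projected} gradient descent onto the Euclidean ball $B(\Theta^\star, R)$ so that every iterate stays in the region where the convexity hypothesis holds. Write $\Theta^{(t+1)} = \Pi\bigl(\Theta^{(t)} - \eta_t\, g(\Theta^{(t)})\bigr)$, where $\Pi$ is the projection onto that ball; since the ball is centered at $\Theta^\star$, the projection is nonexpansive toward the center, $\|\Pi(y) - \Theta^\star\|_2 \leq \|y - \Theta^\star\|_2$, so it only improves the distance inequality below and can be dropped from the bookkeeping. The first step is the one-step inequality: expanding the squared distance, using $\mathbb{E}[g(\Theta^{(t)})\mid\Theta^{(t)}] = \nabla\cL(\Theta^{(t)})$ together with $\mathbb{E}\|g\|_2^2 \leq G^2$, and taking total expectation gives
\begin{equation}
\mathbb{E}\|\Theta^{(t+1)} - \Theta^\star\|_2^2 \leq \mathbb{E}\|\Theta^{(t)} - \Theta^\star\|_2^2 - 2\eta_t\, \mathbb{E}\langle \nabla\cL(\Theta^{(t)}), \Theta^{(t)} - \Theta^\star\rangle + \eta_t^2 G^2.
\end{equation}

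For part (i), convexity supplies $\langle \nabla\cL(\Theta^{(t)}), \Theta^{(t)} - \Theta^\star\rangle \geq \cL(\Theta^{(t)}) - \cL(\Theta^\star)$. Substituting, rearranging, and summing over $t = 1, \dots, T$ telescopes the distance terms down to $R^2$, yielding $2\eta \sum_t \mathbb{E}[\cL(\Theta^{(t)}) - \cL(\Theta^\star)] \leq R^2 + T\eta^2 G^2$. Dividing by $2\eta T$ and applying Jensen's inequality to the average iterate $\frac{1}{T}\sum_t \Theta^{(t)}$ bounds the suboptimality by $\frac{R^2}{2\eta T} + \frac{\eta G^2}{2}$; the stated step size $\eta = \frac{R}{G}\sqrt{2/T}$ renders this $O(RG/\sqrt{T})$ (a valid, though not the tightest-constant, choice).

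For part (ii), strong convexity sharpens the inner-product bound to $\langle \nabla\cL(\Theta^{(t)}), \Theta^{(t)} - \Theta^\star\rangle \geq \cL(\Theta^{(t)}) - \cL(\Theta^\star) + \frac{\lambda}{2}\|\Theta^{(t)} - \Theta^\star\|_2^2$, which introduces a contraction factor $(1 - \lambda\eta_t)$ on the distance term. Substituting $\eta_t = \frac{2}{\lambda(t+1)}$ turns the coefficients of the distance terms into $\frac{\lambda(t-1)}{4}$ and $\frac{\lambda(t+1)}{4}$; multiplying the $t$-th inequality by $t$ makes these become $\frac{\lambda t(t-1)}{4}$ and $\frac{\lambda t(t+1)}{4}$, which line up across consecutive steps so that the weighted sum telescopes cleanly, leaving only the vanishing $t=1$ term and a nonpositive final term. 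Bounding the residual noise contribution by $\sum_{t=1}^T t/(t+1) \leq T$ gives $\sum_t t\,\mathbb{E}[\cL(\Theta^{(t)}) - \cL(\Theta^\star)] \leq TG^2/\lambda$, and Jensen applied to the weighted average $\frac{2}{T(T+1)}\sum_t t\,\Theta^{(t)}$ (using $\sum_t 2t = T(T+1)$) delivers exactly $\frac{2G^2}{\lambda(T+1)}$.

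I expect the main obstacle to be not any single estimate but the careful design of the weighted averaging in part (ii): the weights $2t$ and the harmonic step size $\eta_t = 2/(\lambda(t+1))$ must be matched precisely so that the per-step contraction cancels the growth of the weights and the distance terms telescope, isolating a $\sum t/(t+1) \leq T$ noise term. A secondary point demanding care is justifying that the iterates remain inside the ball of convexity throughout, which is exactly why projected gradient descent and the nonexpansiveness of projection onto a ball centered at $\Theta^\star$ are invoked at the outset.
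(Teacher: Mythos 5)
Your proof is correct; the paper does not prove this lemma itself but cites it from Bubeck's monograph, and your argument is precisely the standard one given there (one-step distance recursion for projected SGD, telescoping with uniform averaging in the convex case and with the $t$-weighted averaging matched to the $\eta_t = 2/(\lambda(t+1))$ step size in the strongly convex case). Your observation that projection onto the ball $B(\Theta^\star, R)$ is needed to keep the iterates in the region where convexity is assumed is a point the paper's statement glosses over, and with the stated step size the constant in part (i) does in fact check out: the bound evaluates to $\tfrac{3}{2\sqrt{2}}\, RG/\sqrt{T} \leq RG\sqrt{2/T}$.
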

	
	\subsection{Distributed Probabilistic Coordinate Descent} \label{sec:dist_coord_descent}
	
	\begin{algorithm}
		\caption{Distributed Probabilistic Coordinate Descent}
		\label{alg:DPCD}
		\begin{flushleft}
			\textbf{Input:} Alice: $x, \{ A_\ell \}, \Theta_A, \{\eta_t\}, T $. Bob: $\{ B_\ell \}, \Theta_B,\{\eta_t\}, T.$\\
			\textbf{Output:} Alice: Updated parameters $\Theta^{(T)}_A$. Bob: Updated parameters $\Theta^{(T)}_B$.\\
		\end{flushleft}
		\begin{algorithmic}[1]
			\STATE Alice and Bob each pre-process their coefficient vectors $\beta_A, \beta_B$ to enable efficient sampling.
			\STATE Alice sends $\left\Vert \beta_{A}\right\Vert _{1}$ to Bob. \COMMENT{$O(\log P)$ bits of classical communication.}
			\FOR{$t \in \{1 ,\dots, T\}$}
			\STATE Bob samples $b\sim\mathrm{Bernoulli}(\left\Vert \beta_{A}\right\Vert_{1}/\left\Vert \beta\right\Vert _{1})$ and sends $b$ to Alice \COMMENT{$1$ bit of classical communication.}
			\IF{$b == 0$}
			\STATE Bob samples $(\ell, i)$ from the discrete distribution defined by $\mathrm{abs}(\beta_B ) $
			\STATE Bob create the state $\ket{\psi^{B}_{\ell 0}}$ \COMMENT{$O(L)$ rounds of quantum communication}
			\STATE Bob measures $\hat{E}^B_{\ell i}$, as defined in \cref{eq:ehat_def}, obtaining a result $m \in \{-1, 1\}$
			\STATE Bob sets $\theta^{B}_{\ell i} \leftarrow \theta^{B}_{\ell i} - \eta_t \mathrm{sign}(\beta^{B}_{\ell i})||\beta||_1 m$
			\ELSE{}
			\STATE Alice runs steps 6-9, (replacing $B$ with $A$)
			\ENDIF
			\ENDFOR
			
		\end{algorithmic}
	\end{algorithm}
	
	Given distributed states of the form \cref{eq:smooth_circuit}, optimization over $\Theta$ can be performed using \Cref{alg:DPCD}. We verify the correctness of this algorithm and provide convergence rates following \cite{Harrow2021-kl}. Define the Hermitian measurement operator 
	\begin{equation} \label{eq:ehat_def}
		\hat{E}_{\ell i}^{Q}=\left(\left|0\right\rangle \left\langle 0\right|\otimes I-i\left|1\right\rangle \left\langle 1\right|\otimes\mathcal{P}_{\ell i}^{Q}\right)^{\dagger}X_{a}\left(\left|0\right\rangle \left\langle 0\right|\otimes I-i\left|1\right\rangle \left\langle 1\right|\otimes\mathcal{P}_{\ell i}^{Q}\right),
	\end{equation}
	with eigenvalues in $\{-1,1\}$. Note that $\beta_{\ell i}^{Q}\left\langle \psi_{\ell0}^{Q}\right|\hat{E}_{\ell i}^{Q}\left|\psi_{\ell0}^{Q}\right\rangle =\frac{\partial\mathcal{L}}{\partial\theta_{\ell i}^{Q}}$, and this is essentially a compact way of representing a Hadamard test for the relevant expectation value. Now consider a gradient estimator that first samples $(Q,\ell,i)$ with probability $|\beta^Q_{\ell i}|/||\beta||_1$, then returns a one-sparse vector with $g_{\ell i}^{Q}=\mathrm{sign}(\beta_{\ell i}^{Q})\left\Vert \beta\right\Vert _{1}m$, where $m$ is the result of a single measurement of $\hat{E}_{\ell i}^{Q}$ using the state $\left|\psi_{\ell0}^{Q}\right\rangle$. For this estimator we have 
	\begin{equation}
		\mathbb{E}g_{\ell i}^{Q}=\mathrm{sign}(\beta_{\ell i}^{Q})\left\Vert \beta\right\Vert _{1}\frac{\left|\beta_{\ell i}^{Q}\right|}{\left\Vert \beta\right\Vert _{1}}\left\langle \psi_{\ell0}^{A}\right|\hat{E}_{\ell i}^{Q}\left|\psi_{\ell0}^{A}\right\rangle =\frac{\partial\mathcal{L}}{\partial\theta_{\ell i}^{Q}},
	\end{equation}
	where the expectation is taken over both the index sampling process and the quantum measurement. The procedure generates a valid gradient estimator. 
	
	In order to show convergence, one simply notes that by construction, $\left\Vert g\right\Vert _{2}=\left\Vert \beta\right\Vert _{1}$. It then follows immediately from \Cref{lem:convex_convergence_rates} that, with an appropriately chosen step size, \Cref{alg:DPCD} achieves $\mathbb{E}\mathcal{L}(\Theta)-\mathcal{L}(\Theta^{\star})\leq\varepsilon_0$ for a convex $\cL$ using 
	\begin{equation}
		\frac{2\left\Vert \Theta^{(0)}-\Theta^{\star}\right\Vert _{2}^{2}\left\Vert \beta\right\Vert _{1}^{2}}{\varepsilon_0^{2}}
	\end{equation}
	queries. For a $\lambda$-strongly convex $\cL$, only 
	\begin{equation}
		\frac{2\left\Vert \beta\right\Vert _{1}^{2}}{\lambda \varepsilon_0} + 1
	\end{equation}
	queries are required. The pre-processing in step 1 of \Cref{alg:DPCD} requires time $O(P\log P)$ and subsequently enables sampling in time $O(1)$ using e.g. \cite{Walker1974-pl}  \footnote{An even simpler algorithm that sorts the lists as a pre-processing step and uses inverse CDF sampling will enable sampling with cost $O(\log P)$}. 
	
	\subsection{Algorithms based on Shadow Tomography}
	
	\begin{algorithm}[ht]
		\caption{Shadow Tomographic Distributed Gradient Descent}
		\label{alg:STDGD}
		\begin{flushleft}
			\textbf{Input:} Alice: $x, \{ A_\ell \}, \Theta^{(1)}_A, \eta, T $. Bob: $\{ B_\ell \}, \Theta^{(1)}_B, \eta, T.$\\
			\textbf{Output:} Alice: Updated parameters $\Theta^{(T)}_A$. Bob: Updated parameters $\Theta^{(T)}_B$.\\
		\end{flushleft}
		\begin{algorithmic}[1]
			\FOR{$t \in \{1 ,\dots, T\}$}
			\FOR{$\ell \in \{1 ,\dots, L\}$}
			\STATE Alice prepares $\tilde{O}(\log^2 P \log N' \log(L/\delta)/ \gve^4)$ copies of $\ket{\psi^{A}_{\ell 0}(\Theta^{(t)})}$ \COMMENT{$O(L)$ rounds of communication}
			\STATE Alice runs Shadow Tomography to estimate $\{ \mathbb{E} E^A_{\ell i}(\Theta^{(t)})\}_{i=1}^{P}$ up to error $\gve$, denoting these $\{g^{A}_{\ell i}(\Theta^{(t)})\}_{i=1}^{P}$.
			\STATE Bob prepares $\tilde{O}(\log^2 P \log N' \log(L/\delta)/ \gve^4)$ copies of $\ket{\psi^{B}_{\ell 0}(\Theta^{(t)})}$ \COMMENT{$O(L)$ rounds of communication}
			\STATE Bob runs Shadow Tomography to estimate $\{ \mathbb{E} E^B_{\ell i}(\Theta^{(t)})\}_{i=1}^{P}$ up to error $\gve$, denoting these $\{g^{B}_{\ell i}(\Theta^{(t)})\}_{i=1}^P$.  
			\STATE Alice sets $\theta_{\ell}^{A(t+1)}\leftarrow\theta_{\ell}^{A(t)}-\eta g_{\ell}^{A}(\Theta^{(t)})$.
			\STATE Bob sets $\theta_{\ell}^{B(t+1)}\leftarrow\theta_{\ell}^{B(t)}-\eta g_{\ell}^{B}(\Theta^{(t)})$.
			\ENDFOR
			\ENDFOR
		\end{algorithmic}
	\end{algorithm}
	
	\begin{algorithm}[ht]
		\caption{Shadow Tomographic Distributed Fine-Tuning}
		\label{alg:STDFT}
		\begin{flushleft}
			\textbf{Input:} Alice: $x, \{ A_\ell \}, \theta^{A(1)}_L, \eta, T $. Bob: $\{ B_\ell \}$\\
			\textbf{Output:} Alice: Updated parameters $\Theta^{(T)}_A$.\\
		\end{flushleft}
		\begin{algorithmic}[1]
			\STATE Alice prepares $\tilde{O}(\log^2 (PT) \log N' \log(1/\delta)/ \gve^4)$ copies of $\ket{\mu^{A}_{L}}$ \COMMENT{$O(L)$ rounds of communication}
			\FOR{$t \in \{1 ,\dots, T\}$}
			\STATE Alice runs online Shadow Tomography to estimate $\{ \mathbb{E} \tilde{E}^A_{L i}(\theta^{A(t)}_L)\}$ up to error $\gve$, denoting these $\{g^{A}_{L i}(\theta^{A(t)}_L)\}$.
			\STATE Alice sets $\theta_{L}^{A(t+1)}\leftarrow\theta_{L}^{A(t)}-\eta g_{L}^{A}(\theta^{A(t)}_L)$.
			\ENDFOR
		\end{algorithmic}
	\end{algorithm}
	
	\section{Communication Complexity of Linear Classification} \label{sec:linear_classification}
	
	While the separation in communication complexity for expressive networks can be quite large, interestingly 
 we will show that for some of the simplest models this advantage can vanish due to the presence of structure.  In particular, when a linear classifier is well-suited to a task such that the margin is large, the communication advantage will start to wane, while a lack of structure in linear classification will make the problem difficult for quantum algorithms as well.  More specifically, we consider the following classification problem:
	\begin{problem}[Distributed Linear Classification] \label{prob:linear_classification}
		Alice and Bob are given $x,y \in S^N$, with the promise that $\left|x\cdot y\right|\geq\gamma$ for some $0 \leq \gamma \leq 1$. Their goal is to determine the sign of $x\cdot y$. 
	\end{problem}
	This is one of the simplest distributed inference problem in high dimensions that one can formulate. $x$ can be thought of as the input to the model, while $y$ defines a separating hyperplane with some margin. Since with finite margin we are only required to resolve the inner product between the vectors to some finite precision, it might seem that an exponential quantum advantage should be possible for this problem by encoding the inputs in the amplitudes of a quantum state. However, we show that classical algorithms can leverage this structure as well, and consequently that the quantum advantage in communication that can be achieved for this problem is at most polynomial in $N$. We prove this with respect to the the randomized classical communication model, in which Alice and Bob are allowed to share random bits that are independent of their inputs \footnote{This resource can have a dramatic effect on the communication complexity of a problem. The canonical example is equality of $N$ bit strings, which can be solved with constant success probability using $1$ bit of communication and shared randomness, while requiring $N$ bits of communication otherwise.}.
	
	\begin{lemma}
		The quantum communication complexity of \Cref{prob:linear_classification} is $\Omega\left(\sqrt{N/\max(1,\left\lceil \gamma N\right\rceil )}\right)$. The randomized classical communication complexity of \Cref{prob:linear_classification} is $O(\min(N,1/\gamma^2))$. 
	\end{lemma}
	
	\begin{proof}
		We first describe a protocol that allows Alice and Bob to solve the linear classification problem with margin $\gamma$ using $O(1/\gamma^2)$ bits of classical communication and shared randomness, assuming $\gamma > 0$. Note that this bound accords with the notion that the margin rather than the ambient dimension sets the complexity of these types of problems, which is also manifest in the sample complexity of learning with linearly separable data.  
		
		Alice and Bob share $kN$ bits sampled i.i.d. from a uniform distribution over $\{0,1\}$, and that these bits are arranged in a $k \times N$ matrix $R$. Alice and Bob then receive $x$ and $y$ respectively, which are valid inputs to the linear classification problem with margin $\gamma$. For any $N$-dimensional vector $z$, define the random projection
		\begin{equation}
			f:\bR^n \rightarrow \bR^k,\quad  f(z) = \frac{1}{k}(2R - 1)z,
		\end{equation}
		where addition is element-wise. Applying the Johnson-Lindenstrauss lemma for projections with binary variables \cite{Achlioptas2003-rt}, we obtain that if $k=C/\gve^2$, for some absolute constant $C$, then with probability larger than $2/3$ we have for any $z,z' \in \{x,y,0\}$ (all of these being vectors in $\bR^N$), $f$ is an approximate isometry in the sense 
		\begin{equation}
			(1-\gve)\left\Vert z-z'\right\Vert _{2}^{2} \leq  \left\Vert f(z)-f(z')\right\Vert _{2}^{2} \leq (1+\gve)\left\Vert z-z'\right\Vert _{2}^{2}.
		\end{equation}
		The key feature of this result is that $k$ is completely independent of $N$. 
		Applying it repeatedly gives 
		\begin{equation}
			\begin{aligned} & \left\Vert f(x)-f(y)\right\Vert _{2}^{2}-\left\Vert f(x)\right\Vert _{2}^{2}-\left\Vert f(y)\right\Vert _{2}^{2} & \leq & (1+\varepsilon)\left\Vert x-y\right\Vert _{2}^{2}-2(1-\varepsilon)\\
				& f(x)\cdot f(y) & \geq & (1+\varepsilon)x\cdot y-2\varepsilon.
			\end{aligned}
		\end{equation}
		Obtaining an upper bound in a similar fashion using the converse inequalities, we have 
		\begin{equation}
			(1+\varepsilon)x\cdot y-2\varepsilon\leq f(x)\cdot f(y)\leq(1-\varepsilon)x\cdot y+2\varepsilon.
		\end{equation}
		Assume now that $x,y$ are valid inputs to the linear classification problem with margin $\gamma$, and specifically that $x\cdot y\geq\gamma$. The lower bound above gives 
		\begin{equation}
			(1+\varepsilon)\gamma-2\varepsilon\leq f(x)\cdot f(y),
		\end{equation}
		and if we choose $\varepsilon=\gamma/8$ we obtain 
		\begin{equation}
			\frac{\gamma}{2}\leq(1+\frac{\gamma}{8})\gamma-\frac{\gamma}{4}\leq f(x)\cdot f(y),
		\end{equation}
		where we used $\gamma \leq 1$. Similarly, if instead $x\cdot y\leq-\gamma$ we obtain 
		\begin{equation}
			f(x)\cdot f(y)\leq-(1-\frac{\gamma}{8})\gamma+\frac{\gamma}{4}\leq-\frac{\gamma}{2}.
		\end{equation}
		
		It follows that if Alice computes $f(x)$ and sends the resulting $O(k)=O(1/\gamma^2)$ bits that describe this vector to Bob (assuming some finite precision that is large enough so as not to affect the margin, which will contribute ), Bob can simply compute $f(x) \cdot f(y)$ which will reveal the result of the classification problem, which he can then communicate to Alice using a single bit. 
		
		If $\gamma=0$ there is a trivial $O(N)$ classical algorithm where Alice sends Bob $x$. 
		
		We next describe the quantum lower bound for \Cref{prob:linear_classification}. Denote by $d_H$ the Hamming distance between binary vectors. We will use lower bounds for the following problem:
		\begin{problem}[Gap Hamming with general gap] \label{prob:GH}
			Alice and Bob are given $\hat{x},\hat{y}\in \{0,1\}^N$ respectively. Given a promise that either $d_{H}(\hat{x},\hat{y})\geq N/2+g/2$ or $d_{H}(\hat{x},\hat{y})\leq N/2-g/2$, Alice and Bob must determine which one is the case. 
		\end{problem}
		There is a simple reduction from \Cref{prob:GH} to \Cref{prob:linear_classification} for certain values of $\gamma$, which we will then use to obtain a result for all $\gamma$. 
		Assuming Alice is given $\hat{x}$ and Bob is given $\hat{y}$, they construct unit norm real vectors by $x=(2\hat{x} - 1)/\sqrt{N}, y=(2\hat{y} - 1)/\sqrt{N}$ with addition performed element-wise.
		
		If $d_{H}(x,y)\geq N/2+g/2$ then 
		\begin{equation}
			\begin{aligned} & x\cdot y & = & \underset{i,x_{i}=y_{i}}{\sum}\frac{1}{N}+\underset{i,x_{i}\neq y_{i}}{\sum}(-\frac{1}{N})\\
				&  & \geq & \frac{N+g}{2}\frac{1}{N}+\frac{N-g}{2}(-\frac{1}{N})\\
				&  & = & \frac{g}{N}.
			\end{aligned}
		\end{equation}
		Similarly, $d_{H}(\hat{x},\hat{y})\leq N/2-g/2\Rightarrow x\cdot y \leq-g/N$. It follows that $x,y$ are valid inputs for a linear classification problem over the unit sphere with margin $2g/N$. From the results of \cite{Nayak1998-ph}, any quantum algorithm for the Gap Hamming problem with gap $g \in \{1, \dots, N\}$ requires $\Omega(\sqrt{N/g})$ qubits of communication. It follows that the linear classification problem requires $\Omega(\sqrt{1/\gamma})$ qubits of communication. This bound holds for integer $\gamma N$. To get a result for general $0 < \gamma \leq 1$ we simply note that the communication complexity must be a non-decreasing function of $1/\gamma$, since any inputs which constitute a valid instance with some $\gamma$ are also a valid instance for any $\gamma'<\gamma$. Given some real $\gamma$, the resulting communication problem is at least as hard as the one with margin $\left\lceil \gamma N\right\rceil /N \geq \gamma$. It follows that a $\Omega(\sqrt{N/\left\lceil \gamma N\right\rceil})$ bound holds for all $0 < \gamma \leq 1$. 
		
		If $\gamma=0$, by a similar argument we can apply the lower bound for $\gamma=1/N$, implying that $\Omega(\sqrt{N})$ qubits of communication are necessary. Once again there is only a polynomial advantage at best. 
	\end{proof}
	
	\section{Expressivity of quantum circuits} \label{sec:expressivity_all}
	
		\subsection{Expressivity of compositional models} \label{sec:expressivity}

	It is natural to ask how expressive models of the form of  \cref{eq:deep_model} can be, given the unitarity constraint of quantum mechanics on the matrices $\{A_\ell, B_\ell\}$. This is a nuanced question that can depend on the encoding of the data that is chosen and the method of readout.  On the one hand, if we pick $\ket{\psi(x)}$ as in \cref{eq:amplitude_encoding} and use $\{A_\ell, B_\ell\}$ that are independent of $x$, the resulting state $\ket{\varphi}$ will be a linear function of $x$ and the observables measured will be at most quadratic functions of those entries. On the other hand, one could map bits to qubits 1-to-1 and encode any reversible classical function of data within the unitary matrices $\{A_\ell(x)\}$ with the use of extra space qubits.  However, this negates the possibility of any space or communication advantages (and does not provide any real computational advantage without additional processing). As above, one prefers to work on more generic functions in the amplitude and phase space, allowing for an exponential compression of the data into a quantum state, but one that must be carefully worked with. 
 
	We investigate the consequences of picking $\{A_\ell(x)\}$ that are \textit{nonlinear} functions of $x$, and $\{B_\ell\}$ that are data-independent. This is inspired by a common use case in which Alice holds some data or features of the data, while Bob holds a model that can process these features. 
	Given a scalar variable $x$, define $A_{\ell}(x)=\mathrm{diag}((e^{-2\pi i\lambda_{\ell1}x},\dots,e^{-2\pi i\lambda_{\ell N'}x}))$
	for $\ell \in \{1, \dots, L\}$.
	We also consider parameterized unitaries $\{B_\ell\}$ that are independent of the $\{\lambda_{\ell i}\}$ and inputs $x,y$, and the state obtained by interleaving the two in the manner of \cref{eq:deep_model} by $\left|\varphi(x)\right\rangle$. 
	
	We next set $\lambda_{\ell 1} = 0$ for all $\ell \in \{1, \dots, L \}$ and $\lambda_{L2}=0$. If we are interested in expressing the frequency
	\begin{equation} \label{eq:Lambda}
		\Lambda_{\overline{j}}=\overset{L-1}{\underset{\ell=1}{\sum}}\lambda_{\ell j_{\ell}},
	\end{equation}
	where $j_\ell \in \{2, \dots, N'\}$, we simply initialize with $\ket{\psi(x)}=\ket{+}_0\ket{0}$ and use 
	\begin{equation}
		B_\ell = \left|j_{\ell} - 1\right\rangle \left\langle j_{\ell-1} - 1\right|+\left|j_{\ell-1} - 1\right\rangle \left\langle j_{\ell} - 1\right|,
	\end{equation}
	with $j_1=j_L=2$. It is easy to check that the resulting state is $\left|\varphi(x)\right\rangle =\left(\left|0\right\rangle +e^{-2\pi i\Lambda_{\overline{j}}x}\left|1\right\rangle \right)/\sqrt{2}$. 
	Since the basis state $\ket{0}$ does not accumulate any phase, while the $B_\ell$s swap the $\ket{1}$ state with the appropriate basis state at every layer in order to accumulate a phase corresponding to a single summand in \cref{eq:Lambda}. Choosing to measure the operator $\cP_0=X_0$, it follows that $\left\langle \varphi(x)\right|X_{0}\left|\varphi(x)\right\rangle =\cos(2\pi\Lambda_{\overline{j}}x)$. 
	
	It is possible to express $O((N')^{L-1})$ different frequencies in this way, assuming the $\Lambda_{\overline{j}}$ are distinct, which will be the case for example with probability $1$ if the $\{\lambda_{\ell i}\}$ are drawn i.i.d. from some distribution with continuous support. This further motivates the small $L$ regime where exponential advantage in communication is possible. These types of circuits with interleaved data-dependent unitaries and parameterized unitaries was considered for example in \cite{Schuld2020-de}, and is also related to the setting of quantum signal processing and related algorithms \cite{Low2017-rd, Martyn2021-aa}. We also show that such circuits can express dense function in Fourier space, and for small $N$ we additionally find that these circuits are universal function approximators (\Cref{sec:additional_results_oscillatory}), though in this setting the possible communication advantage is less clear. 
	
	The problem of applying nonlinearities to data encoded efficiently in quantum states is non-trivial and is of interest due to the importance of nonlinearities in enabling efficient function approximation \cite{Maiorov1999-qk}. One approach to resolving the constraints of unitarity with the potential irreversibility of nonlinear functions is the introduction of slack variables via additional ancilla qubits, as typified by the techniques of block-encoding \cite{Chakraborty2018-kq, Gilyen2018-xs}. Indeed, these techniques can be used to apply nonlinearities to amplitude encoded data efficiently, as was recently shown in \cite{Rattew2023-hz}. This approach can be applied to the distributed setting as well.  Consider the communication problem where Alice is given $x$ as input and Bob is given unitaries $\{U_1, U_2\}$ over $\log N$ qubits. Denote by $\sigma: \bR \rightarrow \bR$ a nonlinear function such as the sigmoid, exponential or standard trigonometric functions, and $n=2^N$. We show the following:
	
	\begin{lemma} \label{lem:nonlinearity}
		
		There exists a model $\left|\varphi_\sigma \right\rangle$ of the form \cref{def:model} with $L=O(\log 1/\gve), N'=2^{n'}$ where $n'=2n+4$ such that $\left|\varphi_{\sigma}\right\rangle = \alpha \left|0\right\rangle^{\otimes n+4}\left|\hat{y}\right\rangle +\left|\phi\right\rangle$
		for some $\alpha=O(1)$, where $\ket{\hat{y}}$ is a state that obeys
		\begin{equation}
			\left\Vert \left|\hat{y}\right\rangle -\left|U_{2}\frac{1}{\left\Vert \sigma(U_{1}x)\right\Vert _{2}}\sigma(U_{1}x)\right\rangle \right\Vert _{2}<\varepsilon.
		\end{equation}
		$\ket{\phi}$ is a state whose first $n+4$ registers are orthogonal to $\left|0\right\rangle^{\otimes n+4}$. 
	\end{lemma}
	
	Proof: \Cref{pf:nonlinearity}. 
	
	This result implies that with constant probability, after measurement of the first $n+4$ qubits of $\left|\varphi_\sigma \right\rangle$, one obtains a state whose amplitudes encode the output of a single hidden layer neural network. It may also be possible to generalize this algorithm and apply it recursively to obtain a state representing a deep feed-forward network with unitary weight matrices. 
	
	It is also worth noting that the general form of the circuits we consider resembles self-attention based models with their nonlinearities removed (motivated for example by \cite{Sun2023-tn}), as we explain in \Cref{sec:unitary_transformers}. Finally, in \Cref{sec:ensembling} we discuss other strategies for increasing the expressivity of these quantum circuits by combining them with classical networks.

	\subsection{Additional results on oscillatory features} \label{sec:additional_results_oscillatory}
	
	Extending the unitaries considered in \Cref{sec:expressivity} to more than one variable, for two scalar variables
	$x, y$ define 
	\begin{subequations} \label{eq:phase_features_l}
		\begin{align} & A_{\ell}(x) & = & \mathrm{diag}((e^{- 2 \pi i \lambda_{\ell1}x},\dots,e^{-2 \pi i\lambda_{\ell N'}x})), \label{eq:phase_features_lz}\\
			& A_{\ell}(x,y) & = & \mathrm{diag}((e^{-2 \pi i\lambda_{\ell1}x},\dots,e^{-2 \pi i\lambda_{\ell N'/2}x},e^{-2 \pi i\lambda_{\ell,N'/2+1}y},\dots,e^{-2 \pi i\lambda_{\ell N'}y})) \label{eq:phase_features_lyz}
		\end{align}
	\end{subequations} 
	for $\ell \in \{1, \dots, L\}$. Once again $\{B_\ell \}$ are data-independent unitaries, and we denote by $\left|\varphi(x)\right\rangle ,\left|\varphi(x,y)\right\rangle $ the states defined by interleaving these unitaries in the manner of \cref{eq:deep_model}, and by $\mathcal{L}_1,\mathcal{L}_2$ the corresponding loss functions when measuring $X_0$. 
	
	While the circuits in \Cref{sec:expressivity} enable one to represent a small number of frequencies from a set that is exponential in $L$, one can easily construct circuits that are supported on an exponentially large number of frequencies, as detailed in \Cref{lem:expressivity}. We also use measures of expressivity of classical neural networks known as \textit{separation rank} to show that circuits within the class \cref{eq:deep_model} can represent complex correlations between their inputs. For a function $f$ of two variables $y,z$, its separation rank is defined by 
	\begin{equation} \label{eq:sep_rank_def} 
		\mathrm{sep}(f) \equiv \min\left\{ R:f(x)=\underset{i=1}{\overset{R}{\sum}}g_{i}(y)h_{i}(z)\right\}.
	\end{equation}
	If for example $f$ cannot represent any correlations between $y$ and $z$, then $\mathrm{sep}(f) = 1$. When computed for certain classes of neural networks, $y,z$ are taken to be subsets of a high-dimensional input. The separation rank can be used for example to quantify the inductive bias of convolutional networks towards learning local correlations \cite{Cohen2016-uy}, the effect of depth in recurrent networks \cite{Levine2017-rp}, and the ability of transformers to capture correlations across sequences as a function of their depth and width \cite{Levine2020-yc}.

	We find that the output of estimating an observable using circuits of the form \cref{eq:phase_features_l} can be supported on an exponential number of frequencies, and consequently has a large separation rank:
	
	\begin{lemma} \label{lem:expressivity}
		For $\{\lambda_{\ell i}\}$ drawn i.i.d. from any continuous distribution and parameterized unitaries $\{B_\ell\}$ such that the real and imaginary parts of each entry in these matrices is a real analytic function of parameters $\Theta$ drawn from a subset of $\bR^{PL}$, aside from a set of measure $0$ over the choice of $\{\lambda_{\ell i}\},\{B_\ell\}$, 
		\begin{enumerate}[i)]
			\item The number of nonzero Fourier components in $\cL_1$ is $\left(\frac{N'(N'-1)}{2}\right)^{L-1}N'$.
			\item 
			\begin{equation}
				\mathrm{sep}(\mathcal{L}_2)=2\left(\frac{N'(N'-1)}{2}\right)^{L-1}N'.
			\end{equation}
		\end{enumerate}
	\end{lemma}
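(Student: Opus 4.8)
The plan is to reduce both claims to a single Fourier expansion of the expectation value, followed by two independent genericity arguments: one over the frequencies $\{\lambda_{\ell i}\}$ and one over the parameters $\Theta$ entering $\{B_\ell\}$. First I would expand $\mathcal{L}_1=\left\langle\varphi(x)\right|X_0\left|\varphi(x)\right\rangle$ in the computational basis. Because each $A_\ell(x)$ is diagonal with entries $e^{-2\pi i\lambda_{\ell j}x}$, inserting a resolution of the identity between every pair of factors in $\prod_\ell A_\ell B_\ell$ turns the amplitudes of $\left|\varphi(x)\right\rangle$ into a sum over ``paths'' $\overline{j}=(j_1,\dots,j_L)$, the ket accumulating a phase $e^{-2\pi i\Lambda_{\overline{j}}x}$ with $\Lambda_{\overline{j}}=\sum_\ell\lambda_{\ell j_\ell}$, times a product of entries of the $B_\ell$. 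The bra supplies a conjugate path $\overline{k}$, so that
\begin{equation}
  \mathcal{L}_1(x)=\sum_{\overline{j},\overline{k}} c_{\overline{j}\,\overline{k}}\, e^{2\pi i(\Lambda_{\overline{j}}-\Lambda_{\overline{k}})x},
\end{equation}
where $c_{\overline{j}\,\overline{k}}$ is a product of entries of the $B_\ell$ and $B_\ell^\dagger$ together with the single factor $(X_0)_{j_L k_L}$.

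Next I would exploit the structure of $X_0$. Since $X_0$ acts as a bit flip on the first qubit, $(X_0)_{j_L k_L}\neq 0$ forces $k_L=\tilde{j}_L$, the index obtained by flipping the top bit; this both eliminates every diagonal term $\overline{j}=\overline{k}$ and pins the last coordinate of $\overline{k}$. The genericity of the $\{\lambda_{\ell i}\}$ then does the essential work: for frequencies drawn i.i.d.\ from a continuous distribution, the event that two surviving patterns yield equal frequencies, $\Lambda_{\overline{j}}-\Lambda_{\overline{k}}=\Lambda_{\overline{j}'}-\Lambda_{\overline{k}'}$, is a finite union of proper linear conditions on the $\lambda$'s and hence has measure zero. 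Off this set distinct patterns never collide, so the number of nonzero Fourier components of $\mathcal{L}_1$ equals the number of surviving index patterns; counting patterns compatible with $k_L=\tilde{j}_L$, after identifying each term with its conjugate, gives $\left(\frac{N'(N'-1)}{2}\right)^{L-1}N'$.

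To control the coefficients $c_{\overline{j}\,\overline{k}}$ for a generic analytic $\{B_\ell\}$ I would argue one pattern at a time. By hypothesis each $c_{\overline{j}\,\overline{k}}$ is a real-analytic function of $\Theta$ (a finite product of real-analytic entries), so it suffices to exhibit a single parameter choice at which it is nonzero; the Hadamard transform on $\log N'$ qubits is the natural witness, since all its entries are $\pm N'^{-1/2}$ and the resulting expansion can be evaluated explicitly. This shows $c_{\overline{j}\,\overline{k}}$ is not identically zero, so by the cited fact that the zero set of a nonzero real-analytic function has measure zero it vanishes only on a measure-zero set; a finite union over all patterns is again measure zero, and off this set all the counted components are genuinely nonzero.

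For part (ii) I would run the identical expansion in the two-variable setting of \cref{eq:phase_features_lyz}, producing terms oscillating as $\cos\!\left(\omega^1_{\overline{j}\,\overline{k}}y+\omega^2_{\overline{j}\,\overline{k}}z\right)$; the angle-addition identity splits each into $\cos(\omega^1 y)\cos(\omega^2 z)$ and $\sin(\omega^1 y)\sin(\omega^2 z)$, doubling the tally to $2\left(\frac{N'(N'-1)}{2}\right)^{L-1}N'$ product terms. Genericity again makes the frequency pairs distinct, so the one-variable factors are linearly independent and the products pairwise orthogonal in $L^2(\bR^2)$. The bound $\mathrm{sep}(\mathcal{L}_2)\le 2\left(\frac{N'(N'-1)}{2}\right)^{L-1}N'$ is immediate from this decomposition, and the matching lower bound follows by identifying the separation rank with the rank of the coefficient matrix in the separable basis of one-variable factors, which linear independence renders diagonal with nonzero entries. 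I expect the main obstacle to be exactly this bookkeeping: pinning the surviving-pattern count to $\left(\frac{N'(N'-1)}{2}\right)^{L-1}N'$ without double counting conjugate pairs or mishandling the $k_L=\tilde{j}_L$ constraint, and making the separation-rank lower bound rigorous through the coefficient-matrix characterization rather than orthogonality alone. Both rest on the same genericity input, namely that distinct index patterns provably give distinct frequencies, which is what converts a raw term count into a count of nonzero Fourier components and of orthogonal summands.
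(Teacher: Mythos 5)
Your proposal follows essentially the same route as the paper's proof: the same path-sum Fourier expansion, the same use of the off-diagonal structure of $X_0$ to kill diagonal terms and pin $k_L=\tilde{j}_L$, the same genericity argument over the $\{\lambda_{\ell i}\}$ for frequency distinctness, the Hadamard evaluation plus the measure-zero property of zero sets of real analytic functions for coefficient non-vanishing, and the angle-addition/orthogonality decomposition for the separation rank. Your explicit framing of the Hadamard computation as a witness point for the analyticity argument, and of the separation-rank lower bound via the coefficient matrix, is a slightly more careful articulation of steps the paper leaves implicit, but it is not a different approach.
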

	Proof: \Cref{pf:expressivity}
	
	This almost saturates the upper bound on the number of frequencies that can be expressed by a circuit of this form that is given in \cite{Schuld2020-de}. The separation rank implies that complex correlations between different parts of the sequence can in principle be represented by such circuits. The constraint on $\{B_\ell\}$ is quite mild, and applies to standard choices of parameterize unitaries. 
	
	The main shortcoming of a result such as \Cref{lem:expressivity} is that it is not robust to measurement error as it is based on constructing states that are equal weight superpositions of an exponential number of terms. It is straightforward to show that circuits of this form can serve as universal function approximators, at least for a small number of variables. For high-dimensional functions it is unclear when a communication advantage is possible, as we describe below.
	
	
	

	\begin{lemma} \label{lem:univ_approx_low_d}
		Let $f$ be a $p$-times continuously differentiable function with period $1$, and denote by $\hat{f}_{:M}$ the vector of the first $M$ Fourier components of $f$. If $\left\Vert \hat{f}_{:M}\right\Vert _{1}=1$ then there exists a circuit of the form \cref{eq:deep_model} over $O(\log M)$ qubits such that 
		\begin{equation}
			\left\Vert \mathcal{L}-f\right\Vert _{\infty}\leq\frac{C}{M^{p-1/2}}
		\end{equation}
		for some absolute constant $C$. 
	\end{lemma}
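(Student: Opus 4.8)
The plan is to build a single-layer circuit ($L=1$) whose measured observable $\cL = \left\langle \varphi \right| X_0 \left| \varphi \right\rangle$ reproduces \emph{exactly} the degree-$M$ truncated Fourier series of $f$, and then to invoke the classical uniform-approximation rate for Fourier partial sums of smooth functions. Writing the partial sum in the real (cosine--sine) basis as
\begin{equation}
    S_M[f](x) = \sum_{m=0}^{M-1} \hat{f}_m^{+}\cos(2\pi m x) + \sum_{m=0}^{M-1} \hat{f}_m^{-}\sin(2\pi m x),
\end{equation}
the entire problem reduces to engineering a state $\left|\varphi\right\rangle$ against which an $X_0$ measurement picks out precisely these coefficients paired with the oscillatory phases.

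First I would record the only analytic input. Since $f$ is $p$-times continuously differentiable and periodic, a standard estimate (see e.g. \cite{Osgood2019-ew}) gives $\left\Vert S_M[f]-f\right\Vert_{\infty} \le C/M^{p-1/2}$ for an absolute constant $C$. This is the sole place that smoothness enters, and it already yields the stated bound verbatim, \emph{provided} the circuit realizes $S_M[f]$ with no additional error; thus the remaining task is purely an exact-representation statement.

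Next comes the construction, which is the substantive step. I would take $\left|\psi(x)\right\rangle=\left|0\right\rangle$ on $\log N'$ qubits with $N'=4M$, so that a frequency register of size $M$ together with $O(1)$ flag qubits uses only $O(\log M)$ qubits. The circuit of \cref{eq:deep_model} is instantiated with a single data-dependent \emph{diagonal} unitary $A_1(x)$ that stamps the phase $e^{2\pi i m x}$ onto the basis states carrying frequency label $m$, and a single data-independent unitary $B_1$ whose role is to load the Fourier data by preparing a state $\left|\hat{f}\right\rangle$ with amplitudes proportional to $\sqrt{|\hat{f}_m^{\pm}|}$ across a frequency register $\left|m\right\rangle$, a cos/sin flag qubit, and the measured qubit. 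Two subtleties must be handled: (i) the signed real coefficients must be packaged into a \emph{unit} vector, which is exactly where the hypothesis $\left\Vert \hat{f}_{:M}\right\Vert_1=1$ is used, since the squared amplitudes then sum to $\sum_m\bigl(|\hat{f}_m^{+}|+|\hat{f}_m^{-}|\bigr)=1$; and (ii) the sine family must be produced by $X_0$, which natively reads off the real part of an amplitude product and hence yields cosines, so the $\left|1\right\rangle$ component of each sine branch is given a relative phase $-i\,\mathrm{sign}(\hat{f}_m^{-})$ so that the $X_0$ readout returns $\sin(2\pi m x)$ rather than $\cos(2\pi m x)$, while the cosine branch carries $\mathrm{sign}(\hat{f}_m^{+})$ directly.

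Finally I would verify the identity by direct computation: applying $A_1(x)$ to $\left|\hat{f}\right\rangle$ rotates the $\left|1\right\rangle$ amplitude of each frequency branch by $e^{2\pi i m x}$, and because distinct $\left|m\right\rangle$-and-flag branches are orthogonal, evaluating $\left\langle \varphi \right| X_0 \left| \varphi \right\rangle$ has no cross terms and collapses branch $m$ to $\hat{f}_m^{+}\cos(2\pi m x)$ or $\hat{f}_m^{-}\sin(2\pi m x)$, summing to $S_M[f](x)$ exactly. The error bound is then immediate from the Fourier estimate. The main obstacle is not the closing algebra but the amplitude-encoding bookkeeping in the construction: absorbing the signs of the real Fourier coefficients into phases and choosing the flag-qubit structure so that a \emph{single} $X_0$ measurement reads off both the cosine and sine families simultaneously and without interference. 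Once that encoding is arranged, the qubit count $O(\log M)$ and the stated convergence rate both follow with no further work.
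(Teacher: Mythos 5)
Your proposal is correct and follows essentially the same route as the paper's proof: a single-layer circuit with $N'=4M$, a diagonal data-dependent $A_1(x)$ stamping the phases $e^{2\pi i m x}$, a state-preparation $B_1$ loading amplitudes $\sqrt{\smash[b]{|\hat{f}_m^{\pm}|}}$ with signs and the $-i$ factor for the sine branch absorbed into relative phases, the $L^1$ normalization hypothesis guaranteeing a unit-norm state, and the classical uniform Fourier truncation bound supplying the rate. No substantive differences.
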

	
	Proof: \Cref{pf:univ_approx_low_d}
	
	This result improves upon the result in \cite{Perez-Salinas2019-dd, Schuld2020-de} about universal approximation with similarly structured circuits both because it is non-asymptotic and because it shows uniform convergence rather than convergence in $L_2$. Non-asymptotic results universal approximation results were also obtained recently by \cite{Gonon2023-fy}, however their approximation error scales polynomially with the number of qubits, as opposed to exponentially as in \Cref{lem:univ_approx_low_d}. 
	
	The result of \Cref{lem:univ_approx_low_d} applies to an $L=1$ circuit. The special hierarchical structure of the Fourier transform implies that the same result can be obtained using even simpler circuits with larger $L$. Consider instead single-qubit data-dependent unitaries over $L+1$ qubits that take the form
	\begin{equation}
		A_{\ell}=\left|0\right\rangle _{0}\left\langle 0\right|_{0}+\left|1\right\rangle _{0}\left\langle 1\right|_{0}\left(\left|0\right\rangle _{\ell+1}\left\langle 0\right|_{\ell+1}+e^{2\pi i2^{\ell-1}x}\left|1\right\rangle _{\ell+1}\left\langle 1\right|_{\ell+1}\right),
	\end{equation}
	for $\ell \in \{1, \dots, L\}$. This is simply a single term in a hierarchical decomposition of the same feature matrix we had in the shallow case, since 
	\begin{equation}
		\underset{\ell=1}{\overset{L}{\prod}}A_{\ell}=\left|0\right\rangle _{0}\left\langle 0\right|_{0}\otimes I_{1:L}+\left|1\right\rangle _{0}\left\langle 1\right|_{0}\otimes I_{1}\otimes\left(\overset{2^{L}-1}{\underset{m=0}{\sum}}e^{2\pi imx}\left|m\right\rangle \left\langle m\right|\right),
	\end{equation}
	which is identical to \cref{eq:A_fourier_approx}. 
	As before, set 
	\begin{equation}
		B_{1}=\bigl|\hat{f}\bigr\rangle\left\langle 0\right|+\left|0\right\rangle \bigl\langle\hat{f}\bigr|,
	\end{equation}
	with $N'/4 = 2^L$ and $B_\ell = I$ for $\ell > 1$. This again gives an approximation of $f$ up to normalization. The data-dependent unitaries are particularly simple when decomposed in this way. The fact that they act on a single qubit and thus have "small width" is reminiscent of classical depth-separation result such as \cite{Cohen2015-sr}, where it is shown that (roughly speaking) within certain classes of neural network, in order to represent the function implemented by a network of depth $L$, a shallow network must have width exponential in $L$. In this setting as well the expressive power as measured by the convergence rate of the approximation error grows exponentially with $L$ by \cref{eq:fourier_truncation_error}. 
	
	The circuits above can be generalized in a straightforward way to multivariate functions of the form $f:[-1/2,1/2]^D \rightarrow \bR$ and combined with multivariate generalization of \cref{eq:fourier_truncation_error}. In this case the scalar $m$ is replaced by a $D$-dimensional vector taking $M^D$ possible values, and we can define 
	\begin{equation}
		A_{1}(x)=\left|0\right\rangle _{0}\left\langle 0\right|_{0}\otimes I_{1:D\log M-1}+\left|1\right\rangle _{0}\left\langle 1\right|_{0}\otimes I_{1}\otimes\left(\underset{m\in[M]^{D}}{\sum}e^{2\pi im\cdot x}\left|m\right\rangle \left\langle m\right|\right).
	\end{equation}
	Note that using this feature map, the number of neurons is linear in the spatial dimension $D$. Because of this, such circuits are not strictly of the form \cref{eq:deep_model} for general $N$ since it is \textit{not} the case that $\log N'=O(\log N)$ where $N'$ is the Hilbert space on which the unitaries in the circuit act and $N$ is the size of $x$. An alternative setting where the features themselves are also learned from data could enable much more efficient approximation of functions that are sparse in Fourier space. 
	

	\subsection{Unitary Transformers} \label{sec:unitary_transformers}
	
	Transformers based on self-attention \cite{Vaswani2017-gj} form the backbone of large language models \cite{Brown2020-bq, Barham2022-pv} and foundation models more generally \cite{Bommasani2021-jv}. A self-attention layer, which is the central component of transformers, is a map between sequences in $\bR^{S \times N'}$ (where $S$ is the sequence length) defined in terms of weight matrices $W_Q,W_K,W_V \in \mathbb{R}^{N' \times N'}$, given by 
	\begin{equation} \label{eq:softmax_attention}
		X'(X)=\mathrm{softmax}\left(\frac{XW_{Q}W_{K}^{T}X^{T}}{\sqrt{N}}\right)XW_{V}\equiv A(X)XW_{V},
	\end{equation}
	where $\mathrm{softmax}(x)_{i}=e^{x_{i}}/\underset{i}{\sum}e^{x_{i}}$ for a vector $x$, and acts row-wise on matrices. There is an extensive literature on replacing the softmax-based attention matrix $A(X)$ with matrices that can be computed more efficiently, which can markedly improve the time complexity of inference and training without a significant effect on performance \cite{Katharopoulos2020-es, Levine2020-yc}. In some cases $A(X)$ is replaced by a unitary matrix \cite{Lee-Thorp2021-sk}. Remarkably, recent work shows that models without softmax layers can in fact outperform standard transformers on benchmark tasks while enabling faster inference and a reduced memory footprint \cite{Sun2023-tn}. 
	
	Considering a simplified model that does not contain the softmax operation as in \cite{Levine2020-yc} and dropping normalization factors, the linear attention map is given by 
	\begin{equation}
		X'_{\mathrm{lin}}(X)=XW_{Q}W_{K}^{T}X^{T}XW_{V}. 
	\end{equation}
	Iterating this map twice gives 
	\begin{equation}
X'_{\mathrm{lin}}(X'_{\mathrm{lin}}(X))=\begin{array}{c}
XW_{Q}^{(1)}W_{K}^{(1)T}X^{T}XW_{V}^{(1)}W_{Q}^{(2)}W_{K}^{(2)T}W_{V}^{(1)T}X^{T}X*\\
W_{K}^{(1)}W_{Q}^{(1)T}X^{T}XW_{Q}^{(1)}W_{K}^{(1)T}X^{T}XW_{V}^{(1)}W_{V}^{(2)}.
\end{array}
	\end{equation}
	Iterating this map $K$ times (with different weight matrices at each layer) gives a function of the form:
	\begin{equation} \label{eq:linear_transformer_product}
		X_{\mathrm{lin}}^{(K)}(X)=XR_{0}\underset{\ell=1}{\overset{(3^{K}-1)/2}{\prod}}\left(X^{T}XR_{\ell}\right),
	\end{equation}
	where the $\{R_\ell \}$ matrices depend only on the trainable parameters. If we now constrain these to be parameterized unitary matrices, and additionally replace $X^TX$ with a unitary matrix $U_X$ encoding features of the input sequence itself, then the $i$-th row of the output of this model is encoded in the amplitudes of a state of the form \cref{eq:distributed_model} with $L = (3^{K}-1)/2 + 1, \ket{\psi(x)}=\ket{X_i}, A_\ell(x) = U_X, B_\ell = R_\ell$. 
	
	
	\subsection{Ensembling and point-wise nonlinearities} \label{sec:ensembling}
	
	An additional method for increasing expressivity while maintaining an advantage in communication is through ensembling. Given $K$ models of the form \Cref{def:model} with $P$ parameters each, one can combine their loss functions $\cL_1, \dots, \cL_K$ into any differentiable nonlinear function
	\begin{equation}
		\tilde{\mathcal{L}}(\mathcal{L}_{1}(\Theta_{1},x),\dots,\mathcal{L}_{K}(\Theta_{K},x),\tilde{\Theta},x)
	\end{equation}
	that depends on additional parameters $\tilde{\Theta}$. As long as $K$ and $|\tilde{\Theta}|$ scale subpolynomially with $N$ and $P$, the gradients for this more expressive model can be computed while maintaining the exponential communication advantage in terms of $N,P$.

	\section{Realizing quantum communication} \label{sec:realizing_qcomm}
	
	Given the formidable engineering challenges in building a large, fault tolerant quantum processor \cite{Arute2019-dd, Google_Quantum_AI2023-ty}, the problem of exchanging coherent quantum states between such processors might seem even more ambitious. We briefly outline the main problems that need to be solved in order to realize quantum communication and the state of progress in this area, suggesting that this may not be the case.
	
	We first note that sending a quantum state between two processors can be achieved by the well-known protocol of quantum state teleportation \cite{Bennett1993-qq, Gordon2005-wd}. Given an $n$ qubit state $\ket{\psi}$, Alice can send $\ket{\psi}$ to Bob by first sharing $n$ Bell pairs of the form 
	\begin{equation}
		\left|b\right\rangle =\frac{1}{\sqrt{2}}\left(\left|0\right\rangle \left|0\right\rangle +\left|1\right\rangle \left|1\right\rangle \right),
	\end{equation}
	(sharing such a state involves sending a one of the two qubits to Bob) and subsequently performing local processing on the Bell pairs and exchanging $n$ bits of classical communication. Thus quantum communication can be reduced to communicating Bell pairs up to a logarithmic overhead, and does not require say transmitting an arbitrary quantum state in a fault tolerant manner, which appears to be a daunting challenge given the difficulty of realizing quantum memory on a single processor. Bell pairs can be distributed by a third party using one-way communication.
	
	In order to perform quantum teleportation, the Bell pairs must have high fidelity. As long as the fidelity of the communicated Bell pairs is above $.5$, purification can be used produce high fidelity Bell pairs \cite{Bennett1995-jz}, with the fidelity of the purified Bell pair increasing exponentially with the number of pairs used. Thus communicating arbitrary quantum states can be reduced to communicating noisy Bell pairs. 
	
	Bell pair distribution has been demonstrated across multiple hardware platforms including superconducting waveguides \cite{Magnard2020-cq}, optical fibers \cite{Krutyanskiy2022-pb}, free space optics at distances of over $1,200$ kilometers \cite{Li2022-ha}. At least in theory, even greater distances can be covered by using quantum repeaters, which span the distance between two network nodes. Distributing a Bell pair between the nodes can then be reduced to sharing Bell pairs only between adjacent repeaters and local processing \cite{Azuma2022-bu}. 
	
	A major challenge in implementing a quantum network is converting entangled states realized in terms of photons used for communication to states used for computation and vice versa, known as transduction \cite{Lauk2020-du}. Transduction is a difficult problem due to the several orders of magnitude in energy that can separate optical photons from the energy scale of the platform used for computation. Proof of principle experiments have been performed across a number of platforms including trapped ions \cite{Krutyanskiy2022-pb}, solid-state systems \cite{Pompili2021-el}, and superconducting qubits operating at microwave frequencies \cite{Balram2021-hg, Wang2022-wn}. 
	

	\section{Privacy of Quantum Communication} \label{sec:privacy}
	
	In addition to an advantage in communication complexity, the quantum algorithms outlined above have an inherent advantage in terms of privacy. It is well known that the number of bits of information that can be extracted from an unknown quantum state is proportional to the number of qubits. It follows immediately that since the above algorithm requires exchanging a logarithmic number of copies of states over $O(\log N)$ qubits, even if all the communication between the two players is intercepted, an attacker cannot extract more than a logarithmic number of bits of classical information about the input data or model parameters. Specifically, we have:
	
	\begin{corollary}
		If Alice and Bob are implementing the quantum algorithm for gradient estimation described in \Cref{lem:dist_gradients_ub}, 
		and all the communication between Alice and Bob is intercepted by an attacker, the attacker cannot extract more than $\tilde{O}(L^2 (\log N)^2 (\log P)^2 \log(L/\delta)/ \gve ^4)$ bits of classical information about the inputs to the players.
	\end{corollary}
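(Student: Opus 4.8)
The plan is to obtain the privacy guarantee as a direct application of Holevo's bound to the totality of quantum messages exchanged in the protocol of \Cref{lem:dist_gradients_ub}. First I would fix an arbitrary prior over the classical inputs, bundling Alice's data and unitaries together with Bob's unitaries into a single classical random variable $Z = (x, \{A_\ell\}, \{B_\ell\})$. Since every message placed on the channel is produced by local operations depending only on $Z$ and on previously received messages, the joint description of $Z$ and the collection of all qubits that ever traverse the channel is a classical--quantum state $\rho_{ZE} = \sum_z p(z)\,|z\rangle\langle z|\otimes\rho_E^z$, where $E$ is the register available to an attacker who intercepts all communication. The number of bits the attacker can extract about $Z$ is precisely the accessible information $I_{\mathrm{acc}}(Z:E) = \max_{M} I(Z; M(E))$, the maximum over all measurements $M$ the attacker may apply.

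Next I would invoke Holevo's theorem: for any ensemble $\{p(z), \rho_E^z\}$ supported on a Hilbert space of dimension $d$, the accessible information is bounded by the Holevo quantity $\chi$, which satisfies $\chi \le S(\overline{\rho}_E) \le \log d$, where $\overline{\rho}_E = \sum_z p(z)\rho_E^z$ and $S$ is the von Neumann entropy. If the attacker's register $E$ consists of $Q$ qubits then $\log d = Q$, so $I_{\mathrm{acc}}(Z:E) \le Q$ independently of the prior and of the attacker's strategy. The only remaining task is to count $Q$, the total number of qubits crossing the channel over the entire execution.

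This count follows by unpacking \Cref{lem:dist_gradients_ub}. Each message is a state on $O(\log N)$ qubits; preparing a single copy of the layer-$\ell$ feature state requires routing it between the two players over $O(L)$ rounds; the reduction to shadow tomography (\Cref{thm:shadow_tomography}) with $K = P$ observables consumes $\tilde{O}((\log P)^2 \log N \log(L/\delta)/\gve^4)$ copies per layer; and there are $L$ layers. Multiplying these factors gives
\begin{equation}
	Q = \tilde{O}\!\left(L^2 (\log N)^2 (\log P)^2 \log(L/\delta)/\gve^4\right),
\end{equation}
and combining this with the Holevo bound $I_{\mathrm{acc}}(Z:E)\le Q$ yields the claimed estimate.

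I expect the main obstacle to be a soundness issue rather than a computational one: because the protocol is interactive and multi-round, an attacker can process each intercepted message adaptively and accumulate quantum side information across rounds, so it is not immediately obvious that the relevant Hilbert-space dimension is bounded by the total qubit count rather than something larger. The clean resolution I would adopt is to count each transmitted message as occupying a fresh block of qubits (the standard convention in which communication cost is the sum of message lengths over all rounds), so that the entire transcript is a single quantum system on $Q$ qubits; since Holevo bounds the accessible information by the entropy of the attacker's \emph{total} register regardless of how or when the attacker measures, the adaptivity of the interception confers no additional power. A secondary point worth verifying is that the inputs really are classical and that the states the players prepare are genuine functions of $Z$, so that the classical--quantum structure of $\rho_{ZE}$ is legitimate; this holds by construction in \Cref{def:model}.
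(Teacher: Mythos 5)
Your proposal is correct and follows essentially the same route as the paper: the paper likewise derives the corollary directly from Holevo's theorem by observing that the entire communicated transcript is a quantum state on $\tilde{O}(L^2 (\log N)^2 (\log P)^2 \log(L/\delta)/\gve^4)$ qubits (the per-round copy count from \Cref{lem:dist_gradients_ub} times the $O(L^2)$ rounds), so that no measurement can extract more classical bits than that. Your additional care about the classical--quantum structure of the transcript and the irrelevance of adaptive interception is a sound elaboration of what the paper leaves implicit.
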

	
	This follows directly from Holevo's theorem \cite{Holevo1973-kj}, since the multiple copies exchanged in each round of the protocol can be thought of as a quantum state over $\tilde{O}((\log N)^2 (\log P)^2 \log(L/\delta)/ \gve ^4)$ qubits. As noted in \cite{Aaronson2018-in}, this does not contradict the fact that the protocol allows one to estimate all $P$ elements of the gradient, since if one were to place some distribution over the inputs, the induced distribution over the gradient elements will generally exhibit strong correlations. An analogous result holds for the inference problem described in \Cref{lem:inference_ub}.
	
	It is also interesting to ask how much information either Bob or Alice can extract about the inputs of the other player by running the protocol. If this amount is logarithmic as well, it provides additional privacy to both the model owner and the data owner. It allows two actors who do not necessarily trust each other, or the channel through which they communicate, to cooperate in jointly training a distributed model or using one for inference while only exposing a vanishing fraction of the information they hold. 
	
	It is also worth mentioning that data privacy is also guaranteed in a scenario where the user holding the data also specifies the processing done on the data. In this setting, Alice holds both data $x$ and a full description of the unitaries she wishes to apply to her state. She can send Bob a classical description of these unitaries, and as long as the data and features are communicated in the form of quantum states, only a logarithmic amount of information can be extracted about them. In this setting there is of course no advantage in communication complexity, since the classical description of the unitary will scale like $\poly(N,P)$.

	\section{Some Open Questions} \label{sec:open_questions}
	
	\subsection{Expressivity}
	
	Circuits that interleave parameterized unitaries with unitaries that encode features of input data are also used in Quantum Signal Processing \cite{Low2017-rd, Martyn2021-aa}, where the data-dependent unitaries are time evolution operators with respect to some Hamiltonian of interest. The original QSP algorithm involved a single parameterized rotation at each layer, and it is also known that extending the parameter space from $U(1)$ to $SU(2)$ by including an additional rotation improves the complexity of the algorithm and improves its expressivity \cite{Motlagh2023-vf}. In both cases however the expressive power (in terms of the degree of the polynomial of the singular values that can be expressed) grows only linearly with the number of interleaved unitaries. Given the natural connection to the distributed learning problems considered here, it is interesting to understand the expressive power of such circuits with more powerful multi-qubit parameterized unitaries.
	
	We present a method of applying a single nonlinearity to a distributed circuit using the results of \cite{Rattew2023-hz}. Since this algorithm requires a state-preparation unitary as input and produces a state with a nonlinearity applied to the amplitudes, it is natural to ask whether it can be applied recursively to produce a state with the output of a deep network with nonlinearties encoded in its amplitudes. This will require extending the results of \cite{Rattew2023-hz} to handle noisy state-preparation unitaries, yet the effect of errors on compositions of block encodings \cite{Chakraborty2018-kq, Gilyen2018-xs}, upon which these results are based, is relatively well understood. It is also worth noting that these approaches rely on the approximation of nonlinear functions by polynomials, and so it may also be useful to take inspiration directly from classical neural network polynomial activations, which in some settings are known to outperform other types of nonlinearities \cite{Michaeli2023-df}. 
	
	\subsection{Optimization}
	
	The results of \Cref{sec:conv_rates} rely on sublinear convergence rates for general stochastic optimization of convex functions (\Cref{lem:convex_convergence_rates}). It is known however that using additional structure, stochastic gradients can be used to obtain linear convergence (meaning that the error decays exponentially with the number of iterations). This is achievable when subsampling is the source of stochasticity \cite{Le_Roux2012-ov}, or with occasional access to noiseless gradients in order to implement a variance reduction scheme \cite{Johnson2013-ob, Moritz2016-kz, Gower2016-tl}, neither of which seem applicable to the setting at hand. It is an interesting open question to ascertain whether there is a way to exploit the structure of quantum circuits to obtain linear convergence rates using novel algorithms. Aside from advantages in time complexity, this could imply an exponential advantage in communication for a more general class of circuits. 
	
	Conversely, it is also known that given only black-box access to a noisy gradient oracle, an information-theoretic lower bound of $\Omega(1/T)$ on the error holds given $T$ oracle queries, precluding linear convergence without additional structure, even for strongly convex objectives \cite{Agarwal2010-vv}. \cite{Harrow2021-kl} provide a similar lower bound for their algorithm, at least for a restricted class of circuits. Perhaps these results be used to show optimality of algorithms that rely on the standard variational circuit optimization paradigm that involves making quantum measurements at every iteration and using these to update the parameters. This might imply that linear convergence is only possible if the entire optimization process is performed coherently. 
	
	In this context, we note that the treatment of gradient estimation at every layer and every iteration as an independent shadow tomography problem is likely highly suboptimal, since no use is made of the correlations across iterations between the states and the observables of interest. While in \Cref{sec:fine_tuning_st} this is not the case, that algorithm applies only to fine-tuning of a single layer. Is there a way to re-use information between iterations to reduce the resource requirements of gradient descent using shadow tomography? One approach could be warm-starting the classical resource states by reusing them between iterations. Improvements along these lines might find applications for other problems as well. 
	
	\subsection{Exponential advantage under weaker assumptions}
	
	The lower bound in \Cref{lem:dist_classical_lb} applies to circuits that contain general unitaries, and thus have depth $\poly(N)$ when compiled using a reasonable gate set. One can ask whether the lower bound can be strengthened to apply to more restricted classes of unitaries as well, and in particular $\log$-depth unitaries. While it is known that exponential communication advantages require the circuits to have $\poly(N)$ gate complexity overall~\cite{Abbas2023-ym}, this does not rule out the possibility of computational separations resulting from the clever encoding and transmission of states nor does it rule out communication advantages resulting from very short time preparations from $\log$-depth protocols.  The rapid growth of complexity of random circuits composed from local gates with depth suggests that this might be possible \cite{Brown2017-xv}. This is particularly interesting since \Cref{alg:DPCD} requires only a single measurement per iteration and may thus be suitable for implementation on near-term devices whose coherence times restrict them to implementing shallow circuits. It has also been recently shown that an exponential quantum advantage in communication holds for a problem which is essentially equivalent to estimating the loss of a circuit of the form \Cref{def:model} with $L=2$, under a weaker model of quantum communication than the standard one we consider \cite{Arunachalam2023-sd}. This is the one-clean-qubit model, in which the initial state $\ket{\psi(x)}$ consists of a single qubit in a pure state, while all other qubits are in a maximally mixed state. 

\section{Experiments additional details} \label{sec:exp-appendix}
\subsection{Node classification training}
We use the same training regime for all datasets using the recommended hyperparameters in DGL \cite{wang2019dgl} examples, reported in \Cref{tab:node-classification}.

We trained each model 10 times for all three datasets using a single NVIDIA RTX A6000, taking approximately 15 minutes per execution.

\begin{table}
  \caption{Hyper parameters of node classification training}
  \label{tab:node-classification}
  \centering
  \begin{tabular}{lc}
    \toprule
    Hyperparameter & Value \\
    \midrule
    Hidden dimension & 512 \\
    SIGN hops & 5 \\
    Learning rate & 0.001 \\
    Input dropout & 0.3 \\
    Hidden dropout & 0.4 \\
    Weight decay & 0.0 \\
    \bottomrule
  \end{tabular}
\end{table}

\subsection{Graph classification training}
As, to the best of our knowledge, we are the first to use a SIGN variant on graph classification tasks, we conducted a comprehensive hyperparameter tuning for the model structure (including the number of message passing operators, the hidden dimension, and normalization after the hidden layer) and optimization settings. 
The tuning was performed using Bayesian hyperparameter optimization to identify the optimal values for each dataset. 
This process involved varying the hidden dimension, the number of SIGN hops per operation, the learning rate, and dropout rates. 
The values considered for each hyperparameter are detailed in \Cref{tab:graph-classification-tuning}.
The full results of these experiments are in \Cref{tab:graph-classification-full-table}.

We scan each task for approximately 150 runs, using a single NVIDIA RTX A6000.

\begin{table}
  \caption{Hyper parameters of node classification training}
  \label{tab:graph-classification-tuning}
  \centering
  \begin{tabular}{lc}
    \toprule
    Hyperparameter & Values \\
    \midrule
    Hidden dimension & 8,12,16,32,64,96,128,148,256 \\
    SIGN hops (per operation) & [0-10] \\
    Learning rate & 0.001, 0.003, 0.005, 0.01, 0.03, 0.05, 0.1 \\
    Input dropout & 0.0 \\
    Hidden dropout & 0.0, 0.1, 0.2, 0.3, 0.4, 0.5 \\
    Weight decay & 0.0, 1e-8, 1e-7, 1e-6, 1e-5, 1e-4 \\
    Batch size & 32, 64, 128, 256, 512, 1024 \\ 
    Normalization layer & BatchNorm, LayerNorm, none
    \\
    \bottomrule
  \end{tabular}
\end{table}

\begin{table}[ht]
  \caption{  Graph Classification Test Accuracy. Our model achieves comparable results to GIN and other known models on most datasets. }
  \label{tab:graph-classification-full-table}
  \centering
  \resizebox{1.0\linewidth}{!}{
  \begin{tabular}{lccccccccc}
    \toprule
    & \multicolumn{9}{c}{Dataset}                                            \\
    \cmidrule(r){2-10}
Model                             & MUTAG         & PTC          & NCI1         & PROTEINS    & COLLAB         & IMDB-B       & IMDB-M        &REDDIT-B     & REDDIT-M  \\
\midrule
GIN  \cite{xu2019powerful}        &89.40$\pm$5.60&64.60$\pm$7.0 &82.17$\pm$1.7 &76.2 $\pm$2.8 &80.2 $\pm$1.90 &75.1 $\pm$5.1 &52.3 $\pm$2.8  &92.4 $\pm$2.5 & 57.5$\pm$1.5 \\
DropGIN\cite{papp2021dropgnn}     &90.4 $\pm$7.0 &66.3 $\pm$8.6 & -            &76.3 $\pm$6.1 &-              &75.7 $\pm$4.2 &51.4 $\pm$2.8  &-             & -          \\
DGCNN\cite{zhang2018end}          &85.8 $\pm$1.7 &58.6 $\pm$2.5 & -            &75.5 $\pm$0.9 &-              &70.0 $\pm$0.9 &47.8 $\pm$0.9  &-            & -           \\
U2GNN \cite{nguyen2022universal}  &89.97$\pm$3.65&69.63$\pm$3.60&    -         &78.53$\pm$4.07&77.84$\pm$1.48 &77.04$\pm$3.45&53.60$\pm$3.53 &  -            &   -         \\     
HGP-SL\cite{zhang2019hierarchical}&-             &-            &78.45$\pm$0.77&84.91$\pm$1.62&-              & -            & -             & -            & -            \\
WKPI\cite{zhao2019learning}       &88.30$\pm$2.6 &68.10$\pm$2.4 &87.5 $\pm$0.5 &78.5$\pm$0.4  &-              &75.1 $\pm$1.1 &49.5 $\pm$ 0.4 & -            & 59.5 $\pm$ 0.6 \\
\midrule
SIGN (ours)                       &92.02$\pm$6.45&68.0 $\pm$8.17&77.25$\pm$1.42&76.55$\pm$5.10& 81.82$\pm$1.42& 76 $\pm$2.49  & 53.13$\pm$3.01&78.95$\pm$2.72& 54.09$\pm$1.76\\
    \bottomrule
  \end{tabular}
  }
\end{table}

\subsection{Empirical bounds}\label{sec:empirical-bound-appendix}
We measure $\left\Vert W_1 \right\Vert$ and $\left\Vert W_2 \right\Vert_{\infty}$ of the trained graph classification models in \Cref{sec:decision}, corresponding to \Cref{eq:quadratic_gnn} and report the average results over 10 runs in \Cref{tab:empirical-bounds} (note that we use $P=I$ so that no pooling matrix is present, and in any case the pooling window will typically be a small constant).
$W_1$ is constructed as a block diagonal matrix of the weights of the SIGN hidden layer.
$\left\Vert W_2 \right\Vert_{\infty}$ is the infinity norm of the weight matrix of the output layer of SIGN, multiplied by 2 (since we compute differnces between numbers of nodes in two classes).

\begin{table}
  \caption{Weight norms of the graph classification models. We measure the norms of the final decision problem models, averaging the values over 8 runs.}
  \label{tab:empirical-bounds}
  \centering
  \begin{tabular}{lccc}
    \toprule
         & \multicolumn{3}{c}{Dataset}  \\
    \cmidrule(r){2-4}
     Value & ogbn-products & Reddit & Cora  \\
    \midrule
    $\left\Vert W_1 \right\Vert$           & 3.46 $\pm$ 0.27 & 1.32 $\pm$ 0.12  & 8.5 $\pm$ 8.0 \\
    $\left\Vert W_2 \right\Vert_{\infty}$  & 0.13 $\pm$ 0.02  & 0.04  $\pm$ 0.00 & 0.1 $\pm$ 0.07  \\
    \midrule 
    \#nodes & 2,449,029 & 232,965 & 2708 \\
    \bottomrule
  \end{tabular}
  
\end{table}

We measure the score difference of the graph classification task in \Cref{sec:decision} and compare them to the differences of the class sizes in \Cref{fig:decision-pairwise}.
Most of the differences are significant (larger than $1/\poly(N)$ where $N$ is the number of nodes; see \cref{fig:decision-pairwise}(c)). 
Some class pairs have low differences making them indistinguishable, however, \cref{fig:decision-pairwise}(a),(b) indicate those are typically classes with similar number of nodes. This provides evidence that when there is a considerable class imbalance (i.e. one that scales with system size), the magnitude of the model output when computing this difference will not decay with the size of the graph. 

While evidence of asymptotic scaling will require experiments on graphs of different sizes, our results suggest that the upper bound in \Cref{lem:QGNI_quantum_ub} is not large for models trained on standard benchmarks, implying that they can be efficiently simulated on a quantum computer.

 \begin{figure}
     \centering
     \begin{subfigure}[b]{0.32\textwidth}
         \centering
         \includegraphics[width=\textwidth]{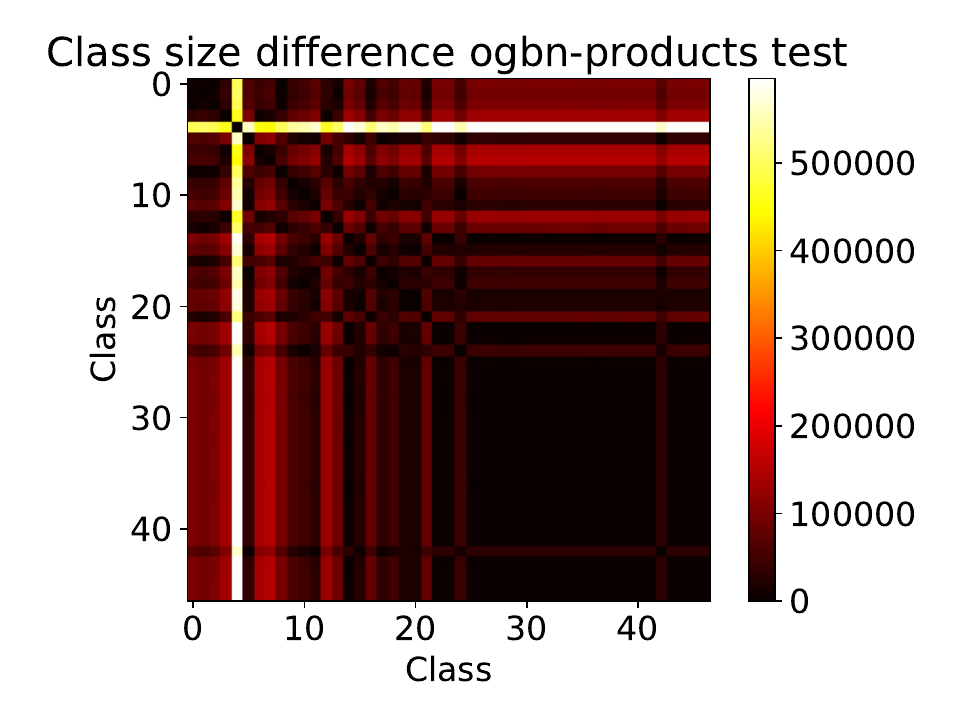}
         \caption{}
         \label{fig:three sin x}
     \end{subfigure}
     \hfill
     \begin{subfigure}[b]{0.32\textwidth}
         \centering
         \includegraphics[width=\textwidth]{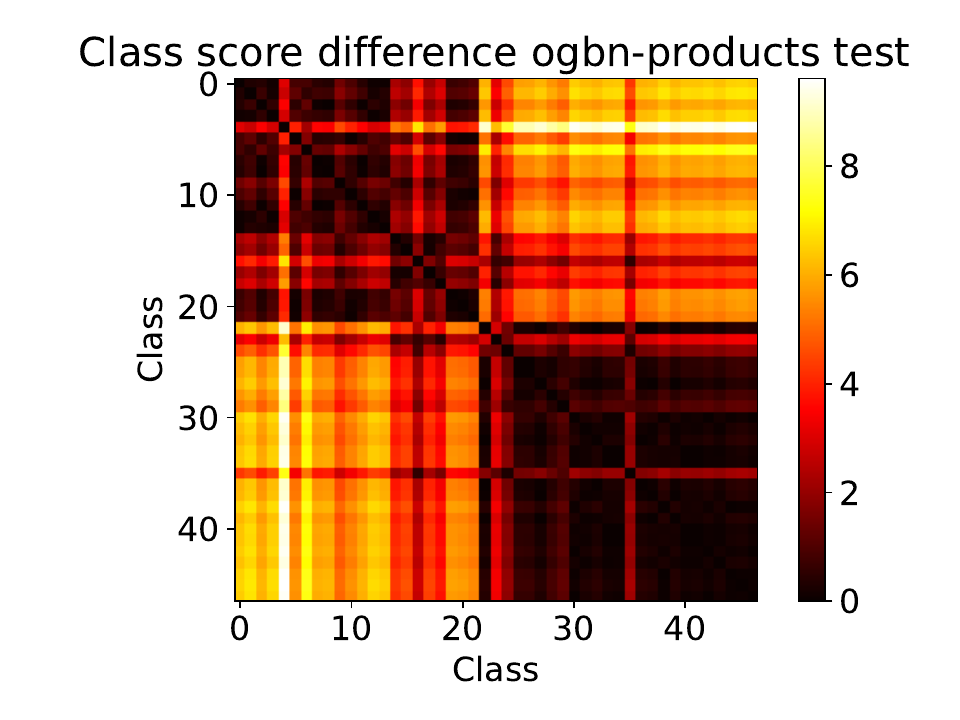}
         \caption{}
         \label{fig:y equals x}
     \end{subfigure}
     \hfill
     \begin{subfigure}[b]{0.34\textwidth}
         \centering
         \includegraphics[width=\textwidth]{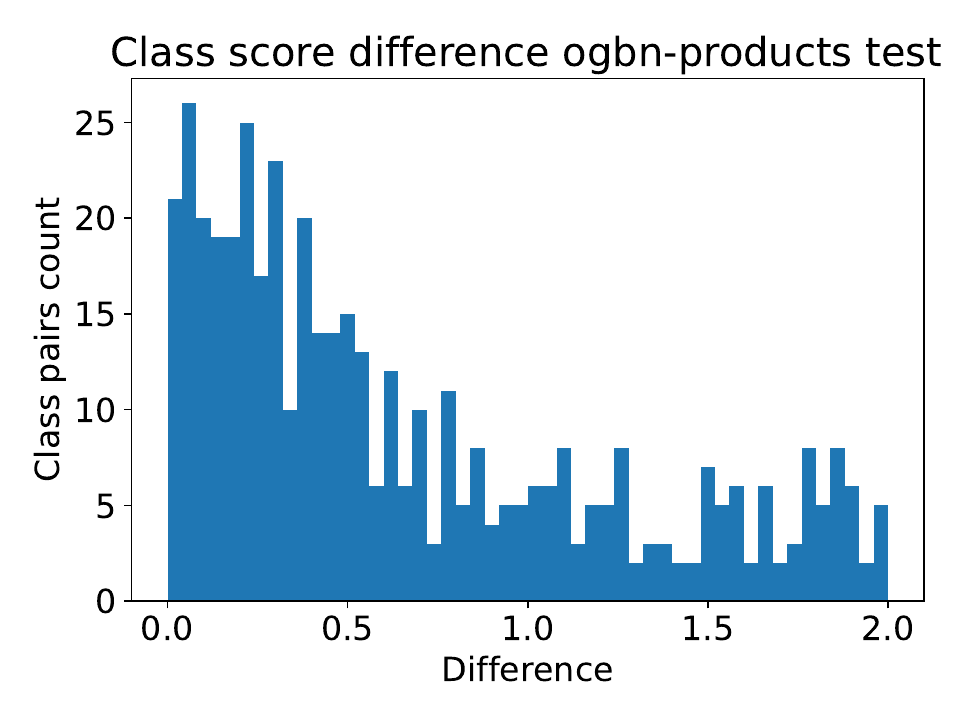}
         \caption{}
         \label{fig:three sin x}
     \end{subfigure}
     
     \caption{(a): Difference between class sizes in ogbn-products test set. (b) Difference between the graph classification model class scores. The score differences are correlated to the class size differences. (c) Histogram of the class pairs differences. Most of the differences are significantly larger than $1/N$.}
\label{fig:decision-pairwise}

\end{figure}


\end{document}